\def\ps@pprintTitle{%
 \let\@oddhead\@empty
 \let\@evenhead\@empty
 \def\@oddfoot{}%
 \let\@evenfoot\@oddfoot}
\numberwithin{equation}{section}
\newtheorem{Theorem}{Theorem}
\newtheorem{Lemma}{Lemma}
\newtheorem{Proposition}{Proposition}
\newtheorem{Corollary}{Corollary}
\newtheorem{Remark}{Remark}
\newtheorem{Assumption}{Assumption}
\newtheorem{Definition}{Definition}
\newenvironment{proof}[1][Proof]{\textbf{#1.} }{\ \rule{0.5em}{0.5em}}
\newcommand{\EE}{{\mathbb E}}
\newcommand{\NN}{{\mathbb N}}
\newcommand{\RR}{{\mathbb R}}
\newcommand{\PP}{{\mathbb P}}
\newcommand{\LL}{{\mathbb L}}
\newcommand{\ind}{\mathds{1}}
\newcommand{\As}{{\mathcal A}}
\newcommand{\Ms}{{\mathcal M}}
\newcommand{\Ft}{{\mathcal F}}
\newcommand{\FF}{{\mathfrak F}}
\newcommand{\Gt}{{\mathcal G}}
\newcommand{\GG}{{\mathfrak G}}
\newcommand{\ones}{\mathbf{1}}
\newcommand{\vB}{{\boldsymbol B}}
\newcommand{\vX}{{\boldsymbol X}}
\newcommand{\vx}{{\boldsymbol x}}
\newcommand{\vW}{{\boldsymbol W}}
\newcommand{\vM}{{\boldsymbol M}}
\newcommand{\vp}{{\boldsymbol p}}
\newcommand{\vP}{{\boldsymbol{P}}}
\newcommand{\vxi}{{\boldsymbol \xi}}
\newcommand{\vpi}{{\boldsymbol \pi}}
\newcommand{\vphi}{{\boldsymbol \phi}}
\newcommand{\vPhi}{{\boldsymbol \Phi}}
\newcommand{\veta}{{\boldsymbol \eta}}
\newcommand{\vrho}{{\boldsymbol \rho}}
\newcommand{\vzeta}{{\boldsymbol \zeta}}
\newcommand{\valpha}{{\boldsymbol \alpha}}
\newcommand{\vgamma}{{\boldsymbol \gamma}}
\newcommand{\vlambda}{{\boldsymbol{\lambda}}}
\newcommand{\vMHat}{\widehat{\vM}}
\newcommand{\MHat}{\widehat{M}}
\newcommand{\vWtilde}{\widetilde{\vW}}
\newcommand{\gammaHat}{\widehat{\gamma}}
\newcommand{\valphaHat}{\widehat{\boldsymbol \alpha}}
\newcommand{\vgammaHat}{\widehat{\boldsymbol \gamma}}
\newcommand{\mI}{{\mathbf{I}}}
\newcommand{\mG}{{\mathbf{G}}}
\newcommand{\mQ}{{\mathbf{Q}}}
\newcommand{\mB}{{\mathbf{B}}}
\newcommand{\mSigma}{\mathbf \Sigma}
\newcommand{\mOmega}{\mathbf \Omega}
\newcommand{\mA}{{\mathbf A}}
\newcommand{\mZ}{{\mathbf Z}}
\newcommand{\mSigmaHat}{\widehat{\mathbf \Sigma}}
\newcommand{\mx}{{\mathbf{x}}}
\newcommand{\mfN}{{\mathfrak{N}}}
\newcommand{\mfM}{{\mathfrak{M}}}
\newcommand{\mDim}[3] { \underset{\color{red} \footnotesize #2 \times #3}{#1} }
\newcommand\redunderbrace[2]{{\color{red} \underbrace{\color{black} #1}_{\tiny \text{#2}} }}
\newcolumntype{L}[1]{>{\raggedright\let\newline\\\arraybackslash\hspace{0pt}}m{#1}}
\newcolumntype{C}[1]{>{\centering\let\newline\\\arraybackslash\hspace{0pt}}m{#1}}
\newcolumntype{R}[1]{>{\raggedleft\let\newline\\\arraybackslash\hspace{0pt}}m{#1}}
\begin{document}


\begin{frontmatter}

\setlength{\parskip}{0em}

\title{\textbf{Active and Passive Portfolio Management with Latent Factors}
\\[1em]}
\tnotetext[t1]{The authors would like to thank NSERC for partially funding this work.  }

\author{Ali Al-Aradi}
\ead{ali.al.aradi@utoronto.ca}

\author{Sebastian Jaimungal}
\ead{sebastian.jaimungal@utoronto.ca}
\address{Department of Statistical Sciences, University of Toronto}

\begin{abstract}
We address a portfolio selection problem that combines active (outperformance) and passive (tracking) objectives using techniques from convex analysis. We assume a general semimartingale market model where the assets' growth rate processes are driven by a latent factor. Using techniques from convex analysis we obtain a closed-form solution for the optimal portfolio and provide a theorem establishing its uniqueness. The motivation for incorporating latent factors is to achieve improved growth rate estimation, an otherwise notoriously difficult task. To this end, we focus on a model where growth rates are driven by an unobservable Markov chain. The solution in this case requires a filtering step to obtain posterior probabilities for the state of the Markov chain from asset price information, which are subsequently used to find the optimal allocation. We show the optimal strategy is the posterior average of the optimal strategies the investor would have held in each state assuming the Markov chain remains in that state. Finally, we implement a number of historical backtests to demonstrate the performance of the optimal portfolio. \\
\end{abstract}

\begin{keyword}
Active portfolio management;
Convex analysis;
Stochastic Portfolio Theory;
Functionally generated portfolios;
Rank-based models;	
Growth optimal portfolio;
Hidden Markov models;
Partial information.
\end{keyword}

\end{frontmatter}

\section{Introduction}




Problems in portfolio management can be divided into two types: active and passive. In the former, investors aim to achieve superior portfolio returns; in the latter, the investors' goal is to track a preselected index; see, for example, \cite{buckley1998} or \cite{pliska2004}. One can further separate active portfolio management objectives into two types: absolute and relative. There is a great deal of literature dedicated to solving various portfolio selection problems with absolute goals via stochastic control theory. The seminal work of \cite{Merton69} introduced the dynamic asset allocation and consumption problem, utilizing stochastic control techniques to derive optimal investment and consumption policies. Extensions can be found in \cite{Merton71}, \cite{Magill76}, \cite{Davis90}, \cite{Browne97} and more recently \cite{BlanchetScaillet2008}, \cite{Liu2013} and \cite{Ang2014}  to name a few. The focus in these papers is generally on maximizing the utility of discounted consumption and terminal wealth or minimizing shortfall probability, or other related \textit{absolute} performance measures that are independent of any external benchmark or relative goal. Works on optimal active portfolio management with relative goals (i.e. attempting to outperform a given benchmark) can be found in \cite{Browne1999a}, \cite{Browne1999b} and \cite{Browne2000}, \cite{pham2003large} and, more recently, \cite{oderda2015}.

There are also several works that address the question of achieving absolute portfolio selection goals when parameters are stochastic, including cases where the investor only has access to partial information and must rely on Bayesian learning or filtering techniques to solve for their optimal allocation. \cite{Merton71} solves for the portfolio that maximizes expected terminal wealth assuming that the instantaneous expected rate of return follows a mean-reverting diffusive process. \cite{Lakner98} extends this to the case where the drift processes are unobservable. In \cite{rieder2005} the assets' drift switches between various quantities according to an unobservable Markov chain; \cite{frey2012} extends this to incorporate expert opinions, in the form of signals at random discrete times, into the filtering problem by using this observable information to obtain posterior probabilities for the state of the Markov chain. \cite{bauerle2007} introduces jumps to the asset price dynamics by including Poisson random measures with unobservable intensity processes. Latent models are also central to the work of \cite{casgrain2016} and \cite{casgrain2018} in the context of algorithmic trading and mean field games.


Many of the concepts discussed in this work, particularly the notion of functionally generated portfolios (FGPs) and rank-based models, are key concepts in Stochastic Portfolio Theory (SPT) (see \cite{fernholz2002} and \cite{karatzas2009} for a thorough overview). SPT is a flexible framework for analyzing portfolio behavior and market structure which takes a descriptive, rather than a normative, approach to addressing these issues, and emphasizes the use of observable quantities to make its predictions and conclusions. The appeal of SPT partially lies in the fact that it relies on a minimal set of assumptions that are readily satisfied in real equity markets and that the techniques it employs construct relative arbitrage portfolios that outperform the market almost surely without the need for parameter estimation. This is primarily done through the machinery of portfolio generating functions (PGFs), which are portfolio maps that give rise to investing strategies that depend only on prevailing market weights. A discussion of the relative arbitrage properties of FGPs and related approaches to achieving outperformance vis-\`{a}-vis the market portfolio can be found in \cite{pal2013energy}, \cite{wong2015optimization} and \cite{pal2016geometry}.

Although SPT focuses on almost sure outperformance, i.e. relative arbitrage with respect to the benchmark portfolio, we deviate from this criterion in favor of maximizing the expected growth rate differential. We present two justifications for this choice. First, certain rank-based models such as the first-order models admit equivalent martingale measures over all horizons implying the non-existence of relative arbitrage opportunities. This forces the investor to select an alternative performance criterion. Secondly, \cite{fernholz2002} argues for the use of functionally-generated portfolios such as diversity-weighted portfolios as benchmarks for active equity portfolio management given their passive, rule-based nature and ease of implementation. However, \cite{wong2015optimization} notes that under certain reasonable conditions relative arbitrage opportunities do not exist with respect to these portfolios. Therefore, once again the investor must seek a substitute for almost sure outperformance if they decide to have a performance benchmark of this sort. One SPT-inspired work that uses an expectation-based objective function is \cite{vervuurt2016}, in which machine learning techniques are utilized to achieve outperformance in expectation by maximizing the investor's Sharpe ratio.

Active managers often dynamically invest in markets with the goal of achieving optimal relative returns against a \textit{performance benchmark} while anchoring their portfolio to a \textit{tracking benchmark} (in the sense of incurring minimal active risk/tracking error). They also often have in mind additional constraints on the investor's portfolio, e.g. penalizing large positions in certain assets or excessive volatility in the investor's wealth. In \cite{alaradi2018}, the authors formulate these goals and constraints by posing a portfolio optimization problem with log-utility of relative wealth, together with running penalty terms that incorporate the investor's constraints on tracking a benchmark and total risk. They solve the problem in closed-form using dynamic programming under the assumptions that the benchmarks are differentiable maps that are Markovian in the asset values; this encompasses the market portfolio and, more broadly, the class of (time-dependent) functionally generated portfolios.

A shortcoming of \cite{alaradi2018} is that when the investor values outperformance, the optimal solution relies crucially on the asset growth rate estimates, which are assumed to be bounded, differentiable, deterministic functions of time. However, returns are notoriously difficult to estimate robustly and the deterministic assumption does not provide adequate estimates.  To address this shortcoming, here, we allow for growth rates to be stochastic and be driven by latent factors. This is essential to making the strategy robust to differing market environments. Our formulation also accommodates rank-based models; such models exploit the stability of capital distribution in the market to arrive at estimates of asset growth rates based on asset ranks.

Our modeling assumption is similar to that adopted in \cite{casgrain2018}, who study the mean-field version of an algorithmic trading problem, where assets are driven by two components: a drift term and a martingale component both of which are adapted to an unobservable filtration. The investor's strategy, on the other hand, is restricted to be adapted to a smaller filtration; namely, the natural filtration generated by the price process.

The approach we take to solve the stochastic control problem is based on techniques from convex analysis as in \cite{bank2017hedging} and \cite{casgrain2018}, however these techniques date as far back as \cite{cvitanic1992convex}. The reason we deviate from the dynamic programming approach taken in \cite{alaradi2018} centers around the difficulty of extending that approach to more general market models. Although possible, it would be a  difficult task to ensure that all the additional state variables (which would include various semimartingale local times in the case of rank-based models) satisfy the conditions for a Feynman-Kac representation to the HJB equation that arises from the control problem, which is a central aspect of the proof. A number of additional (possibly restrictive) assumptions would have to be made on the market model and, as such, the approach we adopt in the current work allows for a more succinct solution to a more general problem with fewer assumptions. \\





\section{Model Setup} \label{sec:modelSetup}

\subsection{Market Model}

We adopt a market model that generalizes the one in \cite{alaradi2018} and is a multidimensional version of the one used in \cite{casgrain2018}. Let $(\Omega, \Gt, \GG, \PP)$ be a filtered probability space, where $\GG = \{\Gt_t\}_{t \geq 0}$ is the natural filtration generated by all processes in the model. We  assume that the market consists of $n$ assets defined as follows:
\begin{Definition}
The \textbf{stock price process} for asset $i$, $X^i = \left( X^i_t \right)_{t \geq 0}$ for all $i\in\mfN \coloneqq \{1,\dots,n\}$, is a positive semimartingale satisfying:
\begin{equation}
X^i_t = X^i_0 \exp \left( \int_0^t \gamma^i_s~ds + M^i_t \right)
\end{equation}
where $\gamma^i = \left( \gamma^i_t \right)_{t \geq 0}$ is a $\GG$-adapted process representing the asset's \textbf{(total) growth rate} and $M^i = \left( M^i_t \right)_{t \geq 0}$ is a $\GG$-adapted martingale with $M^i_0 = 0$ representing the asset's \textbf{noise component}.
\end{Definition}

It is convenient to work with the logarithmic representation of asset dynamics:
\begin{Proposition}
The logarithm of prices, $\log X^i$, satisfies the stochastic differential equation:
\begin{equation} \label{eqn:logPrice1D}
d \log X^i_t = \gamma^i_t ~dt + dM^i_t\,,\qquad \forall\;i\in\mfN\,.
\end{equation}
This can also be expressed in vector notation as follows:
\begin{equation} \label{eqn:logPrice}
d \log \vX_t = \vgamma_t ~dt + d\vM_t\,,
\end{equation}
where
{\small
\begin{equation*}
\mDim{\log \vX_t}{n}{1} = \left(\log X^1_t, ... , \log X^n_t \right)^\intercal\,,
\qquad
\mDim{\vgamma_t}{n}{1} = \left(\gamma^1_t, ... , \gamma^n_t \right)^\intercal\,,
\qquad
\mDim{\vM_t}{n}{1} = \left(M^1_t, ... , M^n_t\right)^\intercal\,.
\end{equation*}
}
\end{Proposition}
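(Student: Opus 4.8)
The plan is to read the result directly off the defining exponential representation of the stock price process. Since $X^i$ is by definition a \emph{positive} semimartingale, its logarithm is well-defined pathwise, so I would first take logarithms on both sides of the identity
\[
X^i_t = X^i_0 \exp\left(\int_0^t \gamma^i_s\,ds + M^i_t\right)
\]
to arrive at the integrated form $\log X^i_t = \log X^i_0 + \int_0^t \gamma^i_s\,ds + M^i_t$.

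Next I would pass to differential form by differentiating this expression term by term. The quantity $\log X^i_0$ is a time-constant initial value, so its differential vanishes; the absolutely continuous drift $\int_0^t \gamma^i_s\,ds$ contributes $\gamma^i_t\,dt$; and the martingale term contributes $dM^i_t$ unchanged. Collecting these yields $d\log X^i_t = \gamma^i_t\,dt + dM^i_t$ for every $i\in\mfN$, which is the claimed scalar SDE \eqref{eqn:logPrice1D}. The vector form \eqref{eqn:logPrice} then follows immediately by stacking the $n$ scalar equations and invoking the componentwise definitions of $\log\vX_t$, $\vgamma_t$, and $\vM_t$ as column vectors.

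I do not anticipate any genuine obstacle, since the statement is essentially the defining representation rewritten in differential form; the only point requiring a word of justification is that the logarithm is well-defined, which is guaranteed by the positivity built into the definition of $X^i$. The one subtlety worth flagging is that \emph{no} It\^o correction term appears, precisely because the model is specified directly in terms of the log-price (the exponent) rather than in terms of the dynamics of $X^i$ itself. Had we instead started from an SDE for $X^i$ and applied It\^o's formula, a quadratic-variation term would enter; here that contribution is already absorbed into the martingale $M^i$ and the growth rate $\gamma^i$, so the passage from integrated to differential form is a purely elementary one.
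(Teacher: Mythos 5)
Your proposal is correct and matches the paper's (implicit) treatment: the Proposition is stated as an immediate consequence of the defining exponential representation, obtained exactly as you describe by taking logarithms and passing to differential form, with no It\^o correction because the model is specified directly on the log-price. Your remark on why no quadratic-variation term appears is a helpful clarification but introduces nothing beyond what the paper intends.
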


We make the following assumption on the growth rate and noise component of asset prices:

\begin{Assumption} \label{asmp:growthMtg}
The growth rate and martingale noise processes satisfy one of the two following conditions:
\begin{enumerate}[label=(\alph*),labelsep=0.5cm]
\item $\vgamma$, $\vM \in \LL^2$;
\item $\vgamma \in \LL^{\infty,M}$ and $\vM \in \LL^1$,
\end{enumerate}
\begin{align*}
\text{where } \qquad \LL^p &= \left\{ f: \Omega \times [0,T] \rightarrow \RR^n \text{ s.t. }  \EE \left[ \int_{0}^{T} \left(\| f_t \|_p\right)^p ~dt \right] < \infty \right\}\,, \qquad 0<p<\infty
\\
\LL^{\infty,M} &= \left\{ f: \Omega \times [0,T] \rightarrow \RR^n \text{ s.t. }  \sup_{t\in[0,T]}\| f_t \|_\infty \leq M, \; \PP-a.s. \right\}\,.
\end{align*}
\end{Assumption}
In the assumption above $\| \vx \|_p \coloneqq \left(\sum_{i=1}^n |x_i|^p \right)^{1/p}$ and $\| \vx \|_\infty \coloneqq \underset{i \in \mfN}{\max} \hspace{0.1cm} |x_i|$ for $\vx \in \RR^n$  denote the $p$-norm and $\infty$-norm on $\RR^n$, respectively. Furthermore, we will make use of the shorthand notation $\| \vx \| \coloneqq \| \vx \|_2$ to denote the usual Euclidean norm. 

We also assume the quadratic co-variation processes associated with the noise component satisfy
\begin{Assumption} \label{asmp:posDef}
Let $\mSigma$ be the matrix whose $ij$-th element is the \textbf{quadratic covariation} process between $M_i$ and $M_j$, $\Sigma_{ij} \coloneqq \langle M_i, M_j \rangle_t$. We assume that, for each $\mx \in \RR^n$, there exists $\varepsilon > 0$ and $C < \infty$    such that
\begin{align}
\varepsilon \| \mx \|^2 ~\leq~ \mx^\intercal \mSigma_t \mx  ~\leq~ C \| \mathbf{x} \|^2\,, \qquad
\forall t\ge0.
\end{align}
\end{Assumption}
This is an extension of the usual \textbf{non-degeneracy} and \textbf{bounded variance} conditions.

\begin{Remark}
The constant $M$ in $\LL^{\infty,M}$ of Assumption \ref{asmp:growthMtg} may depend on the constants $\varepsilon$ and $C$ that appear in Assumption \ref{asmp:posDef}, but provided that $M$ is sufficiently large we can ensure that the candidate optimal solution that we obtain is in fact in the set of admissible controls.
\end{Remark}

\subsection{Portfolios and Observable Information}

The investor does not have access to the latent processes driving asset prices and observes asset prices alone (it is possible to allow other processes in addition to the price process, but here we restrict to this case). Let the filtration $\FF = \left\{ \Ft_t \right\} _{t \geq 0}$ where $\Ft_t = \sigma \left( \left\{ \vX_s \right\}_{s\in[0,t]} \right)$ denotes the investor's filtration.

\begin{Definition}
A \textbf{portfolio} is a measurable, $\FF$-adapted, vector-valued process $\vpi = ( \vpi_t )_{t \geq 0}$, where $\vpi_t = \left(\pi^1_t, ..., \pi^n_t \right)^\intercal$ such that for all $t \geq 0$, $\vpi_t$ satisfies:
\begin{equation}
\pi^1_t + \cdots + \pi^n_t = 1 \quad \mbox{a.s.}
\end{equation}
Furthermore, we define the \textbf{set of admissible portfolios} as follows:
\begin{enumerate}[label=(\alph*),labelsep=0.5cm]
\item if Assumption \ref{asmp:growthMtg}(a) is enforced, we assume $\vpi \in \LL^2$ and define :
\begin{equation}
\As^2 = \big\{ \vpi: \Omega \times [0,T] \rightarrow \RR^n \text{ s.t. } \vpi \in \LL^2, ~  \FF\text{-adapted} \text{ and } \vpi_t^\intercal\ones = 1, \text{ for } t \geq 0 ~~\PP\text{-a.s.} \big\}
\end{equation}
\item if Assumption \ref{asmp:growthMtg}(b) is enforced, we assume $\vpi\in\LL^{\infty,M}$ and define:
\begin{equation}
\As^\infty = \big\{ \vpi: \Omega \times [0,T] \rightarrow \RR^n \text{ s.t. } \vpi\in\LL^{\infty,M},~  \FF\text{-adapted} \text{ and } \vpi_t^\intercal\ones = 1, \text{ for } t \geq 0 ~~\PP\text{-a.s.} \big\}
\end{equation}
\end{enumerate}
\end{Definition}
In the sequel, we write $\As^c$ to denote either $\As^2$ or $\As^\infty$ depending on which part of Assumption \ref{asmp:growthMtg} is being enforced.


\begin{Remark}
The cost of allowing for $\LL^1$ noise processes is that both growth rate processes and admissible portfolios are $\LL^{\infty,M}$ rather than $\LL^2$ processes.
\end{Remark}

In the definition above, portfolios are adapted to the filtration $\FF\subseteq\GG$, which is the information set generated by the asset price paths and not the full information set $\GG$. The latter includes the noise component $\vM$ as well as its quadratic covariation process $\mSigma$, both assumed unobservable. This ensures that strategies depend only on fully observable quantities which in our context are limited to the asset price processes.

Given the model dynamics, and portfolio assumptions, we next derive the dynamics of wealth associated with an arbitrary portfolio $\vpi$:
\begin{Proposition} \label{prop:portfolioDynamics}
The logarithm of the \textbf{portfolio value process} $Z^\vpi = (Z^\vpi_t )_{t \geq 0}$ satisfies the SDE:
\begin{equation} \label{eqn:portSDE}
d \log Z^\vpi_t = \gamma^\vpi_t ~dt + \vpi_t^\intercal ~ d\vM_t\,,
\end{equation}
\begin{align*}
where \qquad \gamma^\vpi_t &= \vpi_t^\intercal \vgamma_t + \Gamma^{\vpi}_t\,,
&& \Gamma^{\vpi}_t  = \tfrac{1}{2} \left[ \vpi_t^\intercal \mbox{diag}(\mSigma_t) - \vpi_t^\intercal \mSigma_t \vpi_t  \right]\,,
\end{align*}
and $\gamma^\vpi$ and $\Gamma^\vpi$ are the \textbf{portfolio growth rate} and \textbf{excess growth rate} processes, respectively. 
\end{Proposition}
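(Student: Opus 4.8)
The plan is to read off the dynamics of $Z^\vpi$ from the self-financing relation and then apply It\^o's formula twice: once to pass from each asset's log-price to its simple return, and once to pass from the portfolio value $Z^\vpi$ back to its logarithm. The two It\^o corrections generated by these steps will combine to produce exactly the excess growth rate $\Gamma^\vpi$, while the martingale term $\vpi_t^\intercal d\vM_t$ is carried through unchanged.

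First I would write the self-financing wealth equation $dZ^\vpi_t / Z^\vpi_t = \sum_{i\in\mfN} \pi^i_t\, dX^i_t / X^i_t$ (if this relation is instead taken as the definition of $Z^\vpi$, this step merely unwinds notation). Treating the $M^i$ as continuous --- consistent with the use of the bracket $\langle\cdot,\cdot\rangle$ and with the clean form of $\Gamma^\vpi$ --- I would then apply It\^o's formula to $X^i_t = X^i_0\exp\!\big(\int_0^t\gamma^i_s\,ds + M^i_t\big)$. Since the finite-variation part contributes no quadratic variation, this yields
\begin{equation*}
\frac{dX^i_t}{X^i_t} = \gamma^i_t\,dt + dM^i_t + \tfrac12\, d\langle M^i\rangle_t = \Big(\gamma^i_t + \tfrac12\,\Sigma_{ii,t}\Big)dt + dM^i_t,
\end{equation*}
using the identification $d\langle M_i, M_j\rangle_t = \Sigma_{ij,t}\,dt$; that $\mSigma_t$ is the instantaneous covariation rate rather than the cumulative bracket is forced by the two-sided bound in Assumption \ref{asmp:posDef}, since an unbounded increasing process could not satisfy the uniform upper estimate.

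Summing against the weights $\vpi_t$ gives
\begin{equation*}
\frac{dZ^\vpi_t}{Z^\vpi_t} = \Big(\vpi_t^\intercal\vgamma_t + \tfrac12\,\vpi_t^\intercal\,\mbox{diag}(\mSigma_t)\Big)dt + \vpi_t^\intercal\,d\vM_t,
\end{equation*}
so $Z^\vpi$ is a continuous semimartingale whose martingale part is $\int \vpi_s^\intercal\,d\vM_s$, with $d\langle Z^\vpi\rangle_t / (Z^\vpi_t)^2 = \vpi_t^\intercal\mSigma_t\vpi_t\,dt$ (the drift is finite variation and contributes nothing to the bracket). Applying It\^o's formula a second time to $\log Z^\vpi_t$ gives $d\log Z^\vpi_t = dZ^\vpi_t/Z^\vpi_t - \tfrac12\,\vpi_t^\intercal\mSigma_t\vpi_t\,dt$, and collecting the drift terms produces
\begin{equation*}
d\log Z^\vpi_t = \Big(\vpi_t^\intercal\vgamma_t + \tfrac12\big[\vpi_t^\intercal\,\mbox{diag}(\mSigma_t) - \vpi_t^\intercal\mSigma_t\vpi_t\big]\Big)dt + \vpi_t^\intercal\,d\vM_t = \gamma^\vpi_t\,dt + \vpi_t^\intercal\,d\vM_t,
\end{equation*}
which is the claim.

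I do not anticipate a serious obstacle, since the result amounts to two bookkeeping applications of It\^o's formula. The only points requiring care are (i) confirming that $\int\vpi^\intercal\,d\vM$ and the associated drift integrals are well-defined, which follows from admissibility $\vpi\in\As^c$ together with Assumptions \ref{asmp:growthMtg} and \ref{asmp:posDef} (the upper variance bound controls $\vpi_t^\intercal\mSigma_t\vpi_t$ and hence the quadratic variation of the martingale term); and (ii) the interpretation of $\mSigma_t$ as a rate. The subtler conceptual point is continuity of $\vM$: if the noise component were allowed to jump, the compounding correction $\tfrac12\,d\langle M^i\rangle_t$ would be replaced by terms of the form $e^{\Delta M^i_t}-1-\Delta M^i_t$ and the stated closed form for $\Gamma^\vpi$ would no longer be exact. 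Thus continuity of the noise component is implicitly what makes the displayed excess-growth expression hold with equality.
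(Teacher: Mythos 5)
Your proposal is correct and follows essentially the same route as the paper's proof: the self-financing relation $dZ^\vpi_t/Z^\vpi_t=\sum_i\pi^i_t\,dX^i_t/X^i_t$, one application of It\^o's formula to convert log-price dynamics into simple returns, and a second to pass from $Z^\vpi$ to $\log Z^\vpi$, with the bracket $\langle\log Z^\vpi\rangle$ identified as $\int\vpi^\intercal\mSigma\vpi$. Your side remarks on $\mSigma_t$ being an instantaneous rate and on continuity of $\vM$ being what makes the closed form for $\Gamma^\vpi$ exact are accurate clarifications of points the paper leaves implicit, but they do not change the argument.
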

\begin{proof}
The proof follows the same steps as the proof of Proposition 1.1.5. of \cite{fernholz2002}. The proportional change in the value of portfolio $\vpi$ is a weighted average of the simple return of each asset held in the portfolio:
	\[ \frac{dZ^\vpi_t}{Z^\vpi_t} = \sum_{i=1}^{n} \pi^i_t ~ \frac{d X^i_t}{X^i_t}\,. \]
	From the asset dynamics in \eqref{eqn:logPrice1D} and an application of It\^{o}'s lemma we have:
	\[ \frac{d X^i_t}{X^i_t} = \left(\gamma^i_t + \tfrac{1}{2} \langle M^i \rangle_t \right) dt + dM^i_t\,,  \]
	where $\langle M^i \rangle_t = \langle M^i, M^i \rangle_t $ is the quadratic variation process of $\log X^i$. Another application of It\^{o}'s lemma on the portfolio wealth process dynamics gives
	\[  \frac{dZ^\vpi_t}{Z^\vpi_t} = d \log Z^\vpi_t + \tfrac{1}{2} ~ d \langle  \log Z^\vpi \rangle_t\,. \]
	The result follows by noting that the quadratic variation $\langle  \log Z^\vpi \rangle_t$ is given by:
	\begin{align*}
		\langle  \log Z^\vpi \rangle_t = \sum_{i,j=1}^{n} \pi^i_t \pi^j_t  \left\langle \log X^i, \log X^j \right\rangle_t\		
		= \vpi_t^\intercal \mSigma_t \vpi_t
 	\end{align*}
 	and then rearranging terms.
\end{proof}

\subsection{Market Model Examples}

\cite{alaradi2018} assume growth rates and volatilities are bounded, differentiable, deterministic functions and the only driver of asset prices was a multidimensional Wiener process. In this section we present two market models satisfying the assumptions in this paper that allow for more general asset growth rates. The two models are presented with the goal of improved growth rate estimation in mind.

\subsubsection{Diffusion-Switching Growth Rate Process}

The \textbf{diffusion-switching model} assumes that asset growth rates switch between a number of possible diffusion processes according to an underlying Markov chain. That is:
\begin{equation}
\vgamma_t = \vgamma^{(\Theta_t)}_t\,,
\end{equation}
where $\Theta = (\Theta_t)_{t \geq 0}$ is a continuous-time Markov chain with state space $\mfM \coloneqq \{1,...,m\}$ and $\vgamma^{(i)}_t$ is the growth rate diffusion process associated with state $i\in\mfM$ given as the solution to the SDE:
\begin{equation}
d\mDim{\vphantom{\vW^\vgamma_t} \vgamma^{(i)}_t}{n}{1} = \mDim{\vphantom{\vW^\vgamma_t} \vphi}{n}{1}(t,\vgamma_t,i) ~dt + \mDim{\vphantom{\vW^\vgamma_t} \vPhi}{n}{k}(t,\vgamma_t,i) ~d\mDim{\vW^\vgamma_t}{k}{1}\,.
\end{equation}
In this formulation, $\vW^\vgamma$ is a $k$-dimensional Wiener process driving the growth rate diffusions and $\vphi$ and $\vPhi$ are the drift and volatility functions of the growth rate. We require $\vphi$ and $\vPhi$ to be chosen so that $\vgamma^{(i)} \in \LL^2$ for all $i$. A sufficient set of conditions for this are the usual Lipschitz and polynomial growth conditions that guarantee the existence of a unique, square-integrable strong solution to the SDE (see Theorem 2.9 in Chapter 5 of \cite{karatzas2012}). Figure \ref{fig:diffusionSwitchingPlot}  shows a simulation of this process when the possible diffusions are Ornstein-Uhlenbeck (OU) processes.

In Section \ref{sec:HMM}, we take both $\vphi$ and $\vPhi$ to be identically zero. This recovers the hidden Markov model (HMM) used in \cite{rieder2005},
where the growth rate switches between a number of possible constants rather than diffusion processes. This simplifies the calibration process and this is the model we employ in the implementation.

\begin{figure}[h!]
	\centering
	\includegraphics[width=0.45\textwidth]{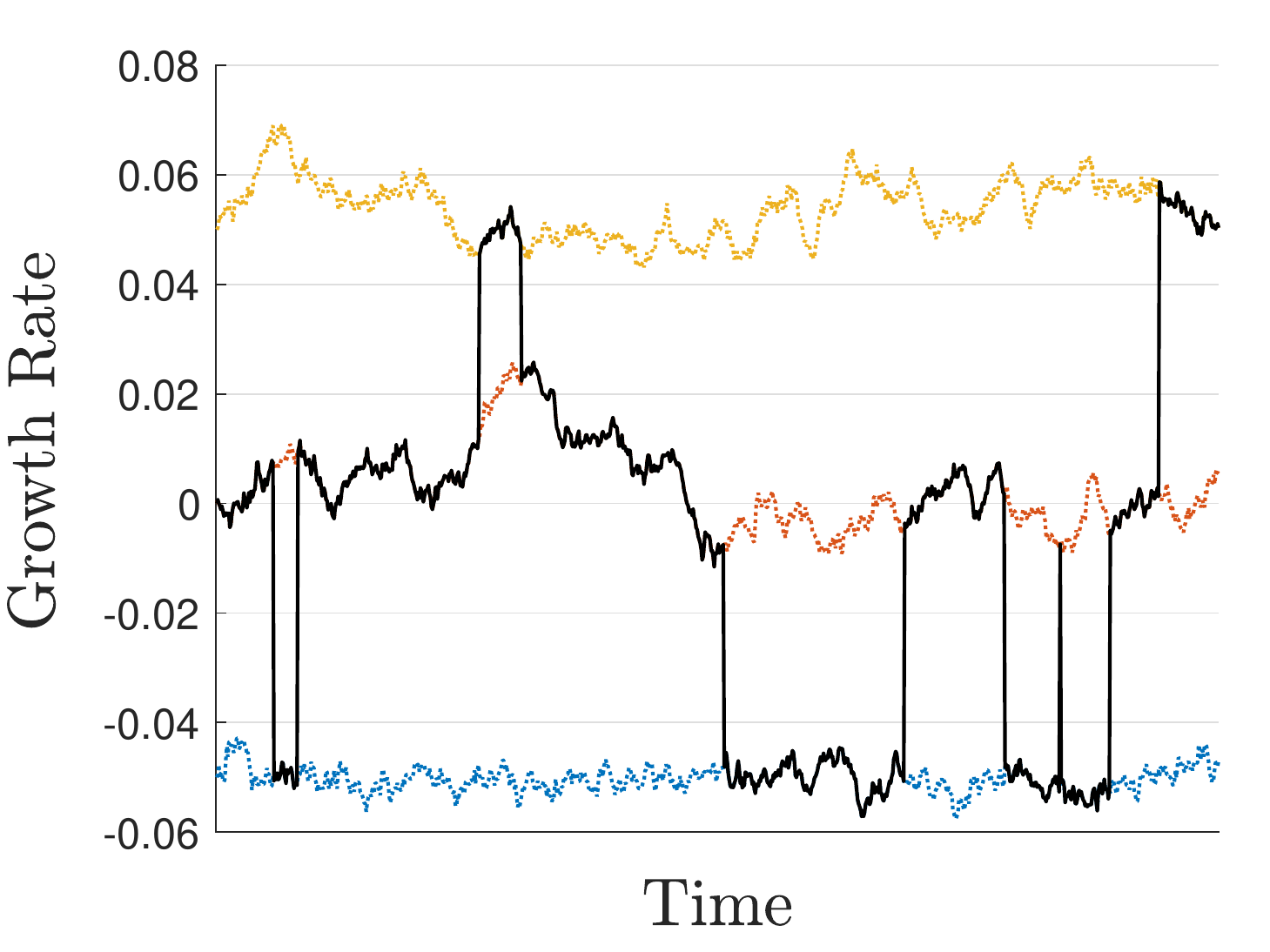}
	\captionsetup{width=.9\linewidth}
	\caption{Example of diffusion-switching process with $m=3$ states. The dotted colored lines represent the three possible diffusion processes the growth rate can follow corresponding to the three Markov chain states. The solid black line shows the growth rate path which jumps whenever a transition in the underlying Markov chain occurs.
}
	\label{fig:diffusionSwitchingPlot}
\end{figure}

\subsubsection{Second-Order Rank-Based Model}

An alternative model that may be considered is the \textbf{second-order rank-based model} of equity markets as described in \cite{fernholz2013}. In this model, an asset's price dynamics depend on the rank of the asset's market weights; typically, smaller assets have higher growth rates and volatilities than larger assets. The goal of this modeling approach is to better capture observed long-term characteristics of capital distribution in equity markets, such as average rank occupation times, by exploiting the inherent stability in the capital distribution curve.

Let $r^i_t$ be the rank of asset $i$ at time $t$, the asset price is assumed to satisfy the SDE:
\begin{equation}
d \log X^i_t = \left(\gamma^i + \sum_{j=1}^{n} g^j \ind_{\{r^i_t = j\}}\right) ~dt + \sum_{j=1}^{n} \sigma^j \ind_{\{r^i_t = j\}} ~dW^j_t
\end{equation}
That is, $\gamma^i$ is the ``name''-based growth rate of asset $i$ and $g^j$ is the additional growth an asset experiences when its capitalization occupies rank $j$. Similarly, $\sigma^j$ is the volatility of the asset in rank $j$. We assume the model parameters satisfy the requirements for the market to form an asymptotically stable system; see \cite{fernholz2013}, which also provides an outline for parameter estimation for this class of models.

It is important to notice that when this model is assumed, the rank processes for each of the stocks must be incorporated in the optimization problem as state variables. This can vastly complicate the proof of optimality when using a dynamic programming approach. The approach we take in the present work does not suffer from these issues involving local times and non-differentiability. Finally, we note that it is possible to create a hybrid model that is rank-dependent and driven by an unobservable Markov chain, but this may lead to difficulties in the parameter estimation.

\section{Stochastic Control Problem} \label{sec:control}

\subsection{Description}

The stochastic control problem we consider is similar to the one posed in \cite{alaradi2018}. The investor fixes two portfolios against which they measure their outperformance and their active risk, respectively. That is, the investor chooses a \textbf{performance benchmark} $\vrho$, which they wish to outperform, and a \textbf{tracking benchmark} $\veta$, which they penalize deviations from. The objective is to determine the portfolio process $\vpi$ that maximizes the expected growth rate differential relative to $\vrho$ over the investment horizon $T$. Moreover, the investor is penalized for taking on excessive levels of active risk (measured against $\veta$). An additional penalty independent of the two benchmarks is also included to control absolute risk (as measured by quadratic variation of wealth) or penalize allocation to certain assets as discussed in Section 4 of \cite{alaradi2018}.

The main state variable in our optimization problem is the logarithm of the ratio of the wealth of an arbitrary portfolio $\vpi$ relative to a preselected performance benchmark $\vrho$. Let $Y^{\vpi,\vrho}_t = \log \left(\frac {Z^\vpi_t}{Z^\vrho_t} \right)$ denote the logarithm of relative portfolio wealth for the portfolios $\vpi$ and $\vrho$. Then this process satisfies the SDE:
\begin{equation}\label{prop:Y_SDE}
dY^{\vpi,\vrho}_t = \left(\gamma^\vpi_t - \gamma^\vrho_t \right) dt + \left( \vpi _t - \vrho_t \right)^\intercal d\vM_t\;,
\end{equation}
which in turn implies
\begin{equation} \label{eqn:relWealth}
Y^{\vpi,\vrho}_t = Y^{\vpi,\vrho}_0 + \int_0^T \left(\gamma^\vpi_t - \gamma^\vrho_t \right) dt + \int_0^T \left( \vpi _t - \vrho_t \right)^\intercal d\vM_t\;.
\end{equation}

Our main \textbf{stochastic control problem} is to find the optimal portfolio $\vpi^*$ which, if the supremum is attained in the set of admissible strategies, achieves
\begin{equation} \label{eqn:control}
\underset{\vpi \in \As^c}{\sup} ~ H(\vpi)
\end{equation}
where $H(\vpi)$ is the \textbf{performance criteria} of a portfolio $\vpi$ given by:
{
\small
\begin{align} \label{eqn:perfCrit}
H(\vpi) = \EE \left[ \zeta^0 ~ Y^{\vpi,\vrho}_t - \tfrac{1}{2} \int_0^T \zeta^1_s ~ (\vpi_s - \veta_s)^\intercal \mOmega_s (\vpi_s - \veta_s) ~ ds  - \tfrac{1}{2}  \int_0^T \zeta^2_s ~ \vpi_s^\intercal \mQ_s \vpi_s ~ds \right]\;.
\end{align}
}
Here, $\zeta^0$ is a constant and $\vzeta = \left( \vzeta_t\right)_{t \geq 0}$ with $\vzeta_t = \left( \zeta^0, \zeta^1_t, \zeta^2_t \right)$ is an $\FF$-adapted process defined on $[0,\overline{\zeta}]^3$ for some fixed $\overline{\zeta} < \infty$. The vector $\vzeta_t$ represents the subjective preference parameters set by the investors to reflect their emphasis on three goals:

\begin{enumerate}
	\item The first term is a terminal reward term which corresponds to the investor wishing to \textbf{maximize the expected growth rate differential between their portfolio and the performance benchmark $\vrho$}. It is also equivalent to \textbf{maximizing the expected utility of relative wealth assuming a log-utility function}.
	\item The second term is a running penalty term which \textbf{penalizes deviations from the tracking benchmark}. When $\mOmega_t = \mSigma_t$, the investor is penalizing risk-weighted deviations from the tracking benchmark, with deviations in riskier assets being penalized more heavily. Thus, this can be seen as the investor aiming to \textbf{minimize tracking error/active risk}.
	\item The final term is a \textbf{general quadratic running penalty term} that does not involve either benchmark. One possible choice for $\mQ_t$ is the covariance matrix $\mSigma_t$, which can be adopted to minimize the \textit{absolute} risk of the portfolio measured in terms of the quadratic variation of the portfolio wealth process, $Z^\vpi_t$. Another option is to let $\mQ_t$ be a constant diagonal matrix, which has the effect of penalizing allocation in each asset according to the magnitude of the corresponding diagonal entry. The investor can use this choice of $\mQ$ as a way of imposing a set of ``soft'' constraints on allocation to each asset.
\end{enumerate}

The reader is referred to \cite{alaradi2018} for further interpretation of these terms.

\begin{Remark}
The two preference parameters $\zeta_{t}^{1,2}$ can be stochastic; e.g., they may depend on the investor's wealth level or other factors. Furthermore, the preference parameters are restricted to $[0,\overline{\zeta}]^3$ for two reasons: firstly, it simplifies the proof of optimality; secondly, from \cite{alaradi2018}, the results are driven by the relative weights, rather than absolute weights, therefore restriction to the cube results in no loss of generality.
\end{Remark}

\begin{Remark}
The benchmarks may be non-Markovian; if they are Markovian and can be represented as $\vrho_t=\rho(t,\vX_t)$ and $\veta_t=\eta(t,\vX_t)$, the functions $\rho$ and $\eta$ are not restricted to be differentiable. This allows for a much wider class of benchmarks including rank-based portfolios and portfolios constructed using additional information not related to asset prices, e.g., factor portfolios based on company fundamentals. Benchmarks from the class of functionally generated portfolios are allowed, including the market portfolio, as well as portfolios generated by rank-dependent portfolio generating functions, such as large-cap portfolios.
\end{Remark}

We also require the following assumption on the relative and absolute penalty matrices $\mOmega$ and $\mQ$:
\begin{Assumption}
The penalty matrices $\mOmega$ and $\mQ$ are $\FF$-adapted matrix-valued stochastic processes such that, for each $\mathbf{x} \in \RR^n$, there exists constants $\varepsilon > 0$ and $C < \infty$ satisfying
\begin{align}
\varepsilon \| \mx \|^2 ~\leq~ \mx^\intercal\, \mOmega_t\, \mx ~\leq~ C \| \mx \|,~
\qquad
\text{and}
\qquad
\varepsilon \| \mx \|^2 ~\leq~ \mx^\intercal\, \mQ_t \,\mx   ~\leq~ C \| \mx \|\,,
\qquad \forall t\ge0.
\end{align}
\end{Assumption}

These bounds play an analogous role to the nondegeneracy and bounded variance assumptions made on the quadratic covariation $\mSigma$, and ensure that the candidate optimal control we derive later is in fact admissible.

Allowing for stochastic penalty matrices is useful as it opens the door for stochastic volatility models in the case of $\mOmega$ (when choosing $\mOmega = \mSigma$) and stochastic transaction costs in the case of $\mQ$.

We next rewrite the control problem in terms of running reward/penalty terms. When either of the conditions in Assumption \ref{asmp:growthMtg} is enforced, the expected value of the last integral in \eqref{eqn:relWealth} is zero as the stochastic integral is in fact a martingale. Further, assuming that $Z^\vpi_0 = Z^\vrho_0$, the performance criteria becomes
\begin{align} \label{eqn:perfCrit2} \small
H(\vpi) = \EE \left[ \left\{\int_0^T \zeta^0 \left( \gamma^\vpi_t - \gamma^\vrho_t \right) - \tfrac{1}{2} \zeta^1_t ~ (\vpi_t - \veta_t)^\intercal \mOmega_t (\vpi_t - \veta_t) - \tfrac{1}{2} \zeta^2_t ~ \vpi_t^\intercal \mQ_t \vpi_t \right\} dt \right]\,.
\end{align}

The generalizations achieved thus far compared to \cite{alaradi2018} are summarized in Table \ref{tab:generalizations} below.
\begin{table}[h] \footnotesize \centering
\renewcommand{\arraystretch}{0.1}
	\begin{tabular}{ C{3cm} C{6cm} C{6cm}}
		\toprule[1.5pt]
		\newline & \textbf{Dynamic Programming} & \textbf{Convex Analysis}
		\\
		\midrule
		\newline
		\bf{Growth Rates}
		\newline
		&
		Bounded, deterministic, \newline differentiable growth rates  	
		& {Stochastic, unobservable growth rates (possibly rank-dependent)}
		\\
		\newline
		\bf{Noise component} \vspace{0.1cm}
		\newline
		&
		Deterministic, differentiable \newline volatility  with Brownian noise \vspace{0.1cm}
		& {$\LL^2$ (or even $\LL^1$) martingale noise \newline (possibly with stochastic volatility)}		\vspace{0.1cm}
		\\
		\newline
		\bf{Benchmarks} \vspace{0.3cm}
		\newline
		&
		Benchmarks are Markovian in $\vX$, \newline differentiable maps \vspace{0.3cm}
		& {Benchmarks are $\FF$-adapted}	\vspace{0.3cm}
		\\
		\newline
		\bf{Penalty matrices}
		\newline
		&
		Deterministic penalty \newline weighting matrices		
		& {Stochastic penalty \newline weighting matrices}		
		\\
		\newline
		\bf{Preference parameters}
		\newline
		&
		Constant subjective \newline preference parameters
		& {Stochastic subjective preference parameters (e.g. wealth-dependent)}
		\\
		\bottomrule[1.5pt]					
	\end{tabular}
	\captionsetup{width=.8\linewidth}
	\caption{Summary of generalizations achieved using the convex analysis approach over the dynamic programming approach.}
	\label{tab:generalizations}
\end{table}
\subsection{Projection}

To solve the the control problem \eqref{eqn:control} we follow the arguments in  \cite{casgrain2018}. The first step is to project the asset price dynamics onto the observable filtration $\mathfrak{F}$, which enables us to rewrite the performance criteria \eqref{eqn:perfCrit2} in terms of observable processes only. To this end, we first define the conditional expectation process $\vgammaHat = \left( \vgammaHat_t \right)_{t \geq 0}$ so that
\[  \vgammaHat_t \coloneqq \EE \left[ \vgamma_t ~\middle|~ \Ft_t \right]\,. \]
This process represents the investor's best estimate of the assets' growth rates given all asset price information up to a given point in time. Similarly, we define a process $\mSigmaHat =  ( \mSigmaHat_t )_{t \geq 0}$ corresponding to the projection of the unobservable quadratic covariation onto the same filtration, so that
\[ \mSigmaHat_t \coloneqq \EE \left[ \mSigma_t ~\middle|~ \Ft_t \right]\,. \]

\begin{Proposition} \label{prop:projection}
Regardless of which part of Assumption \ref{asmp:growthMtg} is enforced, the estimated growth rate process $\vgammaHat$ is $\FF$-adapted with $\vgammaHat \in \LL^2$. Furthermore, the estimated quadratic covariation process $\mSigmaHat$ is $\FF$-adapted and satisfies:
	\[
	\varepsilon \| \mathbf{x} \|^2 ~\leq~ \mathbf{x}^\intercal \mSigmaHat_t \mathbf{x}  ~\leq~ C \| \mathbf{x} \|^2
	\]
	for some $\varepsilon > 0$ and $C < \infty$ for all $\mathbf{x} \in \RR^n$ and $t \geq 0$.
\end{Proposition}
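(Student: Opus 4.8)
The plan is to handle the two assertions in turn, and within the growth-rate claim to split according to which part of Assumption \ref{asmp:growthMtg} is active. In every case the statements reduce to elementary properties of conditional expectation — measurability, the conditional Jensen inequality, the tower property, and monotonicity — so the real content is bookkeeping rather than any delicate estimate. Throughout I would fix, once and for all, a jointly measurable version of each projected process (see the last paragraph) so that the $\LL^2$ membership and the pointwise-in-$t$ inequalities are well-posed.

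For $\vgammaHat$, adaptedness is immediate: by construction $\vgammaHat_t=\EE[\vgamma_t\mid\Ft_t]$ is $\Ft_t$-measurable for each $t$, hence $\FF$-adapted. For the $\LL^2$ bound under Assumption \ref{asmp:growthMtg}(a) I would apply the conditional Jensen inequality to the convex map $\vx\mapsto\|\vx\|^2$, obtaining $\|\vgammaHat_t\|^2\le\EE[\|\vgamma_t\|^2\mid\Ft_t]$ pointwise, then integrate in $t$ and take expectations; Tonelli (all integrands nonnegative) together with the tower property collapses the right-hand side to $\EE[\int_0^T\|\vgamma_t\|^2\,dt]$, which is finite since $\vgamma\in\LL^2$. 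Under Assumption \ref{asmp:growthMtg}(b) I would argue directly: from $\|\vgamma_t\|_\infty\le M$ a.s.\ and $|\gammaHat^i_t|\le\EE[|\gamma^i_t|\mid\Ft_t]\le M$ we get $\vgammaHat\in\LL^{\infty,M}$, and on the finite horizon $[0,T]$ this embeds into $\LL^2$ via $\|\vgammaHat_t\|_2^2\le n\,\|\vgammaHat_t\|_\infty^2\le nM^2$, so $\EE[\int_0^T\|\vgammaHat_t\|_2^2\,dt]\le nM^2T<\infty$.

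For $\mSigmaHat$, adaptedness is again immediate from the definition. The two-sided bound is the main computation, and the trick is that $\mx$ is a \emph{deterministic} vector, so it passes through the conditional expectation: for fixed $\mx\in\RR^n$,
\[
\mx^\intercal\mSigmaHat_t\,\mx = \mx^\intercal\,\EE[\mSigma_t\mid\Ft_t]\,\mx = \EE\!\left[\mx^\intercal\mSigma_t\,\mx \,\middle|\, \Ft_t\right].
\]
This is legitimate because Assumption \ref{asmp:posDef} forces every entry of $\mSigma_t$ to be bounded — the diagonal entries $e_i^\intercal\mSigma_t e_i$ lie in $[\varepsilon,C]$ and the off-diagonals are controlled by Cauchy--Schwarz — so $\mSigma_t$ is integrable and the conditional expectation is finite. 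Applying the monotonicity of $\EE[\,\cdot\mid\Ft_t]$ to the a.s.\ inequalities $\varepsilon\|\mx\|^2\le\mx^\intercal\mSigma_t\mx\le C\|\mx\|^2$, and using that constants are preserved under conditioning, the same $\varepsilon$ and $C$ carry over verbatim to $\mx^\intercal\mSigmaHat_t\mx$, giving the claimed bounds for all $\mx$ and all $t$.

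The step requiring the most care is not any of the inequalities but the measurability/regularity of the projected processes: one must exhibit versions of $t\mapsto\EE[\vgamma_t\mid\Ft_t]$ and $t\mapsto\EE[\mSigma_t\mid\Ft_t]$ that are jointly measurable in $(\omega,t)$, so that the time integrals and the membership $\vgammaHat\in\LL^2$ are even meaningful. I would dispatch this by appealing to the theory of optional projections (e.g.\ Dellacherie--Meyer), which supplies such a version while keeping the pointwise-in-$t$ conditional-expectation identities used above valid. Everything else follows directly from Jensen, Tonelli/tower, and monotonicity.
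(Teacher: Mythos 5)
Your proof is correct and follows essentially the same route as the paper's: adaptedness from the definition of conditional expectation, conditional Jensen plus the tower property for the $\LL^2$ claim under Assumption \ref{asmp:growthMtg}(a), and reduction of case (b) to boundedness. Your treatment of the $\mSigmaHat$ bounds (passing the deterministic vector $\mx$ through the conditional expectation and invoking monotonicity) is actually more explicit than the paper's, which simply asserts that boundedness of the entries implies the required inequalities; your added remark on fixing a jointly measurable version of the projected processes is a reasonable refinement the paper does not address.
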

\begin{proof}
Assumption \ref{asmp:posDef} implies that all entries of $\mSigma$ are bounded (see Appendix A of \cite{alaradi2018}), and it follows that the conditional expectation $\mSigmaHat$ is well-defined, bounded and $\FF$-adapted. This also implies the required inequalities involving $\mSigmaHat$.

To prove the statements regarding $\vgammaHat$, first, under either condition of Assumption \ref{asmp:growthMtg}, we have that $\EE[|\vgamma_t|] < \infty$ for all $t$ and hence $\EE \left[ \vgamma_t ~\middle|~ \Ft_t \right]$ exists and is unique and integrable (see \cite{durrett2010probability}, Lemma 5.1.1). Furthermore, by the definition of conditional expectations, $\EE \left[ \vgamma_t ~\middle|~ \Ft_t \right]$ is $\Ft_t$-measurable for all $t \geq 0$.
	
Next, suppose Assumption \ref{asmp:growthMtg}(a) is enforced, so that $\vgamma \in \LL^2$, i.e. that $\gamma^i \in \LL^2$ for each $i$. By the same reasoning as above, this implies that $\EE \left[ (\gamma^i_t)^2 ~\middle|~ \Ft_t \right]$ exists, is unique, and integrable. By Jensen's inequality for conditional expectations (Theorem 5.1.3 of \cite{durrett2010probability})
\begin{align*}
\left(\EE \left[ \gamma^i_t ~\middle|~ \Ft_t \right]\right) ^2 ~&\leq~ \EE \left[ (\gamma^i_t)^2 ~\middle|~ \Ft_t \right]
\\
\implies ~~~  \EE \left[ \left(\EE \left[ \gamma^i_t ~\middle|~ \Ft_t \right]\right) ^2 \right] ~&\leq~  \EE \left[ \EE \left[ (\gamma^i_t)^2 ~\middle|~ \Ft_t \right] \right]
\\
\implies ~~~  \EE \left[ (\gammaHat^i_t) ^2 \right] ~&\leq~  \EE \left[ (\gamma^i_t)^2 \right]  < \infty
\end{align*}
As this is true for each $i$, it follows that $\vgammaHat \in \LL^2$.

Finally, suppose Assumption \ref{asmp:growthMtg}(b) is enforced, so that $\vgamma\in{\LL^\infty,M}$, then we have that $\vgamma \in \LL^2$, and the conclusion follows.
\end{proof}

The projection is measurable with respect to the investor's filtration, and hence may be used to construct their portfolio.
\begin{Lemma} \label{lemma:innovations}
	The innovations process, $\vMHat = \left( \vMHat_t \right)_{t \geq 0}$, defined by
	\begin{equation}
		\vMHat_t \coloneqq \log \vX_t - \int_0^t \vgammaHat_s ~ds
	\end{equation}
	is an $\FF$-adapted martingale with $\vMHat \in \LL^1$. Furthermore, the asset dynamics satisfies an SDE in terms of $\FF$-adapted processes as follows
	\begin{equation}
		d \log \vX_t = \vgammaHat_t ~ dt + d \vMHat_t\,.
	\end{equation}
\end{Lemma}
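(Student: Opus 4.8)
The plan is to verify the four claims in order of increasing difficulty---adaptedness, the SDE, the martingale property, and the integrability $\vMHat\in\LL^1$---with the martingale property being the real content. First I would dispose of the easy parts. Since $\Ft_t=\sigma(\{\vX_s\}_{s\le t})$, the term $\log\vX_t$ is $\Ft_t$-measurable by construction, and because $\vgammaHat$ is $\FF$-adapted (Proposition \ref{prop:projection}) the Lebesgue integral $\int_0^t\vgammaHat_s\,ds$ is also $\Ft_t$-measurable; hence $\vMHat_t$ is $\FF$-adapted. The stated SDE is then immediate, being nothing more than the defining relation $\vMHat_t=\log\vX_t-\int_0^t\vgammaHat_s\,ds$ rearranged into $d\log\vX_t=\vgammaHat_t\,dt+d\vMHat_t$.

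For the martingale property, I would fix $0\le s\le t\le T$ and compute $\EE[\vMHat_t-\vMHat_s\mid\Ft_s]$. Substituting the original representation $\log\vX_t-\log\vX_s=\int_s^t\vgamma_u\,du+(\vM_t-\vM_s)$ into the definition of $\vMHat$ yields
\[
\vMHat_t-\vMHat_s=(\vM_t-\vM_s)+\int_s^t(\vgamma_u-\vgammaHat_u)\,du\,.
\]
The two pieces are handled separately. For the noise piece, since $\vM$ is a $\GG$-martingale and $\Ft_s\subseteq\Gt_s$, the tower property gives $\EE[\vM_t-\vM_s\mid\Ft_s]=\EE[\EE[\vM_t-\vM_s\mid\Gt_s]\mid\Ft_s]=0$. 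For the drift piece, a conditional Fubini argument lets me exchange the time integral with $\EE[\cdot\mid\Ft_s]$, reducing matters to showing $\EE[\vgamma_u\mid\Ft_s]=\EE[\vgammaHat_u\mid\Ft_s]$ for a.e.\ $u\in[s,t]$. This is the crux, and it follows from the nesting $\Ft_s\subseteq\Ft_u$ (valid since $u\ge s$) together with the tower property, namely $\EE[\vgammaHat_u\mid\Ft_s]=\EE[\EE[\vgamma_u\mid\Ft_u]\mid\Ft_s]=\EE[\vgamma_u\mid\Ft_s]$. Thus both contributions to the conditional expectation of the increment vanish and $\vMHat$ is an $\FF$-martingale.

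Finally, for $\vMHat\in\LL^1$ I would use $\|\vMHat_t\|_1\le\|\vM_t\|_1+\int_0^t\|\vgamma_u-\vgammaHat_u\|_1\,du$, integrate over $t\in[0,T]$, and take expectations. The $\vM$ term is finite because $\vM\in\LL^1$ under either part of Assumption \ref{asmp:growthMtg} (recall $\LL^2\subseteq\LL^1$ over the finite horizon $[0,T]$), while the drift term is bounded by $\EE[\int_0^T\|\vgamma_u\|_1\,du]+\EE[\int_0^T\|\vgammaHat_u\|_1\,du]$, both finite since $\vgamma,\vgammaHat\in\LL^2\subseteq\LL^1$.

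The hard part is not any single estimate but the justification of the conditional-Fubini interchange in the drift term: one must confirm that $(\omega,u)\mapsto\vgamma_u(\omega)$, and likewise $\vgammaHat$, is jointly measurable and integrable on $\Omega\times[s,t]$ so that exchanging $\int_s^t(\cdot)\,du$ with $\EE[\,\cdot\mid\Ft_s]$ is legitimate; this is precisely where the $\LL^2$ or $\LL^{\infty,M}$ integrability of $\vgamma$ and the $\LL^2$ bound on $\vgammaHat$ from Proposition \ref{prop:projection} are needed. Once that interchange is secured, the filtration-nesting identity $\EE[\vgammaHat_u\mid\Ft_s]=\EE[\vgamma_u\mid\Ft_s]$ makes the drift contribution cancel cleanly, which is the standard mechanism by which the innovations process of nonlinear filtering becomes a martingale.
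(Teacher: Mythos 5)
Your proof is correct and follows essentially the same route as the paper's: decompose the increment of $\vMHat$ into the $\GG$-martingale increment of $\vM$ plus $\int(\vgamma_u-\vgammaHat_u)\,du$, kill the former by the tower property and the latter by a conditional Fubini argument combined with $\EE[\vgammaHat_u\mid\Ft_s]=\EE[\vgamma_u\mid\Ft_s]$. The only cosmetic difference is that the paper establishes the stronger bound $\vMHat\in\LL^2$ under Assumption~\ref{asmp:growthMtg}(a) by expanding the square, whereas your triangle-inequality argument delivers exactly the stated $\LL^1$ claim more directly.
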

\begin{proof} See Appendix \ref{proof:innovations}.
\end{proof}
 	
\subsection{Optimization via Convex Analysis}

The performance criterion \eqref{eqn:perfCrit2} may be written in terms of the projected processes, so that
\begin{equation*} \small
H(\vpi) = \EE \left[ \int_0^T \Big\{ \zeta^0 \left( \gammaHat^\vpi_t - \gammaHat^\vrho_t \right) - \tfrac{1}{2} \zeta^1_t ~ (\vpi_t - \veta_t)^\intercal \mOmega_t (\vpi_t - \veta_t) - \tfrac{1}{2} \zeta^2_t ~ \vpi_t^\intercal \mQ_t \vpi_t \Big\} ~dt \right]
\end{equation*}
where $\vgamma$ and $\mSigma$ are replaced with their conditional expectations $\vgammaHat$ and $\mSigmaHat$ and where $\gammaHat^\vpi$ is the projected growth rate of portfolio $\vpi$ defined analogously to the portfolio growth rate given in Proposition \ref{prop:portfolioDynamics}. This replacement is justified by Lemma \ref{lemma:innovations} and the fact that we have $\EE\left[ \mSigma_t \right] = \EE \left[ \EE \left[ \mSigma_t \middle| \Ft_t \right] \right] = \EE[ \mSigmaHat_t ] $ and similarly $\EE\left[ \text{diag}(\mSigma_t) \right] = \EE[ \text{diag}(\mSigmaHat_t) ] $. Following the steps in the proof of Proposition 2 of \cite{alaradi2018}, the performance criteria can be written in the following linear-quadratic form: {\small
\begin{align*}
H(\vpi) = \EE \biggl[ &  \int_0^T  \left\{- \tfrac{1}{2} \vpi_t^\intercal \mA_t \vpi_t + \vpi_t^\intercal \vB_t \right\} dt \biggr] - \EE \left[ \int_0^T \left\{ \zeta^0 \widehat{\gamma}^\vrho_t -  \tfrac{1}{2} \zeta^1_t\,\veta_t^\intercal \mOmega_t \veta_t \right\} dt \right]
\\
\text{where } \qquad \mA_t &=  \zeta^0 \mSigmaHat_t + \zeta^1_t \mOmega_t + \zeta^2_t \mQ_t \nonumber
\\
\vB_t &= \zeta^0 \left( \widehat{\vgamma}_t + \tfrac{1}{2} \mbox{diag}(\mSigmaHat_t) \right) + \zeta^1_t \mOmega_t \veta_t \nonumber
\end{align*}
}%

As the second expectation does not depend on the control, it may be omitted in the optimization. It is convenient to define the assets' instantaneous rate of return process $\valpha = \left(\valpha_t\right)_{t \geq 0}$ given by
\begin{equation}
\valpha_t \coloneqq \vgamma_t + \tfrac{1}{2} \mbox{diag}(\mSigma_t)\,,
\end{equation}
along with its projected counterpart $\valphaHat_t = \vgammaHat_t + \tfrac{1}{2} \mbox{diag}(\mSigmaHat_t)$. With this, we can write the performance criteria that we aim to optimize as
\begin{subequations}
 \label{eqn:perfCrit5}
\begin{equation}
H(\vpi) = \EE \left[ \int_0^T  \left\{- \tfrac{1}{2} \vpi_t^\intercal \mA_t \vpi_t + \vpi_t^\intercal \vB_t \right\} dt \right]\,,
\end{equation}
where
\begin{align}
\mA_t &=  \zeta^0 \mSigmaHat_t + \zeta^1_t \mOmega_t + \zeta^2_t \mQ_t
\\
\vB_t &= \zeta^0 \valphaHat_t + \zeta^1_t \mOmega_t \veta_t
\end{align}
\end{subequations}%

Next, we set our (unconstrained) search space to be
\begin{align*}
\As = \left\{ \vpi: \Omega \times [0,T] \rightarrow \RR^n, \vpi\in\LL^2,\;\FF\text{-adapted} \right\}\,.
\end{align*}
This forms a reflexive Banach space with norm $\| \vpi \|_{\LL^2} = \left( \EE \left[\int_0^T \| \vpi_t \|^2 ~dt \right]\right)^{1/2}$ - see Theorem 1.3 and Theorem 4.1 of \cite{stein2011functional}. The admissible set we wish to optimize over (the set of admissible portfolios $\As^c$) is a subset of $\As$ whether portfolios are defined to be bounded or just $\LL^2$. Let $\As^*$ be the dual space and let $\langle \cdot, \cdot \rangle$ denote the canonical bilinear pairing over $\As \times \As^*$, i.e. $\langle u, u^* \rangle = u^*(u)$. The performance criteria \eqref{eqn:perfCrit5} can be viewed as a functional that maps elements from $\As$ to a real number, i.e. $H: \As \rightarrow \RR$. The following proposition ensures that the candidate optimal control we propose later is indeed optimal.

\begin{Proposition} \label{prop:concave}
The functional $H: \As \rightarrow \RR$ given by \eqref{eqn:perfCrit5} is proper, upper semi-continuous and strictly concave.
\end{Proposition}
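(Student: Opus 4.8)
The plan is to split the functional into its quadratic and linear parts,
\begin{equation*}
H(\vpi) = -\tfrac{1}{2}\,Q(\vpi) + L(\vpi)\,, \qquad Q(\vpi) = \EE\!\left[\int_0^T \vpi_t^\intercal \mA_t \vpi_t\,dt\right]\,, \qquad L(\vpi) = \EE\!\left[\int_0^T \vpi_t^\intercal \vB_t\,dt\right]\,,
\end{equation*}
and to reduce each of the three properties to standard facts about continuous quadratic forms and bounded linear functionals on the Hilbert space $\As$. The backbone of the argument is that every coefficient matrix is uniformly bounded and uniformly positive definite, while $\vB \in \LL^2$.

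For \emph{properness} I would show $H$ is finite at every $\vpi \in \As$. Boundedness of $\mSigmaHat_t$ (Proposition \ref{prop:projection}), of $\mOmega_t$ and $\mQ_t$ (the penalty-matrix assumption), and of the preference parameters on $[0,\overline{\zeta}]^3$ yields a deterministic bound $\|\mA_t\| \le C'$, so $|Q(\vpi)| \le C'\,\|\vpi\|_{\LL^2}^2 < \infty$ by Cauchy--Schwarz. For the linear part I would verify $\vB \in \LL^2$: Proposition \ref{prop:projection} gives $\vgammaHat \in \LL^2$ and $\mSigmaHat$ bounded, hence $\valphaHat = \vgammaHat + \tfrac{1}{2}\,\mbox{diag}(\mSigmaHat) \in \LL^2$; since $\veta$ lies in the admissible class and $\mOmega$ is bounded, $\mOmega_t\veta_t \in \LL^2$; and the $\zeta$'s are bounded, so $\vB \in \LL^2$ and $|L(\vpi)| \le \|\vB\|_{\LL^2}\|\vpi\|_{\LL^2} < \infty$. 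Thus $H(\vpi) \in \RR$ for every $\vpi$; in particular $H$ never equals $+\infty$ and $H(\mathbf{0}) = 0 > -\infty$, so $H$ is proper.

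\emph{Upper semi-continuity} follows from strong continuity. The functional $L$ is a bounded linear functional by the bound above, hence continuous, and the polarization estimate $|Q(\vpi)-Q(\vpsi)| \le C'\,(\|\vpi\|_{\LL^2}+\|\vpsi\|_{\LL^2})\,\|\vpi-\vpsi\|_{\LL^2}$ shows $Q$ is locally Lipschitz, hence continuous; so $H$ is strongly continuous and, a fortiori, upper semi-continuous. I would also record that, being concave, $H$ has convex and strongly closed superlevel sets, which are therefore weakly closed, so $H$ is in fact weakly upper semi-continuous --- the form needed for the subsequent existence argument. For \emph{strict concavity}, $L$ is affine and hence neutral, so it suffices that $Q$ be strictly convex. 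Using the identity
\begin{equation*}
Q\big(\lambda \vpi + (1-\lambda)\vpsi\big) = \lambda\,Q(\vpi) + (1-\lambda)\,Q(\vpsi) - \lambda(1-\lambda)\,Q(\vpi - \vpsi)\,, \qquad \lambda \in (0,1)\,,
\end{equation*}
together with the uniform lower bound $\mx^\intercal \mA_t \mx \ge \zeta^0\varepsilon\,\|\mx\|^2$ (inherited from the lower bounds on $\mSigmaHat_t,\mOmega_t,\mQ_t$ and positivity of the leading constant $\zeta^0$), I obtain $Q(\vpi-\vpsi) \ge \zeta^0\varepsilon\,\|\vpi-\vpsi\|_{\LL^2}^2 > 0$ whenever $\vpi \neq \vpsi$, so the right-hand side is strictly below $\lambda Q(\vpi) + (1-\lambda)Q(\vpsi)$; hence $Q$ is strictly convex and $H$ is strictly concave.

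The main obstacle is the quantitative input feeding the last step: pinning down the uniform positive-definiteness constant of $\mA_t$, which is exactly what upgrades concavity to \emph{strict} concavity (and, via the resulting coercivity, later forces a unique maximizer), and confirming $\vB \in \LL^2$ --- in particular that $\valphaHat \in \LL^2$, which rests squarely on Proposition \ref{prop:projection}, and that the tracking benchmark $\veta$ is integrable enough for $\mOmega\veta \in \LL^2$. Once these bounds are in hand, finiteness, continuity, and concavity are all routine consequences of working in the Hilbert space $\As$.
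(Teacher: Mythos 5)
Your proof is correct and takes essentially the same route as the paper's: the linear part of $H$ cancels in the concavity computation, strict concavity reduces to uniform positive definiteness of $\mA_t$, and properness and upper semi-continuity follow from boundedness of the coefficient processes together with (strong) continuity of $H$ on $\As$. If anything you are more careful than the paper, which simply asserts the definiteness of $\mA_t$ without isolating the hypothesis you correctly flag --- that some preference parameter (e.g.\ the constant $\zeta^0$) must be strictly positive for the uniform lower bound on $\mx^\intercal \mA_t \mx$ --- and which omits the quantitative coercivity estimate $Q(\vpi-\widetilde{\vpi}) \geq \zeta^0\varepsilon\,\|\vpi-\widetilde{\vpi}\|_{\LL^2}^2$ that makes the concavity inequality strict precisely when $\vpi \neq \widetilde{\vpi}$.
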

\begin{proof}
See Appendix \ref{proof:concave}.
\end{proof} 

The subsequent steps are to (i) compute the G\^{a}teaux differential associated with the functional $H$, (ii) find an element in the admissible set which makes it vanish, and (iii) use the strict concavity of the objective function to conclude that this element is a global maximizer. For more further details, the reader is referred to the first two chapters of \cite{ekeland1999convex}. 
\begin{Proposition} \label{prop:gateaux}
The functional $H: \As \rightarrow \RR$ is G\^{a}teaux differentiable for all  $\widetilde{\vpi}, \vpi \in \As$ with G\^{a}teaux differential $H'(\vpi) \in \As^*$ given by
\begin{equation}
\left\langle \widetilde{\vpi} , H'\left(\vpi\right) \right\rangle = \EE \left[ \int_0^T  \widetilde{\vpi}_t^\intercal \left[ -\mA_t \vpi_t + \vB_t \right] ~dt \right]\,
\end{equation}
\end{Proposition}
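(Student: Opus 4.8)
The plan is to compute the Gâteaux differential straight from its definition as a directional derivative,
\[
\left\langle \widetilde{\vpi}, H'(\vpi) \right\rangle = \lim_{\epsilon \to 0} \frac{H(\vpi + \epsilon \widetilde{\vpi}) - H(\vpi)}{\epsilon}\,,
\]
and then to confirm that the resulting object is a bounded linear functional, hence genuinely an element of $\As^*$. First I would substitute $\vpi + \epsilon \widetilde{\vpi}$ into the linear-quadratic form \eqref{eqn:perfCrit5}, expand the quadratic piece, and use the symmetry of $\mA_t$ — which holds because $\mSigmaHat_t$, $\mOmega_t$ and $\mQ_t$ are each symmetric, so that $\vpi_t^\intercal \mA_t \widetilde{\vpi}_t = \widetilde{\vpi}_t^\intercal \mA_t \vpi_t$. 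The integrand of $H(\vpi + \epsilon \widetilde{\vpi})$ then reads
\[
-\tfrac12 \vpi_t^\intercal \mA_t \vpi_t + \vpi_t^\intercal \vB_t + \epsilon\, \widetilde{\vpi}_t^\intercal \left( -\mA_t \vpi_t + \vB_t \right) - \tfrac12 \epsilon^2\, \widetilde{\vpi}_t^\intercal \mA_t \widetilde{\vpi}_t\,,
\]
so that after subtracting $H(\vpi)$ and dividing by $\epsilon$, the difference quotient equals $\EE\!\left[\int_0^T \widetilde{\vpi}_t^\intercal(-\mA_t\vpi_t + \vB_t)\,dt\right]$ minus $\tfrac12 \epsilon\, \EE\!\left[\int_0^T \widetilde{\vpi}_t^\intercal \mA_t \widetilde{\vpi}_t\,dt\right]$.

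The crucial step is to verify that each of these integrals is finite, since this is what justifies splitting the expectation and passing to the limit. The matrix $\mA_t = \zeta^0 \mSigmaHat_t + \zeta^1_t \mOmega_t + \zeta^2_t \mQ_t$ has uniformly bounded operator norm: Proposition \ref{prop:projection} gives $\mx^\intercal \mSigmaHat_t \mx \le C\|\mx\|^2$, the penalty-matrix assumption gives the analogous upper bounds for $\mOmega_t$ and $\mQ_t$, and the preference parameters lie in $[0,\overline{\zeta}]^3$; since these matrices are symmetric and positive definite, their eigenvalues are bounded above, whence there is a finite $K$ with $\|\mA_t \mx\| \le K\|\mx\|$ uniformly in $t$ and $\omega$. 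Next, $\vB_t = \zeta^0 \valphaHat_t + \zeta^1_t \mOmega_t \veta_t \in \LL^2$, because $\valphaHat_t = \vgammaHat_t + \tfrac12 \mbox{diag}(\mSigmaHat_t) \in \LL^2$ by Proposition \ref{prop:projection} (using $\vgammaHat \in \LL^2$ and the boundedness of $\mSigmaHat_t$) and $\veta \in \LL^2$. With $\vpi, \widetilde{\vpi} \in \LL^2$, Cauchy–Schwarz then bounds $\EE\!\left[\int_0^T |\widetilde{\vpi}_t^\intercal \mA_t \vpi_t|\,dt\right] \le K \|\widetilde{\vpi}\|_{\LL^2}\|\vpi\|_{\LL^2}$, and similarly controls the $\widetilde{\vpi}^\intercal \vB$ and $\widetilde{\vpi}^\intercal \mA \widetilde{\vpi}$ terms, so all three integrals are finite.

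Once finiteness is established, the difference quotient is exactly an $\epsilon$-independent term plus $-\tfrac12 \epsilon$ times a fixed finite quantity, so letting $\epsilon \to 0$ yields the claimed expression with no dominated-convergence argument needed. Finally I would note that $\widetilde{\vpi} \mapsto \langle \widetilde{\vpi}, H'(\vpi)\rangle$ is linear and, by Cauchy–Schwarz again, satisfies $|\langle \widetilde{\vpi}, H'(\vpi)\rangle| \le \| -\mA\vpi + \vB\|_{\LL^2}\,\|\widetilde{\vpi}\|_{\LL^2}$ with $-\mA\vpi + \vB \in \LL^2$, which shows $H'(\vpi)$ is a bounded linear functional and therefore lies in $\As^*$. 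I expect the only genuine obstacle to be the integrability bookkeeping — in particular establishing $\vB \in \LL^2$, which rests on the $\LL^2$ estimate for $\vgammaHat$ from Proposition \ref{prop:projection} together with the boundedness of $\mSigmaHat$ — since the differentiation itself is a routine quadratic expansion.
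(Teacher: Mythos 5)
Your proposal is correct and follows essentially the same route as the paper: expand $H(\vpi+\epsilon\widetilde{\vpi})-H(\vpi)$ using the symmetry of $\mA_t$, divide by $\epsilon$, pass to the limit, and observe that the result is a bounded linear functional on $\As$. The only difference is that you spell out the integrability estimates (boundedness of $\mA_t$, $\vB\in\LL^2$, Cauchy--Schwarz) that the paper leaves implicit, which is a welcome addition rather than a deviation.
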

\begin{proof}
See Appendix \ref{proof:gateaux}.
\end{proof}

Finally, we present our main result which gives the form of the optimal control for the stochastic control problem \eqref{eqn:control}.
\begin{Theorem} \label{thm:optCont}
The portfolio given by
\begin{subequations}
\label{eqn:opt-pi-A-and-B}
\begin{equation} \label{eqn:optCont}
\vpi^*_t = \mA^{-1}_t ~ \left[ \frac{1 - \ones^\intercal \mA^{-1}_t  \vB_t}{\ones^\intercal \mA^{-1}_t \ones} ~ \ones + \vB_t \right]\;,
\end{equation} where
{\small
\begin{equation}
\mA_t =  \zeta^0 \mSigmaHat_t + \zeta^1_t \mOmega_t + \zeta^2_t \mQ_t\;, \quad
\text{and} \quad
\vB_t = \zeta^0 \valphaHat_t + \zeta^1_t \mOmega_t \veta_t \;.
\end{equation}
}
\end{subequations} is the unique solution to the stochastic control problem \eqref{eqn:control}.
\end{Theorem}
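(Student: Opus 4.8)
The plan is to follow the three-step convex-analytic program announced before the statement: (i) check that $\vpi^*$ is well-defined and lies in the admissible set $\As^c$; (ii) verify that $\vpi^*$ renders the G\^ateaux differential stationary in every \emph{feasible} direction; and (iii) invoke strict concavity (Proposition~\ref{prop:concave}) to upgrade stationarity to global optimality and uniqueness. Since the explicit form of the candidate is already supplied in the statement, the natural route is verification rather than derivation, while recording the variational logic that produces it.

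First I would establish that $\vpi^*$ is well-defined. By Proposition~\ref{prop:projection} and the standing assumptions, $\mSigmaHat_t$, $\mOmega_t$ and $\mQ_t$ all satisfy two-sided bounds of the form $\varepsilon\|\mx\|^2 \le \mx^\intercal(\cdot)\mx \le C\|\mx\|^2$; since the preference parameters lie in $[0,\overline{\zeta}]$ with $\zeta^0>0$, the matrix $\mA_t = \zeta^0\mSigmaHat_t + \zeta^1_t\mOmega_t + \zeta^2_t\mQ_t$ inherits uniform bounds $c_1\mI \preceq \mA_t \preceq c_2\mI$ for constants $0 < c_1 \le c_2 < \infty$. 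Hence $\mA_t$ is invertible with $\mA_t^{-1}$ uniformly bounded, and $\ones^\intercal\mA_t^{-1}\ones \ge n/c_2 > 0$, so the scalar normalising factor in \eqref{eqn:optCont} is well-defined and bounded away from zero; measurability and $\FF$-adaptedness of $\vpi^*$ follow from those of $\mA_t$ and $\vB_t$. A direct computation gives $\ones^\intercal\vpi^*_t = 1$, so the budget constraint holds. For integrability, $\vB_t = \zeta^0\valphaHat_t + \zeta^1_t\mOmega_t\veta_t$ belongs to $\LL^2$ (as $\valphaHat \in \LL^2$ by Proposition~\ref{prop:projection} and $\veta\in\As^c$), whence $\vpi^* \in \LL^2$; under Assumption~\ref{asmp:growthMtg}(b) all ingredients are bounded, so $\vpi^*\in\LL^\infty$ with a bound $M'$ depending only on $\varepsilon,C,\overline{\zeta}$ and the bound on $\veta$, and choosing $M\ge M'$ (cf.\ the Remark) places $\vpi^*\in\As^\infty$.

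Next I would compute the gradient at $\vpi^*$. Rearranging \eqref{eqn:optCont} gives $\mA_t\vpi^*_t = \vB_t + \lambda_t\ones$ with $\lambda_t = (\ones^\intercal\mA_t^{-1}\vB_t - 1)/(\ones^\intercal\mA_t^{-1}\ones)$, so that $-\mA_t\vpi^*_t + \vB_t = -\lambda_t\ones$ is pointwise proportional to $\ones$. Combined with Proposition~\ref{prop:gateaux}, for any competitor $\vpi\in\As^c$ the direction $\widetilde\vpi := \vpi - \vpi^*$ is budget-preserving, i.e.\ $\widetilde\vpi_t^\intercal\ones = 1-1 = 0$ a.e., and therefore
\[
\left\langle \vpi - \vpi^*,\, H'(\vpi^*) \right\rangle = \EE\!\left[\int_0^T (\vpi_t - \vpi^*_t)^\intercal(-\lambda_t\ones)\,dt\right] = -\EE\!\left[\int_0^T \lambda_t\,(\vpi_t - \vpi^*_t)^\intercal\ones\,dt\right] = 0.
\]

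Finally, because $H$ is concave (Proposition~\ref{prop:concave}) it lies below its supporting hyperplane at $\vpi^*$, giving $H(\vpi) \le H(\vpi^*) + \langle \vpi - \vpi^*, H'(\vpi^*)\rangle = H(\vpi^*)$ for every admissible $\vpi$; strict concavity makes the inequality strict whenever $\vpi \neq \vpi^*$, which simultaneously yields global optimality and uniqueness. I expect the main obstacle to be Step~(i)—in particular pinning down the uniform positive-definiteness and invertibility of $\mA_t$ and, under Assumption~\ref{asmp:growthMtg}(b), verifying the uniform bound that lets $\vpi^*$ sit inside $\LL^{\infty,M}$—together with the variational point in Step~(ii) that the \emph{constrained} stationarity condition is precisely that the gradient be parallel to $\ones$, so the inner product vanishes along all budget-preserving directions rather than requiring the gradient itself to vanish (which would instead give the unconstrained optimum $\mA_t^{-1}\vB_t$).
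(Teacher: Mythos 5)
Your proposal is correct, and its skeleton --- verify that $\vpi^*$ is well-defined and lies in $\As^c$, show that the G\^ateaux differential $-\mA_t\vpi^*_t+\vB_t$ is pointwise proportional to $\ones$ so that $\langle \vpi-\vpi^*, H'(\vpi^*)\rangle=0$ along every budget-preserving direction, then conclude by concavity --- coincides with the paper's. The one genuine divergence is in how the last step is closed. The paper appeals to Proposition 2.1 of Chapter 2 of \cite{ekeland1999convex}, which requires $\As^c$ to be a closed convex subset of the reflexive space $\As$; accordingly, well over half of the paper's proof is spent establishing convexity and, more laboriously, closedness of $\As^2$ and $\As^\infty$ under the $\LL^2$-norm (including a contradiction argument showing that an $\LL^2$-limit of $\LL^{\infty,M}$ portfolios stays in $\LL^{\infty,M}$). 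You instead invoke the first-order characterization of concavity, $H(\vpi)\le H(\vpi^*)+\langle \vpi-\vpi^*, H'(\vpi^*)\rangle$, which holds for every $\vpi$ in the domain of the concave functional $H$ and therefore bypasses closedness of $\As^c$ entirely; you only use the budget constraint in the definition of $\As^c$ to kill the remaining inner product. Your route is the more economical one, and it also makes the uniqueness assertion explicit via strict concavity, which the paper's proof leaves implicit. Two minor points you pass over lightly but that do need care: uniform invertibility of $\mA_t$ requires at least one preference parameter bounded away from zero (your standing assumption $\zeta^0>0$ does the job), and membership of $\vpi^*$ in $\LL^{\infty,M}$ under Assumption \ref{asmp:growthMtg}(b) depends on $M$ being taken large enough relative to $\varepsilon$, $C$ and $\overline{\zeta}$, exactly as flagged in the paper's Remark following Assumption \ref{asmp:posDef}.
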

\begin{proof}
See Appendix \ref{proof:optCont}.
\end{proof}

The optimal portfolio given by the above theorem resembles the solution in \cite{alaradi2018}, but the unobservable growth rate $\vgamma$ and quadratic covariation matrix $\mSigma$ are replaced by their projected counterparts $\vgammaHat$ and $\mSigmaHat$.

Furthermore, we can relate this optimal solution to the growth optimal portfolio (GOP) and the minimum quadratic variation portfolio (MQP). To recall: the GOP is the portfolio with maximal expected growth over any time horizon while the MQP is the portfolio with the smallest quadratic variation of all portfolios over the investment horizon. Formally, the GOP and MQP are the solutions to the  optimization problems
\begin{align} \label{GOPproblem}
\vpi^{GOP} &= \underset{\vpi \in \As_c}{\arg \sup} ~ \EE \left[ \log \left( \frac{Z^\vpi_T}{Z^\vpi_0} \right) \right]  \qquad\text{and} \\
\vpi^{MQP} &= \underset{\vpi \in \As_c}{\arg \inf} ~~ \EE \left[ \int_0^T \vpi_t^\intercal \mSigma_t \vpi_t ~dt \right] \;,\end{align}
respectively.
\begin{Corollary} \label{cor:GOPMQP}
The growth optimal portfolio is given by:
\begin{equation} \label{eqn:GOP}
\vpi^{GOP}_t = \left( 1 - \ones^\intercal \mSigmaHat^{-1}_t \valphaHat_t \right)  \vpi^{MQP}_t + \mSigmaHat^{-1}_t \valphaHat_t \;.
\end{equation}
where $\vpi^{MQP}$ is the minimum quadratic variation portfolio given by:
\begin{equation} \label{eqn:MQP}
\vpi^{MQP}_t = \frac{1}{\ones^\intercal \mSigmaHat^{-1}_t \ones} ~ \mSigmaHat^{-1}_t \ones \;.
\end{equation}
\end{Corollary}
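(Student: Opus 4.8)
The plan is to recognize both the growth optimal portfolio (GOP) and the minimum quadratic variation portfolio (MQP) as particular instances of the general control problem already solved in Theorem \ref{thm:optCont}. For each, I would identify the choice of preference parameters $\vzeta$ and penalty matrices that reduce the criterion \eqref{eqn:perfCrit5} to the relevant objective in \eqref{GOPproblem}, verify that the resulting $\mA_t$ is uniformly positive definite (so the theorem applies), then read off the optimizer from \eqref{eqn:optCont} and simplify. Since the heavy lifting — existence, uniqueness, admissibility, strict concavity — is already done, the corollary is essentially a substitution exercise together with two short reductions.

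I would treat the MQP first, as it is simpler. Because every admissible $\vpi$ is $\FF$-adapted, the tower property gives $\EE[\vpi_t^\intercal \mSigma_t \vpi_t] = \EE[\vpi_t^\intercal \mSigmaHat_t \vpi_t]$, so the MQP objective in \eqref{GOPproblem} equals $-\EE\!\left[\int_0^T \vpi_t^\intercal \mSigmaHat_t \vpi_t\, dt\right]$. This is exactly the criterion \eqref{eqn:perfCrit5} with $\zeta^0=\zeta^1=0$ and $\mA_t = 2\mSigmaHat_t$, $\vB_t = 0$ (equivalently, keeping only the absolute-risk term with $\mQ_t = \mSigmaHat_t$, which is $\FF$-adapted and non-degenerate by Proposition \ref{prop:projection}). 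Substituting $\vB_t=0$ into \eqref{eqn:optCont} kills the second term and leaves $\vpi^*_t = (\ones^\intercal \mA_t^{-1}\ones)^{-1}\mA_t^{-1}\ones$; since this expression is invariant under positive rescaling of $\mA_t$, the factor of $2$ cancels and I recover \eqref{eqn:MQP}.

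For the GOP I would begin from Proposition \ref{prop:portfolioDynamics}, integrating \eqref{eqn:portSDE} to get $\log\!\big(Z^\vpi_T/Z^\vpi_0\big) = \int_0^T \gamma^\vpi_t\, dt + \int_0^T \vpi_t^\intercal\, d\vM_t$. Taking expectations, the stochastic integral vanishes by the same martingale argument used to pass from \eqref{eqn:perfCrit} to \eqref{eqn:perfCrit2}, leaving $\EE\!\left[\int_0^T \gamma^\vpi_t\, dt\right]$. Projecting onto $\FF$ and writing $\valphaHat_t = \vgammaHat_t + \tfrac12\,\mbox{diag}(\mSigmaHat_t)$, this becomes $\EE\!\left[\int_0^T \big(\vpi_t^\intercal \valphaHat_t - \tfrac12\vpi_t^\intercal \mSigmaHat_t \vpi_t\big)\, dt\right]$, i.e. \eqref{eqn:perfCrit5} with $\zeta^0=1$, $\zeta^1=\zeta^2=0$, so that $\mA_t=\mSigmaHat_t$ and $\vB_t=\valphaHat_t$. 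Plugging these into \eqref{eqn:optCont} yields $\vpi^{GOP}_t = \tfrac{1-\ones^\intercal \mSigmaHat_t^{-1}\valphaHat_t}{\ones^\intercal \mSigmaHat_t^{-1}\ones}\,\mSigmaHat_t^{-1}\ones + \mSigmaHat_t^{-1}\valphaHat_t$, and recognizing the prefactor of $\mSigmaHat_t^{-1}\ones/(\ones^\intercal \mSigmaHat_t^{-1}\ones) = \vpi^{MQP}_t$ produces the decomposition \eqref{eqn:GOP}.

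There is no deep obstacle; the only points requiring genuine care are the two reductions. For the GOP I must confirm that $\int_0^T \vpi_t^\intercal\, d\vM_t$ is a true martingale so that it drops out under expectation — this is inherited from the integrability built into Assumption \ref{asmp:growthMtg} together with the projection results (Lemma \ref{lemma:innovations} and Proposition \ref{prop:projection}) — and that the replacement of $\vgamma,\mSigma$ by $\vgammaHat,\mSigmaHat$ is the same projection already justified for \eqref{eqn:perfCrit5}. For the MQP I must invoke the scale-invariance of the minimizer when $\vB_t=0$. Admissibility and uniqueness of both portfolios then follow immediately from Theorem \ref{thm:optCont}, since each is the unique optimizer of a valid instance of the performance criterion.
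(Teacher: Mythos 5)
Your proposal is correct and follows essentially the same route as the paper, which simply specializes Theorem \ref{thm:optCont} with $\zeta^0=\zeta^1=0$, $\mQ=\mSigma$ for the MQP and $\zeta^1=\zeta^2=0$ for the GOP; your version merely fills in the reductions (martingale term vanishing, projection onto $\FF$, scale-invariance when $\vB_t=0$) that the paper leaves implicit.
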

\begin{proof}
Use the optimal control given in Theorem \ref{thm:optCont} with $\zeta^0 = \zeta^1 = 0$ and $\mQ = \mSigma$ to obtain the MQP, and with $\zeta^1 = \zeta^2 = 0$ to obtain the GOP.
\end{proof} \\

Next, we show how the optimal portfolio may be split into sub-portfolios and how it may be written as a growth optimal portfolio for a modified asset price model.
\begin{Corollary} \label{cor:properties}
When $\mOmega_t = \mQ_t = \mSigma_t$ (minimize relative and absolute risk), the optimal portfolio in Theorem \ref{thm:optCont} can be represented as
\[ \vpi^*_t = c^0_t ~ \vpi^{GOP}_t + c^1_t ~ \veta_t + c^2_t ~ \vpi^{MQP}_t\,, \]
where $c^i_t = \frac{\zeta^i_t}{\zeta^0 + \zeta^1_t + \zeta^2_t} \geq 0$ for $i =1,2,3$ so that $c_t^1 + c_t^2 + c_t^3 = 1$.
\end{Corollary}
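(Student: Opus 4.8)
The plan is a direct substitution into the optimal control of Theorem \ref{thm:optCont} followed by algebraic simplification, using the characterizations of the GOP and MQP from Corollary \ref{cor:GOPMQP}. First I would address the meaning of the hypothesis $\mOmega_t = \mQ_t = \mSigma_t$: since the penalty matrices must be $\FF$-adapted whereas $\mSigma_t$ is not, this identity is read at the level of the expectation in \eqref{eqn:perfCrit2}. Because $\vpi_t - \veta_t$ and $\vpi_t$ are $\FF$-adapted, the tower property lets the conditioning act only on $\mSigma_t$, so each penalty term acquires $\mSigmaHat_t = \EE[\mSigma_t \mid \Ft_t]$ exactly as the growth term does. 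Consequently, in \eqref{eqn:opt-pi-A-and-B} I may substitute $\mOmega_t = \mQ_t = \mSigmaHat_t$, which yields the factorization $\mA_t = (\zeta^0 + \zeta^1_t + \zeta^2_t)\,\mSigmaHat_t$ together with $\vB_t = \zeta^0 \valphaHat_t + \zeta^1_t \mSigmaHat_t \veta_t$. Writing $S_t \coloneqq \zeta^0 + \zeta^1_t + \zeta^2_t$, the crucial simplification is $\mA_t^{-1} = S_t^{-1}\mSigmaHat_t^{-1}$, so every inverse appearing in \eqref{eqn:optCont} collapses to $\mSigmaHat_t^{-1}$.

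Next I would exploit three identities. Since $\veta$ is itself a portfolio, $\ones^\intercal \veta_t = 1$, which reduces $\zeta^1_t\,\ones^\intercal \mSigmaHat_t \veta_t$ in the expression for $\ones^\intercal \mA_t^{-1}\vB_t$ down to $\zeta^1_t$. From the MQP formula \eqref{eqn:MQP} I have $\mSigmaHat_t^{-1}\ones = (\ones^\intercal \mSigmaHat_t^{-1}\ones)\,\vpi^{MQP}_t$, converting every appearance of $\mSigmaHat_t^{-1}\ones$ into a multiple of $\vpi^{MQP}_t$. Finally, rearranging the GOP formula \eqref{eqn:GOP} gives $\mSigmaHat_t^{-1}\valphaHat_t = \vpi^{GOP}_t - (1 - \ones^\intercal \mSigmaHat_t^{-1}\valphaHat_t)\,\vpi^{MQP}_t$, which I would use to trade the $\zeta^0\,\mSigmaHat_t^{-1}\valphaHat_t$ term for a combination of $\vpi^{GOP}_t$ and $\vpi^{MQP}_t$.

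Carrying out the substitution, after clearing the $S_t^{-1}$ factors the scalar prefactor multiplying $\mA_t^{-1}\ones$ in \eqref{eqn:optCont} becomes $(\zeta^0 + \zeta^2_t - \zeta^0\,\ones^\intercal \mSigmaHat_t^{-1}\valphaHat_t)/(\ones^\intercal \mSigmaHat_t^{-1}\ones)$, so that term contributes $S_t^{-1}(\zeta^0 + \zeta^2_t - \zeta^0\,\ones^\intercal \mSigmaHat_t^{-1}\valphaHat_t)\,\vpi^{MQP}_t$, while $\mA_t^{-1}\vB_t = S_t^{-1}(\zeta^0 \mSigmaHat_t^{-1}\valphaHat_t + \zeta^1_t \veta_t)$. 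Substituting the GOP identity into the last display and collecting the $\vpi^{GOP}_t$, $\veta_t$, and $\vpi^{MQP}_t$ pieces, the $\ones^\intercal \mSigmaHat_t^{-1}\valphaHat_t$ terms cancel and the $\vpi^{MQP}_t$ coefficient simplifies to $\zeta^2_t/S_t$, while $\vpi^{GOP}_t$ carries coefficient $\zeta^0/S_t$ and $\veta_t$ carries $\zeta^1_t/S_t$. This is exactly the claimed decomposition with $c^0_t = \zeta^0/S_t$, $c^1_t = \zeta^1_t/S_t$, and $c^2_t = \zeta^2_t/S_t$; non-negativity and the normalization $c^0_t + c^1_t + c^2_t = 1$ follow immediately since each $\zeta$ lies in $[0,\overline{\zeta}]$.

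The computation is elementary linear algebra, so the only genuine subtlety — and the step I would take the most care over — is the first one: justifying that setting $\mOmega_t = \mQ_t = \mSigma_t$ legitimately produces $\mSigmaHat_t$ in the reduced $\mA_t$ and $\vB_t$, rather than the non-adapted $\mSigma_t$. Once that projection argument is pinned down, the remaining obstacle is purely bookkeeping: one must track the $S_t^{-1}$ factors and the two distinct scalar quantities $\ones^\intercal\mSigmaHat_t^{-1}\ones$ and $\ones^\intercal\mSigmaHat_t^{-1}\valphaHat_t$ carefully so that the cancellation producing the clean $\zeta^2_t/S_t$ coefficient on $\vpi^{MQP}_t$ goes through.
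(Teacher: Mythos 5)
Your computation is correct and is essentially the argument the paper intends: the paper's proof simply defers to Appendix A.3 of \cite{alaradi2018}, which carries out the same direct substitution of $\mOmega_t=\mQ_t=\mSigmaHat_t$ into \eqref{eqn:opt-pi-A-and-B}, the factorization $\mA_t=(\zeta^0+\zeta^1_t+\zeta^2_t)\mSigmaHat_t$, and the identification of the $\vpi^{GOP}$, $\veta$, and $\vpi^{MQP}$ pieces via Corollary \ref{cor:GOPMQP}. Your additional care over the projection step (reading $\mSigma_t$ as $\mSigmaHat_t$ so the penalty matrices remain $\FF$-adapted) is a sensible clarification the paper leaves implicit.
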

\begin{proof}
The results follow by adapting the proof in Appendix A.3 of \cite{alaradi2018} to the current setting.
\end{proof}

\begin{Corollary} \label{cor:modifiedSigma}
The optimal portfolio $\vpi^*$ is the growth optimal portfolio for a market with a modified (projected) rate of return process $\valpha^*$ and a modified quadratic covariation matrix $\mSigma^*$ given by:
\begin{subequations}
\begin{align}
\valphaHat^*_t &= \zeta^0_t \valphaHat_t + \zeta^1_t \mSigma_t \veta_t
\\
\mSigma^*_t &= \zeta^0_t \mSigma_t + \zeta^1_t \mOmega_t + \zeta^2_t \mQ_t
\end{align}
\end{subequations}
\end{Corollary}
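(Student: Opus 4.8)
The plan is to recognize the claim as an identification of two linear--quadratic problems that share the same objective, so that no new optimization needs to be carried out. First I would record the structure of the growth optimal portfolio for an arbitrary market. Taking expectations in the portfolio SDE \eqref{eqn:portSDE} and using that the stochastic integral against $\vM$ has zero mean under Assumption \ref{asmp:growthMtg}, the expected log-growth of $\vpi$ in a market with projected return $\valphaHat'$ and projected covariation $\mSigmaHat'$ equals $\EE[\int_0^T (\vpi_t^\intercal \valphaHat'_t - \tfrac12 \vpi_t^\intercal \mSigmaHat'_t \vpi_t)\,dt]$, since $\gamma^\vpi_t = \vpi_t^\intercal \valphaHat'_t - \tfrac12\vpi_t^\intercal\mSigmaHat'_t\vpi_t$. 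Thus the GOP of \eqref{GOPproblem} for that market is the solution of exactly the problem solved by Theorem \ref{thm:optCont} under the substitutions $\mA_t \to \mSigmaHat'_t$ and $\vB_t \to \valphaHat'_t$, and by Corollary \ref{cor:GOPMQP} its explicit form is \eqref{eqn:GOP}--\eqref{eqn:MQP} in the modified quantities.

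Next I would note that the functional $H$ in \eqref{eqn:perfCrit5}, after discarding the control-independent term, is already of precisely this form, $H(\vpi)=\EE[\int_0^T(-\tfrac12\vpi_t^\intercal\mA_t\vpi_t+\vpi_t^\intercal\vB_t)\,dt]$. Hence the entire content of the corollary reduces to the algebraic identification $\mSigmaHat^*_t=\mA_t$ and $\valphaHat^*_t=\vB_t$. Reading $\mA_t=\zeta^0\mSigmaHat_t+\zeta^1_t\mOmega_t+\zeta^2_t\mQ_t$ and $\vB_t=\zeta^0\valphaHat_t+\zeta^1_t\mOmega_t\veta_t$ off Theorem \ref{thm:optCont}, these match the stated $\mSigma^*_t$ and $\valphaHat^*_t$ term by term. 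Since the two problems then carry identical objectives over the common admissible set $\As^c$, they share the same unique maximizer (uniqueness coming from the strict concavity of Proposition \ref{prop:concave}), giving $\vpi^*=\vpi^{GOP}$ for the modified market.

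The only step requiring genuine care --- and the main, if minor, obstacle --- is confirming that the modified market is itself admissible, so that Theorem \ref{thm:optCont} and Corollary \ref{cor:GOPMQP} legitimately apply to it. Concretely I would verify that $\mSigma^*_t=\zeta^0\mSigmaHat_t+\zeta^1_t\mOmega_t+\zeta^2_t\mQ_t$ is a bona fide covariation matrix in the sense of Proposition \ref{prop:projection}, i.e. that it satisfies the two-sided bound $\varepsilon\|\mx\|^2\le \mx^\intercal\mSigma^*_t\mx\le C\|\mx\|^2$. The upper bound is immediate because each of $\mSigmaHat_t,\mOmega_t,\mQ_t$ is bounded above and the weights $\zeta^i$ lie in $[0,\overline{\zeta}]$; the lower (non-degeneracy) bound follows from the positive-definiteness of each constituent matrix provided the weights do not all vanish simultaneously, which holds in the relevant regime $\zeta^0>0$. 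Once this is checked, the substitution is justified and the identification above closes the proof.
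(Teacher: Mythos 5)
Your proposal is correct and amounts to the same argument as the paper's one-line proof, which simply compares the closed form of $\vpi^*$ in Theorem \ref{thm:optCont} with the GOP formula of Corollary \ref{cor:GOPMQP} under the substitution $\mSigmaHat_t \to \mA_t$, $\valphaHat_t \to \vB_t$. You carry out the identification at the level of the objective functionals rather than of the resulting formulas (and add a harmless admissibility check on $\mSigma^*_t$, which is already implicit in the invertibility of $\mA_t$ required by Theorem \ref{thm:optCont}), but the substance is identical.
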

\begin{proof}
The result follows from direct comparison of the optimal portfolio in Theorem \ref{thm:optCont} with the form of the growth optimal portfolio in Corollary \ref{cor:GOPMQP}.
\end{proof}


A few comments on the last two results:
\begin{enumerate}	
	
	\item When the investor wishes to minimize both relative and absolute risk, their optimal strategy is to invest in the GOP, MQP and tracking benchmark. The idea is to benefit from the expected growth rate of the GOP while modulating its high levels of absolute and active risk by investing in the MQP and tracking benchmark, respectively. The proportions are determined by the relative importance the investor places on outperformance, tracking and absolute risk as represented by $\vzeta$ and may vary stochastically through time. Notice also that the optimal solution does not depend on the performance benchmark $\vrho$ and is myopic, i.e. independent of the investment horizon $T$. This is expected as the GOP can be shown to maximize expected growth at \textit{any} time horizon.
	
	\item The absolute penalty term forces the optimal strategy towards the ``shrinkage'' portfolio given by $\frac{1}{\ones' \mQ^{-1}_t \ones} ~ \mQ^{-1}_t \ones$. When $\mQ$ is a diagonal matrix $\mQ_t = \text{diag}\left(w^1_t,...,w^n_t \right)$, then $\mQ^{-1}_t = \text{diag}\left(\tfrac{1}{w^1_t},...,\tfrac{1}{w^n_t} \right)$. From this, we see that the absolute penalty term forces us to shrink to a portfolio proportional to $\left(\tfrac{1}{w^1_t},...,\tfrac{1}{w^n_t} \right)$; when $w_i$ is large $1/w_i$ is small and the shrinkage portfolio allocates less capital to asset $i$. This can be used to tilt away from undesired assets; taking $w_i \rightarrow \infty$ forces the allocation in asset $i$ to zero. Additionally, taking $\mQ = \mI$ penalizes large positions in any asset by shrinking to the equal-weight portfolio, while setting $w^i = \mSigma_{ii}$ forces shrinkage to the risk parity portfolio. 	
	
	\item The second corollary implies that the investor is modifying the assets' rates of return to reward those assets that are more closely correlated with the portfolio they are trying to track. This is because the term $\zeta_1 \mSigma_t \eta_t$ is a vector consisting of the quadratic covariation between between each asset and the wealth process associated with the tracking benchmark $\veta$, i.e. $\langle \log X_i, \log Z_\veta \rangle$. Moreover, if we consider the case where $\mQ_t$ is a diagonal matrix, the modification to the covariance matrix amounts to increasing the variance of each asset according to the corresponding diagonal entry of $\mQ_t$. This in turn makes certain assets less desirable and is tied to the notion of tilting away from those assets as discussed in the previous point.
	
\end{enumerate}

The points made above highlight the motivation for including the two running penalties. In principle, when an investor's goal is simply to outperform a performance benchmark, they would hold the GOP to maximize their expected growth. However, it is well-documented that the GOP is associated with very large levels of risk (in terms of portfolio variance) as well as potentially large short positions in a number of assets. Adding the relative and absolute penalty terms mitigates some of this risk. It is also worth noting that the decompositions in Corollary 3 (ii) and (iii) can help guide the choice of the subjective parameters $\zeta_i$ which can be used to express the proportion of wealth the investor wishes to place in the GOP, tracking benchmark and MQP.

\section{Hidden Markov Model} \label{sec:HMM}

The model presented above is quite general but not implementable without an explicit form for the conditional expectations $\vgammaHat$ and $\mSigmaHat$. In this section, we compute these conditional expectations for a hidden Markov model where the growth rate switches between a number of possible constant vectors according to an underlying Markov chain.

More specifically, let $\vW = \left\{ \vW_t = (W^1_t,..., W^k_t)^\intercal \right\}_{t \geq 0}$ be a standard $k$-dimensional Wiener process and let $\Theta = (\Theta_t)_{t\geq 0}$  an unobservable, continuous-time Markov chain with state space $\mfM = \{1,2,...,m\}$ and generator matrix $\mG$. Next, suppose the asset prices satisfies the SDE
\begin{equation}  \label{eqn:hmm}
d \log \vX_t = \vgamma^{(\Theta_t)} ~dt + \vxi ~d\vW_t\,,
\end{equation}
where $\vgamma^{(j)} \in \RR^n$ for $j = 1,..., m$ and $\vxi$ is a constant matrix of sensitivities. Here, the growth rates are all bounded and the noise component is a square-integrable martingale.

\begin{Remark}
To stabilize the estimation process, we choose $\vxi$ to be time-independent. Moreover, $\vxi$ cannot depend on the Markov chain in the same manner as the growth rates (i.e. switch between a number of fixed matrices according to the prevailing Markov chain state), otherwise the Markov chain becomes observable and there is no filtering problem; see \cite{krishnamurthy2016filterbased}.
\end{Remark}

If $\mSigma = \vxi\,\vxi^\intercal$ is a constant matrix that satisfies Assumption \ref{asmp:posDef}, then the market model above satisfies the requirements described earlier in the paper and the optimal portfolio is given by \eqref{eqn:opt-pi-A-and-B}. Here, $\mSigmaHat_t = \mSigma$ for all $t$. It remains to compute an explicit form for $\vgammaHat_t$. This filtering step involves deriving a posterior distribution for the current state of the Markov chain given all the observable information up to the present time, i.e., we are interested in computing $\PP(\Theta_t|\Ft_t)$.

First, let the investor's prior distribution for the initial state of the Markov chain be denoted $p^j_0 \coloneqq \PP(\Theta_0 = j)$ $\forall\,j \in\mfM$. Next, we define the posterior distribution of the state of the underlying Markov chain given the observable data as
\begin{equation}
p^j_t = \EE\left[ \ind_{\{\Theta_t = j\}} ~\middle|~ \Ft_t  \right] ~ , \qquad \forall \;j \in\mfM\,.
\end{equation}
The following lemma allows us to write the posterior probabilities in terms of un-normalized state variables.
\begin{Lemma} \label{lemma:unnormalizedPosterior}
Let $\vlambda = \left( \vlambda_t \right)_{t\in[0,T]}$ be an $\FF$-adapted process satisfying Novikov's condition
\[ \EE \left[ \exp \left( \tfrac{1}{2} \int_0^T \| \vlambda_t \|^2 ~dt \right) \right] < \infty \, , \] 
and define: 
\begin{enumerate}
	\item A probability measure $\widetilde{\PP}$ via the Radon-Nikodym derivative
	\begin{equation*}
	\frac{d \widetilde{\PP}}{d\PP} = \exp\left[ - \int_0^T \vlambda_{t^-}^\intercal ~d\vW_t - \frac{1}{2} \int_0^T \vlambda_{t^-}^\intercal \vlambda_{t^-} dt \right]\,.
	\end{equation*}
	\item A stochastic process $\Upsilon = \left( \Upsilon_t \right)_{t \geq 0}$
	\[ \Upsilon_t = \EE\left[ \frac{d\PP}{d \widetilde{\PP}} ~\middle|~ \Ft_t  \right] = \exp\left[ \int_0^t \vlambda_{t^-}^\intercal ~d\vW_t - \frac{1}{2} \int_0^t \vlambda_{t^-}^\intercal \vlambda_{t^-} dt \right] \,. \]
\end{enumerate}  
Then, the posterior probability processes $p_t^j$ can be written as:
\begin{equation}
p^j_t = \frac{P^j_t}{\sum_{i\in\mfM} P^i_t} , \qquad \forall \;j\in\mfM\,,
\end{equation}
where $P_t^j = \EE^{\widetilde{\PP}} \left[ \ind_{\{\Theta_t = j\}} \Upsilon_t ~\middle|~ \Ft_t \right]$.
\end{Lemma}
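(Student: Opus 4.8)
The plan is to prove the lemma as a two-step change-of-measure argument: a Girsanov change to the reference measure $\widetilde{\PP}$, followed by the abstract Bayes (Kallianpur--Striebel) formula applied to the indicator $\ind_{\{\Theta_t = j\}}$. Notice that the particular choice of $\vlambda$ is irrelevant for this lemma: the identity holds for \emph{any} $\FF$-adapted $\vlambda$ satisfying Novikov's condition, and the freedom in $\vlambda$ is only exploited later to decouple $\Theta$ from the observations under $\widetilde{\PP}$.

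First I would verify that $\widetilde{\PP}$ is a genuine probability measure equivalent to $\PP$. Since $\vlambda$ satisfies Novikov's condition, the stochastic exponential $\mathcal{E}\!\left(-\int_0^\cdot \vlambda_{s^-}^\intercal\,d\vW_s\right)$ is a true $\PP$-martingale of unit expectation; hence $d\widetilde{\PP}/d\PP$ as prescribed integrates to one and defines $\widetilde{\PP}\sim\PP$. Because the two measures are equivalent, the reciprocal $d\PP/d\widetilde{\PP}$ is $\widetilde{\PP}$-integrable with unit $\widetilde{\PP}$-expectation, so $\Upsilon_t$ is well defined and is precisely the density of $\PP$ with respect to $\widetilde{\PP}$ restricted to the full-information filtration $\Gt_t$ (which is consistent with the explicit exponential, a $\Gt_t$-measurable quantity). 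In particular $\Upsilon$ is a strictly positive $\widetilde{\PP}$-martingale with $\EE^{\widetilde{\PP}}[\Upsilon_t]=1$.

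The core step is the abstract Bayes formula: for any $\Gt_t$-measurable, $\widetilde{\PP}$-integrable random variable $Y$,
\[
\EE\!\left[Y \,\middle|\, \Ft_t\right] = \frac{\EE^{\widetilde{\PP}}\!\left[Y\,\Upsilon_t \,\middle|\, \Ft_t\right]}{\EE^{\widetilde{\PP}}\!\left[\Upsilon_t \,\middle|\, \Ft_t\right]}.
\]
I would establish this directly from the definition of conditional expectation: testing the proposed right-hand side against an arbitrary bounded $\Ft_t$-measurable random variable $\chi$ and using the density relation between $\PP$ and $\widetilde{\PP}$ on $\Gt_t$ shows that both sides have the same $\PP$-expectation against every such $\chi$, which pins down the $\Ft_t$-conditional expectation uniquely. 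The delicate point here---and the step I expect to be the main obstacle---is the interplay of the two nested filtrations $\Ft_t \subseteq \Gt_t$: the variable $Y=\ind_{\{\Theta_t=j\}}$ is $\Gt_t$-measurable but emphatically not $\Ft_t$-measurable (the chain is unobservable), so one must weight by the $\Gt_t$-restriction $\Upsilon_t$ of the density rather than any $\Ft_t$-projection, and one must check integrability so that all the conditional expectations involved actually exist.

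Finally I would specialize to $Y = \ind_{\{\Theta_t = j\}}$, which is bounded and hence $\widetilde{\PP}$-integrable. The numerator is then by definition $P^j_t$. For the denominator, since $\sum_{i\in\mfM}\ind_{\{\Theta_t=i\}}=1$ holds $\PP$-a.s. and therefore $\widetilde{\PP}$-a.s. by equivalence, linearity of conditional expectation yields $\EE^{\widetilde{\PP}}[\Upsilon_t\mid\Ft_t] = \sum_{i\in\mfM}\EE^{\widetilde{\PP}}[\ind_{\{\Theta_t=i\}}\Upsilon_t\mid\Ft_t] = \sum_{i\in\mfM} P^i_t$. Dividing the numerator by the denominator gives $p^j_t = P^j_t / \sum_{i\in\mfM} P^i_t$, which is the assertion.
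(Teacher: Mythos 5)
Your proposal is correct and follows essentially the same route as the paper: Novikov guarantees the change of measure is valid, and the conclusion is the abstract Bayes (Kallianpur--Striebel) formula applied to $Y=\ind_{\{\Theta_t=j\}}$, with the denominator decomposed via $\sum_{i\in\mfM}\ind_{\{\Theta_t=i\}}=1$. Your remark that the weighting density must be the restriction of $d\PP/d\widetilde{\PP}$ to the \emph{large} $\sigma$-algebra $\Gt_t$ (consistent with the explicit exponential) is a worthwhile clarification of the lemma's statement, which conditions on $\Ft_t$, but it does not change the argument.
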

\begin{proof}
	See Appendix \ref{proof:unnormalizedPosterior}.
\end{proof} \\

The main theorem for this section gives the SDE system that governs the dynamics of the posterior probabilities.
\begin{Theorem} \label{thm:filter}	
The state variables $\left\{ P_j \right\}_{j\in\mfM}$ satisfy the following stochastic differential equations:
\begin{equation} \label{eqn:filterSDE}
dP^j_t ~=~  \sum_{i\in\mfM} P^i_{t^-} \mG_{ji} ~dt + P^j_{t^-}\,\left(\vgamma^{(j)}\right)^\intercal \mSigma^{-1} ~ d \log \vX_t\,,
\end{equation}
with initial conditions $P^j_0 = p^j_0$ and $\vgamma^{(j)} = \left(\gamma^{(j)}_1, ..., \gamma^{(j)}_n \right)^\intercal$.
\end{Theorem}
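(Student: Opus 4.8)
The plan is to recognise \eqref{eqn:filterSDE} as the Zakai (unnormalised filtering) equation for the finite-state hidden Markov model and to derive it under the reference measure $\widetilde{\PP}$ of Lemma~\ref{lemma:unnormalizedPosterior}. With the natural choice of $\vlambda$, the measure $\widetilde{\PP}$ is the one under which the observation $\log\vX$ becomes a martingale with $d\langle\log\vX\rangle_t=\mSigma\,dt$ (a Brownian motion with covariance $\mSigma$) that is \emph{independent} of the chain $\Theta$, while $\Theta$ keeps its generator $\mG$. The likelihood process $\Upsilon$ (the density $d\PP/d\widetilde{\PP}$ on $\Gt_t$) is then the stochastic exponential solving the linear SDE
\[ d\Upsilon_t = \Upsilon_{t}\,\bigl(\vgamma^{(\Theta_t)}\bigr)^\intercal\mSigma^{-1}\,d\log\vX_t, \qquad \Upsilon_0 = 1, \]
and Lemma~\ref{lemma:unnormalizedPosterior} supplies the representation $P^j_t=\EE^{\widetilde{\PP}}[\,\ind_{\{\Theta_t=j\}}\Upsilon_t\mid\Ft_t\,]$. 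First I would record the $\GG$-semimartingale dynamics of the integrand $\ind_{\{\Theta_t=j\}}\Upsilon_t$, and then project term by term onto $\FF$.

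The first step is Itô's product rule applied to $\ind_{\{\Theta_t=j\}}\Upsilon_t$. The indicator of a continuous-time Markov chain is a pure-jump semimartingale whose compensator is produced by the generator; under the paper's generator convention this reads $d\ind_{\{\Theta_t=j\}}=\sum_{i\in\mfM}\ind_{\{\Theta_{t^-}=i\}}\mG_{ji}\,dt+dm^j_t$, with $m^j$ a $\widetilde{\PP}$-martingale (I would cite the standard CTMC martingale representation). Because $\Upsilon$ is continuous (driven by $\log\vX$) while $\ind_{\{\Theta=j\}}$ is of pure-jump/finite-variation type, their quadratic covariation vanishes, so the product rule yields
\begin{multline*}
d\bigl(\ind_{\{\Theta_t=j\}}\Upsilon_t\bigr) = \Upsilon_{t^-}\sum_{i\in\mfM}\ind_{\{\Theta_{t^-}=i\}}\mG_{ji}\,dt \\ + \ind_{\{\Theta_{t^-}=j\}}\Upsilon_{t^-}\bigl(\vgamma^{(j)}\bigr)^\intercal\mSigma^{-1}\,d\log\vX_t + \Upsilon_{t^-}\,dm^j_t,
\end{multline*}
where I used that on $\{\Theta_{t^-}=j\}$ the coefficient $\vgamma^{(\Theta_t)}$ is the constant $\vgamma^{(j)}$.

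The second step is the $\FF$-projection. Taking $\EE^{\widetilde{\PP}}[\,\cdot\mid\Ft_t\,]$ of the drift gives $\sum_{i\in\mfM}P^i_{t^-}\mG_{ji}\,dt$, and of the $d\log\vX$-integrand gives $P^j_{t^-}(\vgamma^{(j)})^\intercal\mSigma^{-1}$, since $\log\vX$ is $\FF$-adapted and the conditional expectation can be carried through the stochastic integral; the chain-martingale term $\int\Upsilon_{s^-}\,dm^j_s$ has vanishing $\FF$-optional projection because, under $\widetilde{\PP}$, $m^j$ is a martingale orthogonal to the observation $\log\vX$ (independence of $\Theta$ and $\log\vX$). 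Assembling these recovers exactly \eqref{eqn:filterSDE}, with initial condition $P^j_0=\widetilde{\PP}(\Theta_0=j)=p^j_0$ since $\Upsilon_0=1$.

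The main obstacle is making this projection rigorous: interchanging the $\FF$-conditional expectation with the stochastic integral against $\log\vX$, and proving that the orthogonal chain-martingale part projects to zero. This is a Fubini-type statement for stochastic integrals, equivalently an application of the filtering projection theorem under the reference measure, and it is precisely where integrability must be controlled. Boundedness of the vectors $\vgamma^{(j)}$, constancy of $\mSigma$, and Novikov's condition from Lemma~\ref{lemma:unnormalizedPosterior} together ensure that $\Upsilon$ and $\ind_{\{\Theta=j\}}\Upsilon$ are genuine (square-)integrable martingales, so the interchange and the term-by-term projection are justified.
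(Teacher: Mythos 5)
Your proposal is correct and follows essentially the same route as the paper's proof: change to the reference measure $\widetilde{\PP}$ under which $\log\vX$ becomes a martingale driven by $\vxi\,d\widetilde{\vW}$, apply the semimartingale product rule to $\ind_{\{\Theta_t=j\}}\Upsilon_t$ (with the covariation vanishing since $\Upsilon$ is continuous and the indicator is pure jump), and then project term by term onto $\FF$, with the chain-martingale term dropping out and the observation integrand reducing to $(\vgamma^{(j)})^\intercal\mSigma^{-1}\,d\log\vX_t$. The only cosmetic difference is that you write the $\Upsilon$-dynamics directly against $d\log\vX_t$ whereas the paper works through $\vlambda_t=\vxi^\intercal\mSigma^{-1}\vgamma_t$ and $d\widetilde{\vW}_t$ before converting back at the end.
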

\begin{proof}
See Appendix \ref{proof:filter1}.
\end{proof} \\

In the sequel, we use the vector notation $\vp_t = \left(p^1_t,\dots, p^m_t \right)^\intercal$ and $\vP_t = \left(P^1_t,\dots, P^m_t \right)^\intercal$. Note also that $\vp = ( \vp_t )_{t \geq 0}$ and $\vP = ( \vP_t )_{t \geq 0}$ are both $\FF$-adapted stochastic processes.

Using the result above we arrive at the final form for the projected asset growth rates as
\begin{equation}
\vgammaHat_t = \sum_{j=1}^m p^j_t \vgamma^{(j)}\,.
\end{equation}

In this model, the optimal portfolio \eqref{eqn:opt-pi-A-and-B} is a weighted average of optimal portfolios corresponding to each state of the Markov chain. In particular, the state-dependent portfolios are the optimal solutions assuming the Markov chain remains in the corresponding states at all times and the weights are given by the investor's posterior probabilities that a given state prevails. In other words, the portfolio \eqref{eqn:opt-pi-A-and-B} is the posterior mean of optimal portfolios across states. This is summarized in the following corollary.
\begin{Corollary}
For the HMM model \eqref{eqn:hmm} of asset price dynamics, the optimal portfolio \eqref{eqn:opt-pi-A-and-B} can be written as
\begin{equation*}
\vpi^*_t = \sum_{i\in\mfM} p^i_t \,\vpi^{(i)}_t\,,
\end{equation*}
\begin{align*}
\text{where } \quad \vpi^{(i)}_t &= \mA^{-1}_t \left[ \frac{1 - \ones^\intercal \mA^{-1}_t  \vB^{(i)}_t}{\ones^\intercal \mA^{-1}_t \ones} ~ \ones + \vB^{(i)}_t \right]\;,
\\
\text{with} \qquad \mA_t &=  \zeta^0 \mSigma + \zeta^1_t \mOmega_t + \zeta^2_t \mQ_t\;,
\\
\text{and} \qquad \vB^{(i)}_t &= \zeta^0 \vgamma^{(i)} + \zeta^1_t \mSigma \veta_t \;.
\end{align*} is the solution to the control problem \eqref{eqn:control} assuming the Markov chain state space is $\mathcal{S} = \{i\}$.
\end{Corollary}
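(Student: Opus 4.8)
The plan is to exploit the affine dependence of the optimal portfolio map on the vector $\vB_t$, combined with the fact that in this hidden Markov model the matrix $\mA_t$ is state-independent. The crucial structural observation is that, since $\mSigmaHat_t = \mSigma$ is a constant matrix here, the quantity $\mA_t = \zeta^0 \mSigma + \zeta^1_t \mOmega_t + \zeta^2_t \mQ_t$ does not depend on which Markov state prevails; all of the state-dependence is carried by $\vB_t$, and it enters only through $\valphaHat_t$.

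First I would decompose $\vB_t$ as a posterior average. Using $\vgammaHat_t = \sum_{i\in\mfM} p^i_t\,\vgamma^{(i)}$ together with $\mSigmaHat_t=\mSigma$, the projected rate of return satisfies $\valphaHat_t = \vgammaHat_t + \tfrac{1}{2}\,\mbox{diag}(\mSigma) = \sum_{i\in\mfM} p^i_t\,\valpha^{(i)}$, where $\valpha^{(i)} = \vgamma^{(i)} + \tfrac{1}{2}\,\mbox{diag}(\mSigma)$ and I have used $\sum_{i\in\mfM} p^i_t = 1$ to absorb the state-independent excess-growth term into the average. Substituting this into $\vB_t = \zeta^0\valphaHat_t + \zeta^1_t\mOmega_t\veta_t$ and again invoking $\sum_{i\in\mfM} p^i_t = 1$ to carry the benchmark term under the sum gives $\vB_t = \sum_{i\in\mfM} p^i_t\,\vB^{(i)}_t$, where $\vB^{(i)}_t$ is precisely the vector $\vB$ would take if the chain were frozen in state $i$. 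Any state-independent additive constant grouped into $\vB^{(i)}_t$ (such as the excess-growth term) is harmless for this identity, again because the weights sum to one.

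Next I would show that the map carrying $\vB$ to the optimal portfolio in Theorem \ref{thm:optCont} is affine. Expanding \eqref{eqn:optCont} gives
\[
\vpi(\vB) = \left(\mA^{-1}_t - \frac{\mA^{-1}_t\ones\,\ones^\intercal\mA^{-1}_t}{\ones^\intercal\mA^{-1}_t\ones}\right)\vB + \frac{\mA^{-1}_t\ones}{\ones^\intercal\mA^{-1}_t\ones}\,,
\]
which has the form $\vpi(\vB)=\mathbf{P}_t\,\vB+\mathbf{q}_t$ with linear part $\mathbf{P}_t$ and constant offset $\mathbf{q}_t$ depending only on $\mA_t$, and hence common to every state. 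Invertibility of $\mA_t$ needed to write this is guaranteed by the nondegeneracy bound on $\mSigmaHat$ from Proposition \ref{prop:projection} together with the coercivity of $\mOmega$ and $\mQ$.

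Finally, I would combine the two facts. Applying the affine map to $\vB_t=\sum_{i\in\mfM} p^i_t\vB^{(i)}_t$ and using linearity of $\mathbf{P}_t$ yields $\vpi^*_t=\mathbf{P}_t\sum_{i\in\mfM} p^i_t\vB^{(i)}_t+\mathbf{q}_t=\sum_{i\in\mfM} p^i_t\,\mathbf{P}_t\vB^{(i)}_t+\mathbf{q}_t$; since $\sum_{i\in\mfM} p^i_t=1$, the offset $\mathbf{q}_t$ can be pulled back under the sum to give $\sum_{i\in\mfM} p^i_t\bigl(\mathbf{P}_t\vB^{(i)}_t+\mathbf{q}_t\bigr)=\sum_{i\in\mfM} p^i_t\,\vpi^{(i)}_t$, which is the claimed posterior-mean representation. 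The only genuinely delicate point is the bookkeeping around $\sum_{i\in\mfM} p^i_t=1$: it is used twice, once to keep the state-independent benchmark and excess-growth terms inside the average defining $\vB_t$, and once to convert the constant offset $\mathbf{q}_t$ of the affine portfolio map into a convex combination. I do not anticipate any analytic obstacle beyond this, since all the content is algebraic once the key observation that $\mA_t$ is common across states has been isolated.
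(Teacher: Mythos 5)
Your proposal is correct and is essentially the paper's own argument: both rest on the observation that $\mA_t$ is state-independent so the optimal portfolio is an affine map of $\vB_t$, that $\vB_t=\sum_{i\in\mfM}p^i_t\vB^{(i)}_t$ because $\vgammaHat_t=\sum_{i\in\mfM}p^i_t\vgamma^{(i)}$, and that $\sum_{i\in\mfM}p^i_t=1$ lets the constant offset be distributed across the convex combination. You merely run the computation forward (decompose $\vB_t$, then apply the affine map) where the paper starts from $\sum_{i}p^i_t\vpi^{(i)}_t$ and collapses it, and your explicit handling of the state-independent $\tfrac{1}{2}\,\mathrm{diag}(\mSigma)$ term is, if anything, slightly more careful than the paper's.
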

\begin{proof}
The result follows as a consequence of the linearity of the optimal control with respect to the inferred growth rate $\vgammaHat$. Explicitly, we have
\begin{align*}
\sum_{i=1}^m p^i_t \vpi^{(i)}_t ~&=~ \sum_{i=1}^m p^i_t \mA^{-1}_t  \left[ \frac{1 + \ones^\intercal \mA^{-1}_t  \vB^{(i)}_t}{\ones^\intercal \mA^{-1}_t \ones} ~ \ones - \vB^{(i)}_t \right]
\\
~&=~ \mA^{-1}_t \sum_{i=1}^m p^i_t \left[ \frac{1 + \ones^\intercal \mA^{-1}_t  \vB^{(i)}_t}{\ones^\intercal \mA^{-1}_t \ones} ~ \ones - \vB^{(i)}_t \right]
\\
~&=~ \mA^{-1}_t  \left[ \frac{1 + \ones^\intercal \mA^{-1}_t \sum_{i=1}^m p^i_t  \vB^{(i)}_t}{\ones^\intercal \mA^{-1}_t \ones} ~ \ones - \sum_{i=1}^m p^i_t \vB^{(i)}_t \right]
\end{align*}
Now,
\begin{align*}
\sum_{i=1}^m p^i_t \vB^{(i)}_t ~&=~  \sum_{i=1}^m p^i_t \left[\zeta^0_t \vgamma^{(i)}_t + \zeta^1_t \mSigma \veta_t \right]
\\
~&=~  \zeta^0_t \vgammaHat_t + \zeta^1_t \mSigma \veta_t
\\
~&=~  \vB_t
\end{align*}
Substituting this sum into the previous expression completes the proof.
\end{proof}


\section{Implementation}

In this section, we test the performance of the optimal portfolio in a series of out-of-sample backtests. Before conducting these tests, we first derive a discretized version of the filter given in Theorem \ref{thm:filter} and outline the procedure for estimating the HMM parameters from data. Henceforth, we work with daily data and fix the time increment to be $\Delta t = \tfrac{1}{252}$.

\subsection{Data}

The data consist of daily returns (with dividends) for 12 industries (Non-durables, Durables, Manufacturing, Energy, Chemicals, Business Equipment, Telecom, Utilities, Shops, Health, Money and Other) for the period January 1970 to December 2017. We construct industry portfolios by assigning each stock to one of the 12 groupings listed above according to their SIC code at the time. The resulting industries constitute our market, i.e. our market consists of 12 assets that are the industries themselves. We use industry returns rather than individual securities to avoid difficulties arising from varying investment sets caused by individual companies entering and exiting the market. Furthermore, restricting to 12 assets makes the parameter estimation step more manageable; parameter estimation, particularly of the growth rates, becomes less reliable and robust as the number of assets increases.

\subsection{Model Discretization}

To proceed with implementing the filter in Theorem \ref{thm:filter}, we first discretize the SDE system \eqref{eqn:filterSDE}. A na\"{i}ve application of the Euler-Maruyama method does not yield acceptable results, due to instability issues that lead to underflow in implementation. Hence, we derive an alternative discretization scheme for the un-normalized posterior probabilities:
\begin{align} \label{eqn:discreteFilter}
\vP(t + \Delta t) &= \mDim{e^{\mG \Delta t}}{m}{m} ~ \mDim{\left(\begin{array}{c}
P^1_t \, Y_1(t,t + \Delta t) \\
\vdots \\
P^m_t \, Y_m(t,t + \Delta t)
\end{array}\right)}{m}{1}
\\
\mbox{where } ~~ Y_j(t,t + \Delta t) &= \exp \left\{ -\tfrac{1}{2} \vgamma^{(j)\intercal}\, \mSigma^{-1} \,\vgamma^{(j)} \Delta t + \vgamma^{(j)\intercal} \,\mSigma^{-1} \,\log \left( \tfrac{\vX_{t+\Delta t}}{\vX_t} \right) \right\}\,.\nonumber
\end{align}
In the last line above, the ratio $\tfrac{\vX_{t+\Delta t}}{\vX_t}$ is element-wise. We defer the full derivation of this scheme to Appendix \ref{proof:discreteFilter}.

Another aspect of discretizing the model involves finding the nearest generator matrix for a given transition probability matrix. As seen in the next section, our estimation procedure produces a transition probability matrix $\mZ$ which governs transitions in discrete time rather than a continuous time generator matrix $\mG$. This results from the time discretization. The optimal portfolio, however, requires the generator matrix which is trivially related to the discrete time transition probability matrix via $\mZ = e^{\mG \Delta t}$. And one is tempted to take the matrix logarithm to obtain $\mG$ from $\mZ$. However, this sometimes produces a matrix with complex entries which is unusable for our purposes and necessitates an alternative procedure, and is a well known problem in credit migration problems. We opt to use an estimated generator matrix $\mG^*$ as the one which minimizes the Frobenius norm to the probability transition matrix
\begin{equation}
\label{eqn:GeneratorFromTransition}
\mG^* = \underset{\mG}{\arg\min} \left\| \mZ - e^{\mG \Delta t} \right\|.
\end{equation}

\subsubsection{Parameter Estimation and Model Selection}

The parameter set for the HMM market model \eqref{eqn:hmm} consists of the constant growth rates in each state $\vgamma^{(j)}$ for $j =1,...,m$, the shared covariance matrix $\mSigma$ and the generator matrix $\mG$. We estimate these parameters using maximum likelihood estimation, and as our model contains latent information it requires using the Expectation-Maximization (EM) algorithm of \cite{baum1970maximization} (see also \cite{bishop2006pattern} for more details).
Our model contains Gaussian emissions from the latent states, hence we obtain explicit update rules for the Baum-Welch (forward-backward) algorithm (see Appendix \ref{sec:EM} for details). The EM algorithm is repeated multiple times with different initial values and the MLE is taken to be the parameter set with the largest likelihood from all the EM runs. The initialization for each run is based on estimating an independent mixture model (without a latent Markov chain component).

As noted above, the estimation algorithm produces the transition probability matrix for discrete time observations. This matrix is then transformed into a generator matrix using \eqref{eqn:GeneratorFromTransition}. Alternative estimation approaches that do not require assuming discrete time steps, such as the MCMC-based approach outlined in \cite{hahn2010markov}, may be used instead.

Figures \ref{fig:HMMparameterEstimation} and \ref{fig:HMMlatentFactorEstimation} and Table \ref{tab:viterbiFreq} show the estimation results for the five-year period 2010 to 2015 for the 12-industry granularity level. The results show the general pattern of estimated parameters using this approach summarized as follows:

\begin{landscape}
	\begin{figure}
		\centering
		\includegraphics[width=1\linewidth]{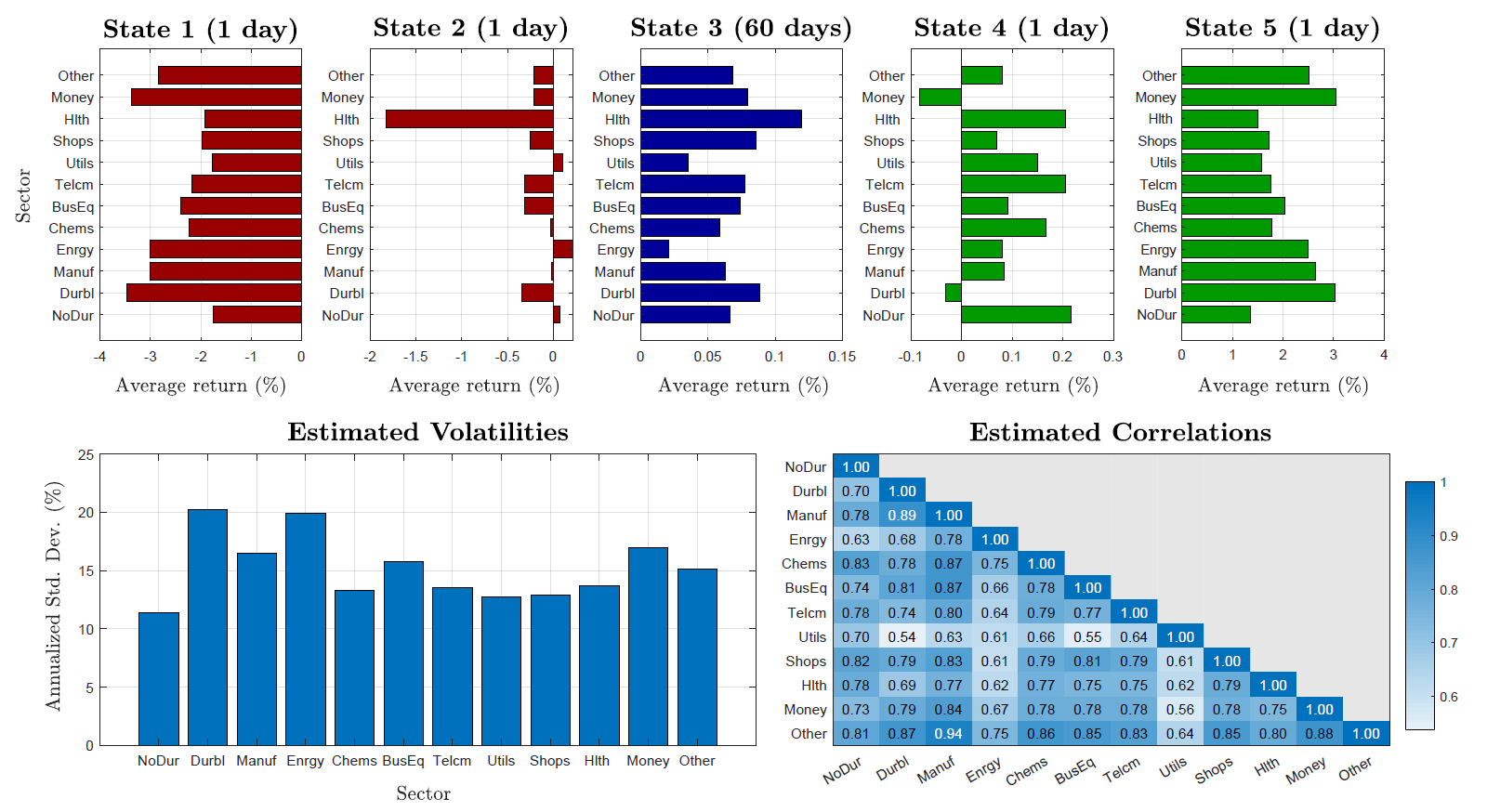}
		\captionsetup{width=.9\linewidth}
		\caption{HMM market model parameter estimation for $m = 5$ states. Top row: estimated growth rates for each asset sorted from left to right sorted by average growth rate (red/blue/green bars indicate ``bad''/``normal''/``good'' states); expected sojourn time given in brackets. Bottom left: estimated volatilities for each asset. Bottom right: estimated correlation matrix.
		}
		\label{fig:HMMparameterEstimation}
	\end{figure}
\end{landscape}

\begin{enumerate}
	\item With an odd number of states, the middle state (sorted by average daily growth rate) has small positive expected return (between 2 and 12 basis points) for all assets. This is close to the overall sample mean and indicates that this is the ``normal'' state of the  market.
	\item The expected sojourn time (time spent in the state once the Markov chain enters it) for the normal state is around 60 days. The states on either side of the middle state are visited for short periods at a time (approximately 1 day on average) and involve mostly positive or negative returns for the entire market. As such, they can be viewed as ``good'' or ``bad'' states of the market. This is also supported by the most likely path obtained using the Viterbi algorithm.
	\item Levels of high activity in the Markov chain over certain time periods, including switching between extreme states, suggest a form of volatility clustering and short-term reversal.
	\item Changing the estimation period or varying the number of states in the model to another odd number yields similar results qualitatively. \\
\end{enumerate}

\begin{figure}[h!]
	\centering
	\includegraphics[width=0.85\textwidth]{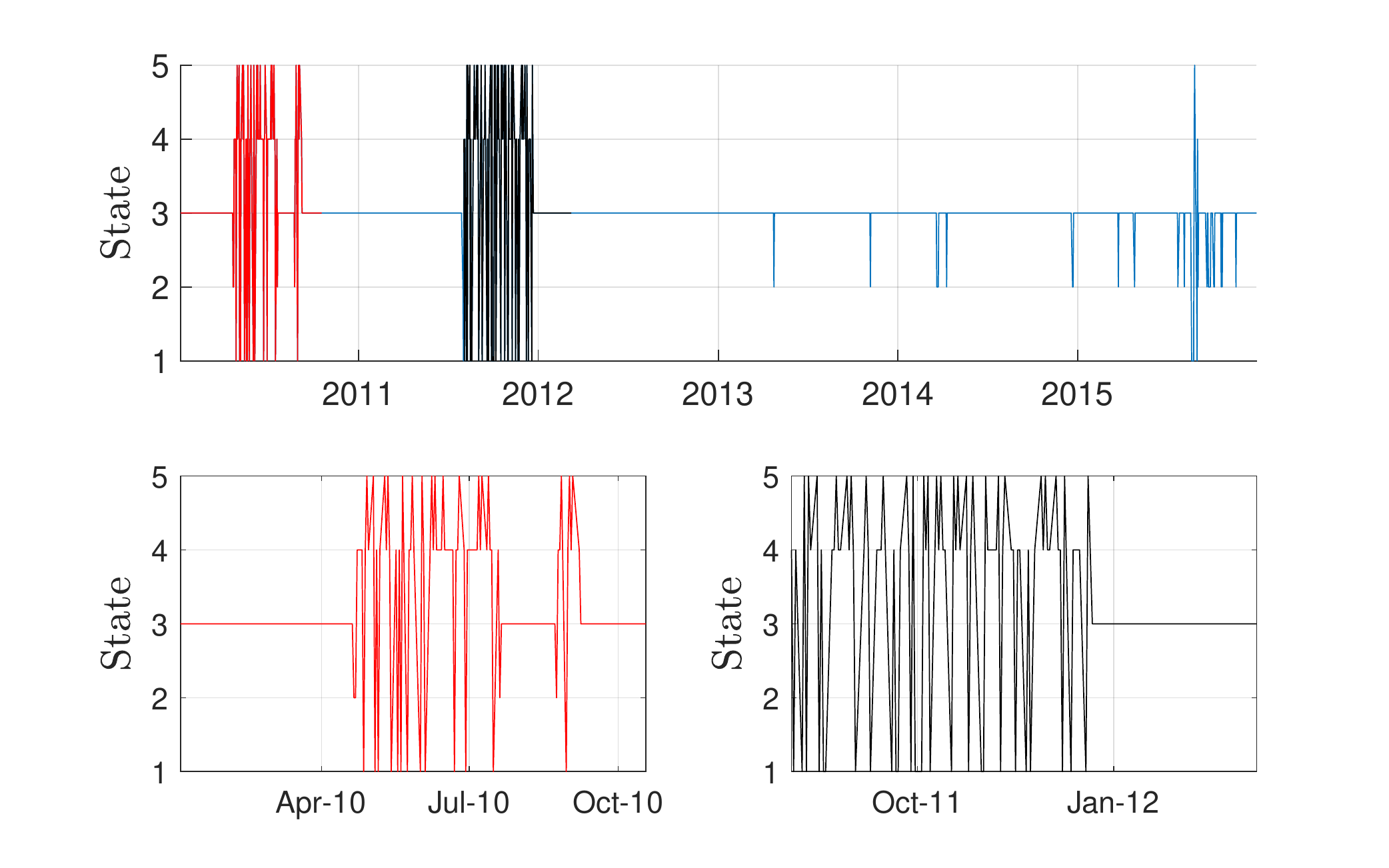}
	\captionsetup{width=.9\linewidth}
	\caption{Most likely path of latent factor according to the Viterbi algorithm (top panel) and Markov chain path in periods of high activity in 2010 (bottom left) and 2011 (bottom right) highlighted by the red and black portions, respectively.
}
	\label{fig:HMMlatentFactorEstimation}
\end{figure}

\vspace{0.5cm}

\begin{table}[h] \footnotesize \centering
	\begin{tabular}{ L{2cm} C{4.5cm} C{3cm}  C{3cm} C{3cm}}
		\toprule[1.5pt]
		\bf{State}    		& \bf{Average Daily Growth Rate Across Industries}		& \bf{Number of Days Spent in State}    & \bf{\%} of Days Spent in State  & \bf{Expected Sojourn Time}   \\
		\midrule	
		\bf{Very Bad}  		& -250 bps &  44   & 3 \%  & 1 day \\					
		\bf{Bad}      	    & -25  bps & 28   & 2 \%  & 1 day \\
		\bf{Normal}  		& 7 bps & 1305 & 86 \% & 60 days  \\
		\bf{Good}  			& 10 bps  & 89   & 6 \%  & 1 day \\
		\bf{Very Good}  	& 210 bps & 44   & 3 \%  & 1 day \\
		\bottomrule[1.5pt]					
	\end{tabular}
	\captionsetup{width=.8\linewidth}
	\caption{HMM estimation results based on applying the EM algorithm to data in the period 2010-2015; estimate of days spent in each state is based on the Viterbi algorithm.}
	\label{tab:viterbiFreq}
\end{table}

\newpage

One difficulty with estimation is that the number of states is assumed known, yet estimating the number of states is a highly non-trivial problem. The likelihood ratio tests for determining the number of states suffers from a lack of identifiability of the alternative under the null hypothesis which causes the classical chi-square theory to fail, see \cite{gassiat2000likelihood} and \cite{gassiat2002likelihood} for more details. To circumvent this issue, we refer to \cite{celeux2008selecting} who compares various likelihood-based approaches to choosing the number of hidden states. They compare the well-known Akaike Information Criteria (AIC) and Bayesian Information Criteria (BIC) selection criteria with the Integrated Complete Likelihood (ICL) approach of \cite{Biernacki2000}, as well as the Odd-Even-Half-Sampling (OEHS) technique developed in their paper. The interested reader is referred to \cite{celeux2008selecting} for a more in-depth discussion of these topics.

To select the number of states we compute the four metrics AIC, BIC, ICL, and OEHS for rolling 5-year periods starting from 1970-1975 and ending at 2012-2017. For each 5-year period, the metrics are computed fixing the number of states from 1 to 11 states. Then the state with the highest metric is chosen for that period. Figure \ref{fig:numberOfStates}  plots the selected number of states for each estimation period and Figure \ref{fig:modelSelection} plots the resulting distributions across all periods.

\begin{figure}
\begin{minipage}[t]{.48\textwidth}
	\centering
	\includegraphics[width=\textwidth]{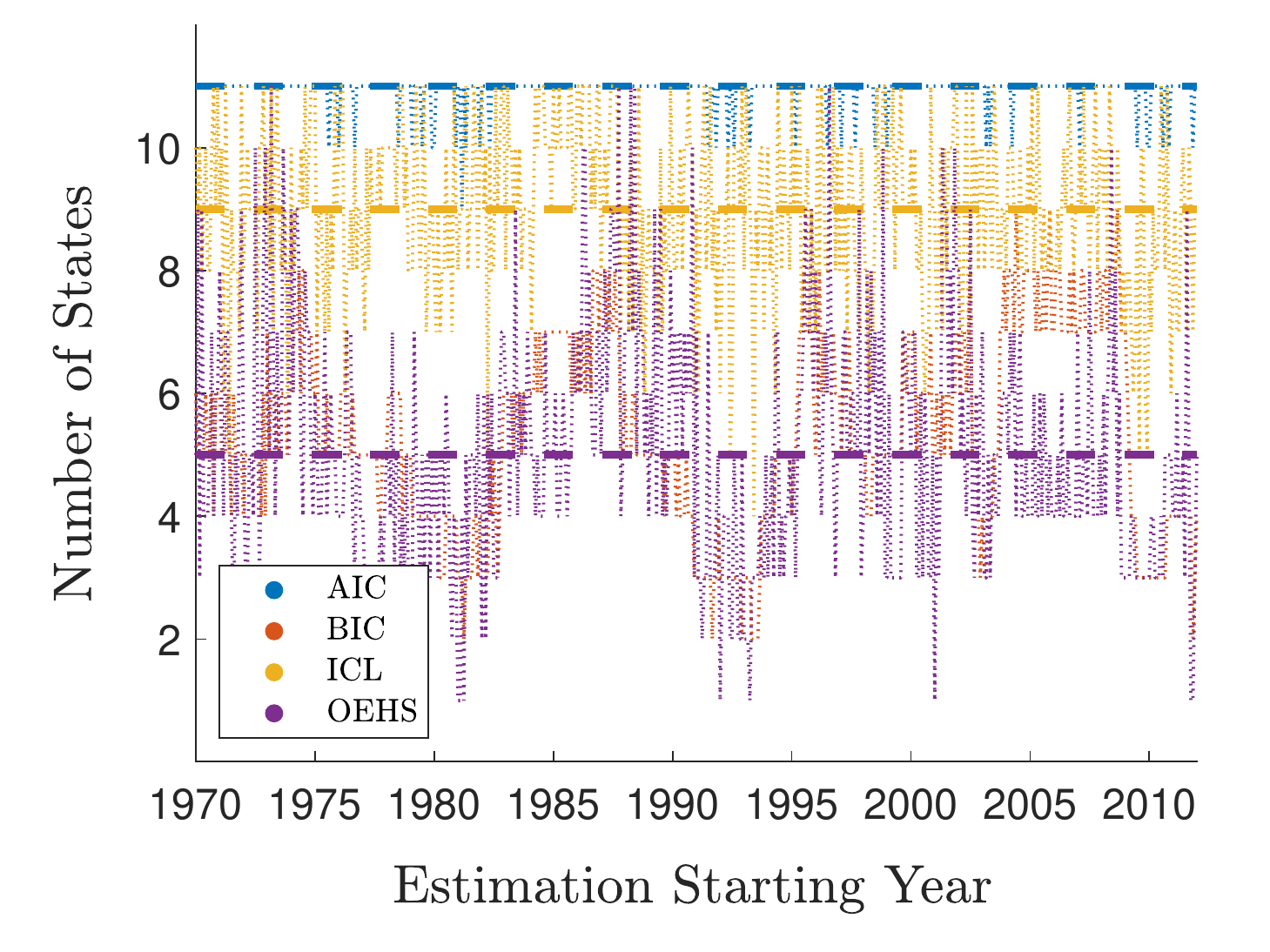}
	\captionsetup{width=.9\linewidth}
	\caption{Preferred number of states according to different model selection metrics for each 5-year estimation period; dashed lines represent the mode of each distribution (most preferred state through time according to the metric).}
	\label{fig:numberOfStates}
\end{minipage}
\begin{minipage}[t]{.48\textwidth}
	\centering
	\includegraphics[width=\textwidth]{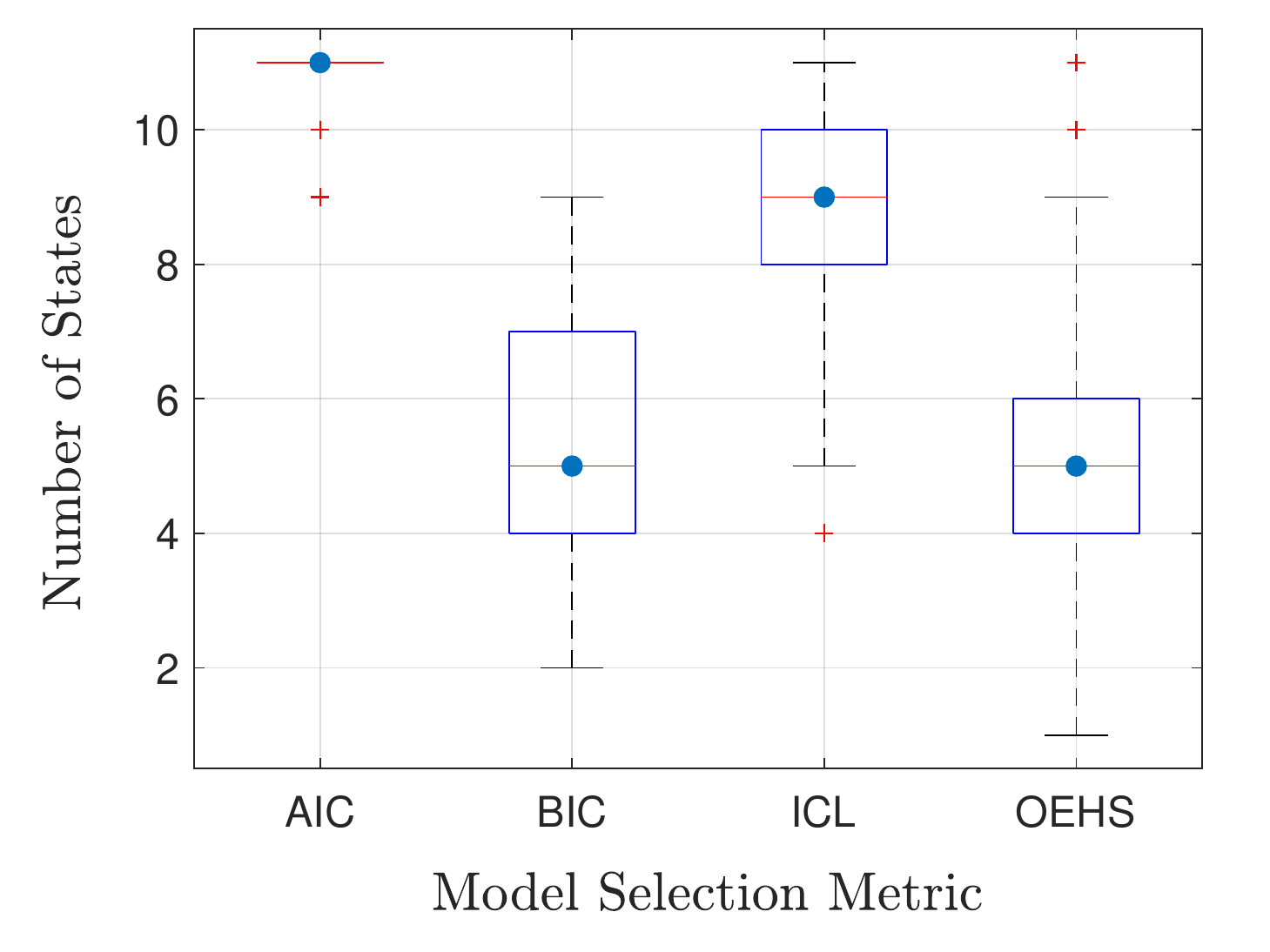}
	\captionsetup{width=.9\linewidth}
	\caption{Boxplots of preferred number of hidden states according to different model selection metrics for rolling 5-year estimation periods; blue dots represent the mode of each distribution (most preferred state according to the metric).}
	\label{fig:modelSelection}
\end{minipage}
\end{figure}

\subsection{Optimal Portfolio Performance}

Here, we present our main backtesting results. The backtests are out-of-sample: they use the previous 5-year period for parameter estimation and implement the optimal portfolio over the subsequent year. This procedure is repeated on a monthly rolling basis, i.e. estimation using Jan-1970 to Dec-1974 data and investment over Jan-1975 to Dec-1975 data for the first backtest, then estimation using Feb-1970 to Jan-1975 data and investment over Feb-1975 to Jan-1976 data for the second backtest, etc. This is repeated for all available data for a total of 505 out-of-sample backtests over the period where data is available.

For parameter estimation, we select the number of hidden states given by the ICL model selection metric (we also assess the robustness of this choice in a later section). For implementation (investment), we compute the optimal portfolio with penalties $\mOmega = \mSigma$ and $\mQ = \mI$ (minimizing active risk and tilting towards the equal-weight portfolio) and use the equal-weight portfolio as the outperformance and tracking benchmark. As is well known, this portfolio outperforms the market portfolio and thus constitutes a more difficult benchmark to outperform. The subjective preference parameters $\vzeta$ are determined differently for each backtest by searching for a $\vzeta$ that arrives as close as possible to the target active risk of 5\%. More specifically, we use the previous 5-year data, which was used for model parameter estimation, along with the estimated model parameters over the same period to search for the $\vzeta$ vector that results in the portfolio with an active risk level closest to the desired level. This $\vzeta$ vector is then fixed and applied in the out-of-sample portion of the backtest. This procedure does not guarantee the active risk over the subsequent 1-year investment period matches the target, but it does provide a rationale for choosing the preference parameters based on the information the investor would have at the time of investing.

In summary, our backtests consist of the following steps:
\begin{enumerate}
	\item Estimate model parameters using 5 years of data with number of states selected by ICL.
	\item Calibrate preference parameters using 5-year in-sample investment.
	\item Implement portfolio (out-of-sample) over subsequent year.
	\item Roll over to next backtest and repeat steps 1-3.
\end{enumerate}	
As a benchmark, we implement the simple GBM model discussed in \cite{alaradi2018} using the analogous out-of sample manner described above (without selecting the number of states, as there are none in this benchmark model). 

Denoting the vector of asset returns between $t$ and $t+1$ by $\boldsymbol{R}_{t,t+1}$, the main comparison metrics (for backtest $i$) are
\begin{align*}
	\text{Outperformance: } & ~~~~ OP_i = \frac{Z^{\vpi^*}_T}{Z^{\veta}_T} - 1\,,
	\\
	\text{Active return: } &  ~~~~ \frac{1}{\Delta t} \frac{1}{T} \sum_{t=1}^T (\vpi^*_t - \veta_t)^\intercal ~\boldsymbol{R}_{t,t+1}\,,
	\\
	\text{Active risk: } &  ~~~~ \sqrt {\frac{1}{\Delta t} \frac{1}{T} \sum_{t=1}^T \left( (\vpi^*_t - \veta_t)^\intercal ~\boldsymbol{R}_{t,t+1} - \mbox{Active Return}\right)^2}\,,
	\\
	\text{Gain-loss ratio: } & ~~~~ \frac{\EE[OP ~|~ OP > 0]}{\EE[OP ~|~ OP < 0]}  \qquad  \text{(averages taken across all backtests)\,. }
\end{align*}

The results are provided in Figures \ref{fig:backtestPlots} and \ref{fig:outperformanceHistogram}. The figures clearly show the optimal strategy under the HMM approach yields better performance relative to its GBM counterpart. The average outperformance across all backtests is 2\% compared to 0.5\% for the GBM model with 70\% of backtests leading to outperformance (compared to only 50\% for the GBM approach) and approximately twice the gain-loss ratio (2.2 comapred to 1.2). In addition, there appears to be an improvement in the level of active risk (2.2\% on average compared to 4.9\%) as well as downside risk (10\% worst-case underperformance compared to 18\%).

\subsection{Robustness Check}

In this section, we test the robustness of the optimal portfolio to model misspecification by considering alternative metrics to choose the number of hidden states. In particular, we repeat the same procedure as in the previous section to conduct out-of-sample backtests using the AIC, BIC and OEHS to determine the number of hidden states. Then, with the new model selection criteria, we average the performance metrics across all the historical backtests. Figure \ref{fig:robustnessPlots} shows estimated densities of outperformance in all backtests for each HMM selection metric along with the proportion of backtests with positive outperformance, the worst-case underperformance and the average gain-loss ratio across all backtests. The plots show consistent improvement in portfolio performance across the various metrics regardless of the model selection criteria. There are, however, a few backtests associated with the OEHS criteria (all involving estimation during 2005-2010) with extraordinarily high outperformance figures. Caution needs to be taken as it is possible for similarly large underperformance figures to occur.

\begin{landscape}
\begin{figure}[h!]
	\centering
	\vspace{0.5cm}
	\includegraphics[width=\linewidth]{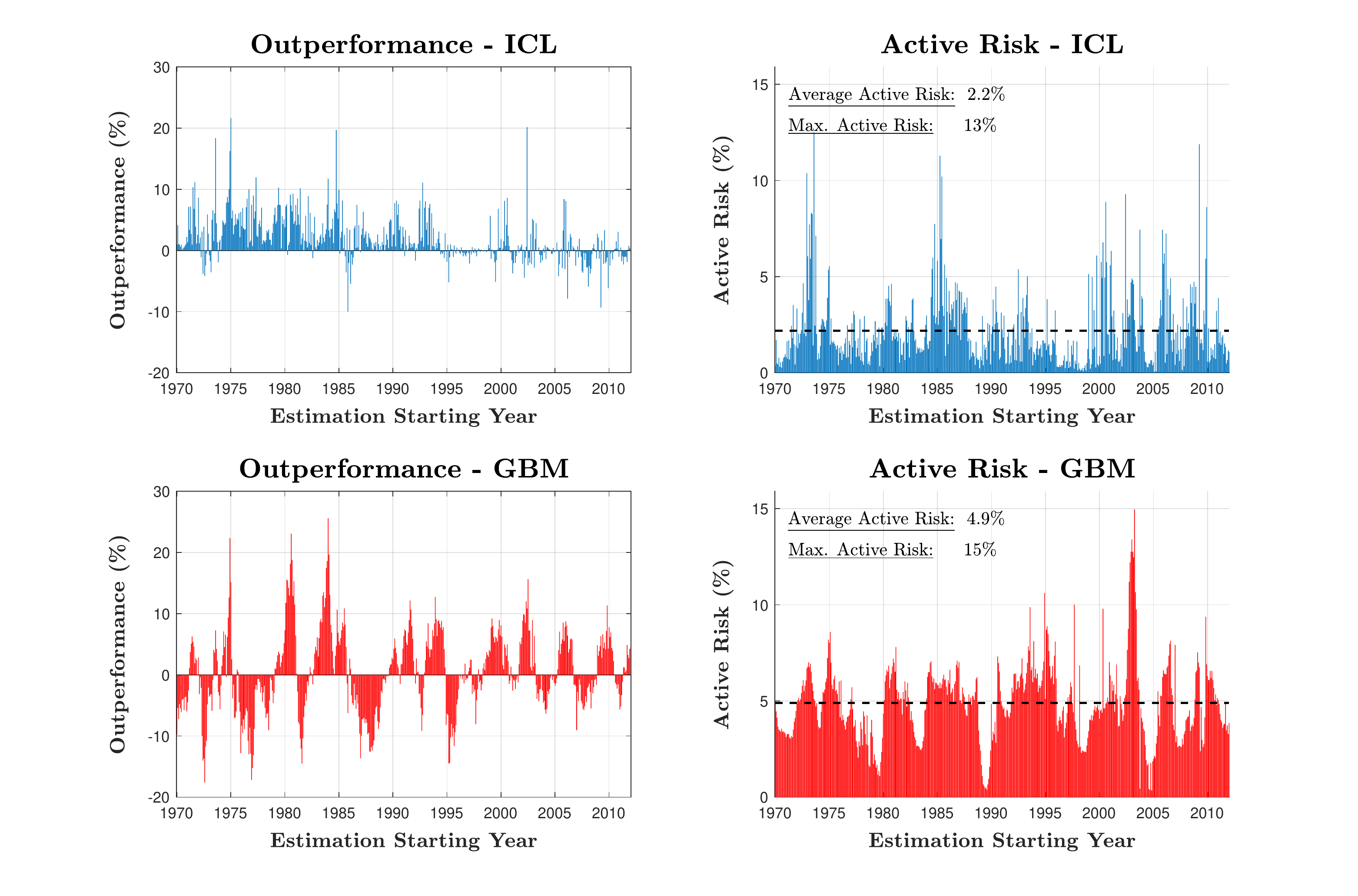}
	\captionsetup{width=.9\linewidth}
	\caption{Outperformance and active risk in each backtest using the HMM approach (top panels) and GBM model (bottom panels).}
	\label{fig:backtestPlots}
\end{figure}
\end{landscape}

\clearpage

\begin{figure}[t!]
	\centering
	\includegraphics[width=0.4\textwidth]{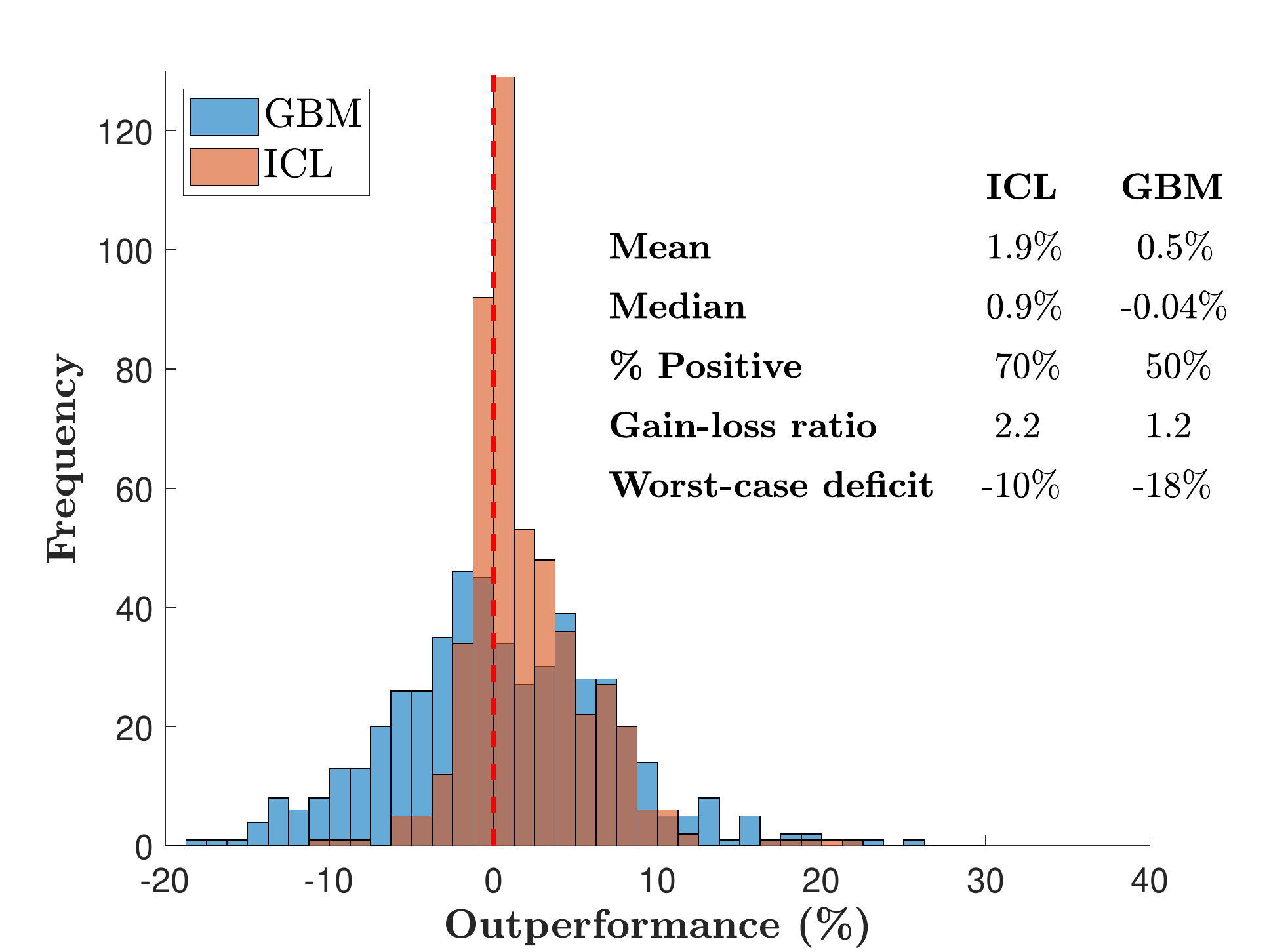}
	\captionsetup{width=.9\linewidth}
	\caption{Histogram of backtest outperformance results using HMM approach and GBM model. 
}
	\label{fig:outperformanceHistogram}
\end{figure}

\begin{figure}[t!]
	\centering
	\includegraphics[width=0.7\textwidth]{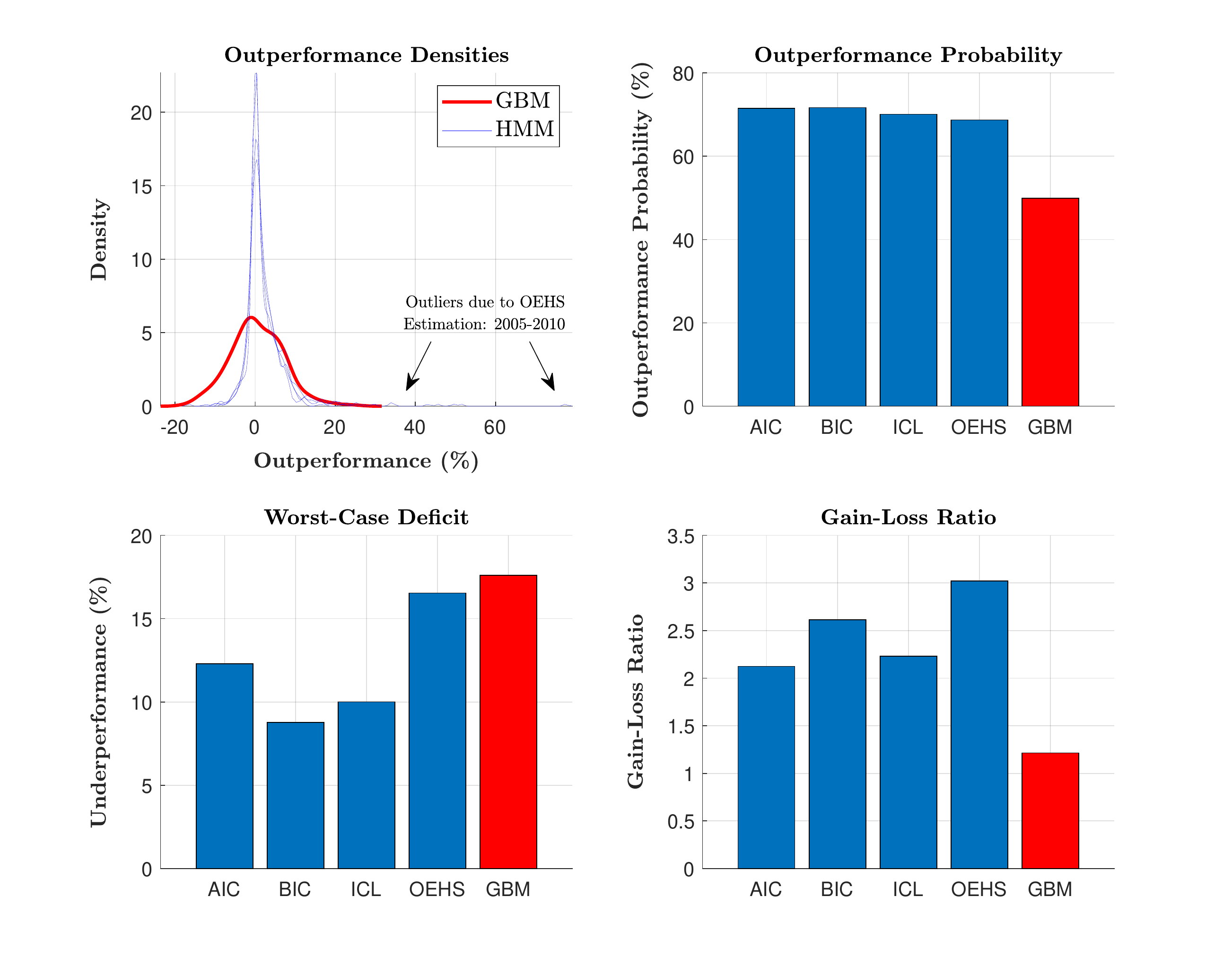}
	\captionsetup{width=.9\linewidth}
	\caption{Statistics from the backtest performance using various model selection criteria; densities of outperformance (top left), probability of outperfoemance (top right), worst-case underperformance (bottom left) and average gain-loss ratio (bottom right). Red bars/lines correspond to GBM/1-state HMM; blue bars/lines correspond to the full HMM implemented using the various model selection metrics.
}
	\label{fig:robustnessPlots}
\end{figure}

\section{Conclusions}

In this paper, we solve the problem of outperformance and tracking faced by an active manager and obtain the optimal portfolio in a fairly general setting. This generalization is achievable using convex analysis techniques. We apply these results to a market model with a latent factor driving asset growth rates with the goal of achieving improved results over the simple GBM model with constant parameters. The out-of-sample backtest results show a marked improvement in outperformance level and probability and the results are robust with respect to model specification.

A number of directions can be taken for future research. It would be interesting, though possibly quite challenging, to consider a model combining rank-based models and HMMs to leverage the stability and estimation qualities of the former approach. Another interesting direction would be to approach the problem using trading rates rather than weight vectors as the investor's control variable. Liquidity risk can be incorporated by penalizing trading rates. Furthermore, it would be interesting to incorporate short-selling constraints into the problem.

\begin{appendices}

\small

\section{Proofs for Section \ref{sec:control} }
\subsection{Proof of Lemma \ref{lemma:innovations}} \label{proof:innovations}
Let $\vMHat_t = \left( \MHat^1_t, ..., \MHat^n_t \right)$. Next,
\begin{align*}
\MHat^i_t ~&=~ \log X^i_t - \int_0^t \gammaHat^i_s ~ds
~=~ \int_0^t \left(\gamma^i_s - \gammaHat^i_s \right) ds  + M^i_t\,.
\end{align*}
Under Assumption \ref{asmp:growthMtg}(b),  $\MHat^i \in \LL^1$ as it is the sum of a bounded term and an integrable term.

Under Assumption \ref{asmp:growthMtg}(a), we show that $\MHat^i \in \LL^2$ by considering
\begin{equation}
 \left( \MHat^i_t \right)^2 ~=~ \left( \int_0^t \left(\gamma^i_s - \gammaHat^i_s \right) ds \right)^2 + \left( M^i_t \right)^2 + 2 M^i_t  \left( \int_0^t \left(\gamma^i_s - \gammaHat^i_s \right) ds \right)\,.
 \label{eqn:Mhat-squared}
\end{equation}
As $\gamma^i,\gammaHat^i\in\LL^2$, their difference is also an $\LL^2$ process and we have
\[ \EE \left[ \left( \int_0^t \left(\gamma^i_s - \gammaHat^i_s \right) ds\right)^2\right] < \infty \,. \]
The second and third terms of \eqref{eqn:Mhat-squared} have finite expectation for all $t$ (up to a zero measure set) as $M^i \in \LL^2$ and by the Cauchy-Schwarz inequality. Combining all of these facts we can conclude that $\vMHat \in \LL^2$.

To show the martingale property, firstly
\begin{align*}
\widehat{\vgamma}_t &= \EE \left[ \vgamma_t ~\middle|~ \Ft_t \right]
\\
\implies \quad \EE \left[ \widehat{\vgamma}_t ~\middle|~ \Ft_t \right] &= \EE \left[ \EE \left[ \vgamma_t ~\middle|~ \Ft_t \right] ~\middle|~ \Ft_t \right] = \EE \left[ \vgamma_t ~\middle|~ \Ft_t \right]
\\
\implies \quad \EE \big[ \left( \vgamma_t - \widehat{\vgamma}_t \right) ~\big|~ \Ft_t  \big] &= 0\,.
\end{align*}
Secondly, consider the conditional expectation of the increment of the innovations process:
\begin{align*}
\EE \left[ \MHat^i_u -  \MHat^i_t ~\middle|~ \Ft_t \right] &= \EE \left[ \int_t^u \left(\gamma^i_s - \gammaHat^i_s \right) ds ~\middle|~ \Ft_t \right] + \redunderbrace{\EE \left[ M^i_u - M^i_t ~\middle|~ \Ft_t \right]}{\text{\scriptsize $=0$, as $M^i$ is a martingale}}
\\
&=  \int_t^u \EE \left[ \left(\gamma^i_s - \gammaHat^i_s \right) ~\middle|~ \Ft_t \right] ~ds
\\
&=  \int_t^u \EE \left[ \EE \left[ \left(\gamma^i_s - \gammaHat^i_s \right) ~\middle|~ \Ft_s \right] ~\middle|~ \Ft_t \right] ~ds
\\
& = 0
\end{align*}
The last two steps follow from the fact that the integrand is in $\LL^2$ which allows us to invoke Fubini's theorem and using the law of iterated expectations conditional on $\sigma$-algebra (Theorem 5.1.6 of \cite{durrett2010probability}). \hfill $\blacksquare$ \\

\subsection{Proof of Proposition \ref{prop:concave}} \label{proof:concave}
We need to show that for $c \in (0,1)$ and $\vpi, \widetilde{\vpi} \in \As$ the functional $H$ satisfies:
\[ H \left( c \vpi + (1-c) \widetilde{\vpi} \right) - c H \left(\vpi \right) -(1-c)H \left(\widetilde{\vpi} \right) > 0\,. \] \\
From the definition of $H$
\begin{align*}
H \left( c \vpi + (1-c) \widetilde{\vpi} \right) &- c H\left(\vpi \right) -(1-c)H\left(\widetilde{\vpi} \right)
\\
& = \EE \Bigg[ \int_0^T  \Bigg\{ - \tfrac{1}{2} \left(c \vpi_t + (1-c) \widetilde{\vpi}_t \right)^\intercal \mA_t \left(c \vpi_t + (1-c) \widetilde{\vpi}_t\right)
\\
& \hspace{2cm} + \frac{c}{2} \left(\vpi_t \right)^\intercal \mA_t \vpi_t + \frac{1-c}{2} \left(\widetilde{\vpi}_t \right)^\intercal \mA_t \widetilde{\vpi}_t
\\
& \hspace{2cm}  + \left(c \vpi_t + (1-c) \widetilde{\vpi}_t \right)^\intercal \vB_t - c \left(\vpi_t \right)^\intercal \vB_t - (1-c) \left(\widetilde{\vpi}_t \right)^\intercal \vB_t
\Bigg\}dt \Bigg]
\\
& = \EE \Bigg[ \int_0^T  \Bigg\{- \tfrac{1}{2} \left(c \vpi_t + (1-c) \widetilde{\vpi}_t \right)^\intercal \mA_t \left(c \vpi_t + (1-c) \widetilde{\vpi}_t\right)
\\
& \hspace{2cm} + \frac{c}{2} \left(\vpi_t \right)^\intercal \mA_t \vpi_t + \frac{1-c}{2} \left(\widetilde{\vpi}_t \right)^\intercal \mA_t \widetilde{\vpi}_t \Bigg\} dt \Bigg]
\\
&  > 0
\end{align*}
The last inequality follows as $\mA$ is negative definite and $f(\vx) = -\tfrac{1}{2} \vx^\intercal \mA \vx$ is concave in $\vx$.

To see that $H$ is proper note that the integrand in \eqref{eqn:perfCrit5} is finite, as quadratic terms involving $\mA$ and $\mOmega$ are assumed to be bounded and $\vpi$ and $\valphaHat$ are either bounded or in $\LL^2$. Hence $H$ nowhere takes the value $-\infty$ and is not identically equal to $\infty$. Furthermore, as $H$ is continuous in $\vpi$ it is also upper semi-continuous. \hfill $\blacksquare$

\subsection{Proof of Proposition \ref{prop:gateaux}} \label{proof:gateaux}
We follow the notation in \cite{ekeland1999convex}. The directional derivative of the functional $H$ is defined by:
\[ H' \left(\vpi; \widetilde{\vpi}\right) = \underset{\epsilon \downarrow 0}{\lim} \frac{H\left(\vpi + \epsilon \widetilde{\vpi}\right) - H\left(\vpi \right)}{\epsilon}\,. \]
The numerator is given by
\begin{align*}
& H\left(\vpi + \epsilon \widetilde{\vpi}\right) - H\left(\vpi\right) \\
& \hspace{1cm} = \EE \biggl[ \int_0^T \biggl\{  - \tfrac{1}{2} \left(\vpi_t + \epsilon \widetilde{\vpi}_t \right)^\intercal \mA_t \left(\vpi_t + \epsilon \widetilde{\vpi}_t \right) + \tfrac{1}{2} \left(\vpi_t\right)^\intercal \mA_t \vpi_t + \epsilon \left(\widetilde{\vpi}_t\right)^\intercal \vB_t \biggr\}~dt \biggr]
\\
& \hspace{1cm} = \EE \biggl[ \int_0^T \biggl\{ - \tfrac{1}{2} \left(\vpi_t\right)^\intercal \mA_t \vpi_t - \epsilon  \left(\vpi_t\right)^\intercal \mA_t \widetilde{\vpi}_t - \tfrac{\epsilon^2}{2}  \left(\widetilde{\vpi}_t\right)^\intercal \mA_t \widetilde{\vpi}_t
\\
& \hspace{3cm} + \tfrac{1}{2} \left(\vpi_t\right)^\intercal \mA_t \vpi_t + \epsilon  \left(\widetilde{\vpi}_t\right)^\intercal \vB_t \biggr\}
~dt \biggr]
\\
& \hspace{1cm} = \EE \biggl[ \int_0^T  \epsilon  \left[ \biggl\{ -\left(\vpi_t\right)^\intercal \mA_t \widetilde{\vpi}_t + \left(\widetilde{\vpi}_t\right)^\intercal \vB_t \right] - \frac{\epsilon^2}{2}  \left(\widetilde{\vpi}_t\right)^\intercal \mA_t \widetilde{\vpi}_t ~\biggr\}dt \biggr]
\end{align*}
Dividing by $\epsilon$ and taking the limit as $\epsilon \downarrow 0$ we have
\begin{align*}
H'\left(\vpi; \widetilde{\vpi}\right)  
&= \EE \biggl[ \int_0^T  \left(\widetilde{\vpi}_t\right)^\intercal \left[ -\mA_t \vpi_t + \vB_t \right] ~dt \biggr]\,,
\end{align*}
which exists for all $\vpi, \widetilde{\vpi} \in \As$. Furthermore, notice that the expression above is in fact a linear continuous functional $H^* \in \As^*$ given by
\[ \langle \widetilde{\vpi} , H^* \rangle =  H^*(\widetilde{\vpi}) = \EE \biggl[ \int_0^T  \widetilde{\vpi}_t^\intercal \left[ -\mA_t \vpi_t + \vB_t \right] ~dt \biggr] \]
and that $\langle \vpi , H^* \rangle = H'(\vpi; \widetilde{\vpi}) $ for all $\vpi \in \As$. Therefore, $H$ is G\^{a}teaux differentiable everywhere in $\As$ with G\^{a}teaux differential given by:
\[ \left\langle \widetilde{\vpi} , H'\left(\vpi\right) \right\rangle = \EE \biggl[ \int_0^T  \left(\widetilde{\vpi}_t\right)^\intercal \left[ -\mA_t \vpi_t + \vB_t \right] ~dt \biggr] \]
\hfill $\blacksquare$

\subsection{Proof of Theorem \ref{thm:optCont}} \label{proof:optCont}

First we show that, under either condition in Assumption \ref{asmp:growthMtg}, $\As^c$ constitutes a closed convex subset of $\As$. To demonstrate convexity, take two portfolios $\vpi^1,\vpi^2 \in \As^c$ and a constant $\lambda \in [0,1]$. Then the process $\vpi^\lambda = \lambda \vpi^1 + (1-\lambda) \vpi^2 $ is $\FF$-adapted since it is the sum of two $\FF$-adapted processes. Furthermore, $\vpi^\lambda \in \LL^2$ if $\vpi^1, \vpi^2 \in \LL^2$ and $\vpi^\lambda \in \LL^{\infty,M}$ if $\vpi^1, \vpi^2 \in \LL^{\infty,M}$. Finally, $(\vpi_t^\lambda)^\intercal \ones = \left(\lambda \vpi_t^1 + (1-\lambda) \vpi_t^2\right)^\intercal \ones = 1$ for all $t \geq 0$ since the elements of $\vpi^1$ and $\vpi^2$ sum to 1 for all $t \geq 0$. Therefore, $\vpi^\lambda \in \As^c$ and hence $\As^c$ is convex. 

Next, we demonstrate that $\As^2$ and $\As^\infty$ are closed subsets of $\As$ under the $\LL^2$-norm. To this end, take a sequence of functions $\{f^k\}_{k \in \NN} \in \As^c$ s.t. $\| f^k - f \|_{\LL^2} \rightarrow 0$ as $k \rightarrow \infty$. As each $f^k \in \As^c$ we also have that $f_k^\intercal \ones = 1$ for each $k$ and we can write
\begin{align*}
\big\| f^\intercal \ones - 1 \big\|^2_{\LL^2}&= \EE \left[\int_0^T \big\| (f_t - f_t^k)^\intercal \ones \big\|^2 ~dt \right]
\\
& \leq \EE \left[\int_0^T \big\| f_t - f_t^k \big\|^2  ~ \big\| \ones \big\|^2 ~dt \right] \qquad {\footnotesize \text{(by the Cauchy-Schwarz inequality)}} 
\\
&= n ~ \big\| f - f^k \big\|^2_{\LL^2} 
\\ 
& \underset{k \rightarrow \infty }{\longrightarrow} 0 \,.
\end{align*}
This implies that $f^\intercal \ones = 1$ for $t$-a.e. $\PP$-a.s. To complete the proof that the subsets are closed we need to show that $f \in \LL^2$ if $\{f^k\}_{k \in \NN} \in \As^2$ or $f \in \LL^\infty$ if $\{f^k\}_{k \in \NN} \in \As^\infty$. Here, we treat the two cases $\As^2$ and $\As^\infty$ separately. If the sequence of functions $\{f^k\}_{k \in \NN} \in \As^2$ then $f \in \LL^2$ due to the triangle inequality:
\[ \| f \|_{\LL^2} \leq \redunderbrace{\| f - f^k \|_{\LL^2}}{$\rightarrow 0$} + \redunderbrace{\| f^	k \|_{\LL^2}}{$< \infty, ~ \forall k$} \]  


If the sequence of functions $\{f^k\}_{k \in \NN} \in \As^\infty$ then we need to show that $f \in \LL^{\infty,M}$, which we will achieve by contradiction.  
%
%
%
%
%
Define the set $A = \{\omega \in \Omega: \|f_t(\omega)\|_\infty > M \} $ and assume that this set has positive measure. Next, we write the expectation in the $\LL^2$-norm in terms of $A$ and $A^c$
\begin{align*}
||f - f_k||^2_{\LL^2} &=\EE \int_0^T ||f_t(\omega) - f_t^k(\omega)||^2 \,dt 
\\
&= \int_{\omega \in \Omega} \int_0^T ||f_t(\omega) - f_t^k(\omega)||^2 \,dt \,d\PP(\omega)
\\
& \geq \int_{\omega \in  A} \int_0^T ||f_t(\omega) - f_t^k(\omega)||^2 ~dt ~d\PP(\omega) \,.
\end{align*}
Since the LHS tends to zero in $k$ and RHS is non-negative, we have 
\[\int_{\omega \in A} \int_0^T ||f_t(\omega) - f_t^k(\omega)||^2 \,dt \,d\PP(\omega) ~~ \underset{k \rightarrow \infty}{\longrightarrow} ~~ 0 \,. \]
Using the following inequality for $p$-norms and the $\infty$-norm on $\RR^n$
\[ \| f_t - f_t^k \|_\infty \leq \| f_t - f_t^k \|_p \leq n^{1/p} \| f_t - f_t^k \|_\infty \, , \]
it follows that for $p=2$
\[ \| f_t - f_t^k \|_\infty \leq \| f_t - f_t^k \|_\infty^2 \leq \| f_t - f_t^k \|^2 \, . \]
Now, by Minkowsi's inequality $\| f_t \|_\infty \leq \| f_t - f_t^k \|_\infty + \| f_t^k \|_\infty $ we have that 
\[ \| f_t \|_\infty - \| f_t^k \|_\infty \leq \| f_t - f_t^k \|^2  \,. \] 
Integrating over $[0,T]$ and $A$ we obtain
\[ \int_{\omega \in  A} \int_0^T \left(\| f_t(\omega) \|_\infty - \| f_t^k(\omega) \|_\infty\right) \,dt \,d\PP(\omega) ~\leq~ \int_{\omega \in  A} \int_0^T \| f_t(\omega) - f_t^k(\omega) \|^2 \,dt \,d\PP(\omega) \, \] 
and as $f_t^k \in \LL^{\infty,M}$ we have $\| f_t^k(\omega) \|_\infty \leq M$ for all $t$ and we can write
\[ \int_{\omega \in  A} \int_0^T  \left(\|f_t(\omega) \|_\infty - M\right) \,dt \,d\PP(\omega) ~\leq~ \int_{\omega \in  A} \int_0^T \| f_t(\omega) - f_t^k(\omega) \|^2 \,dt \,d\PP(\omega) \,. \] 
However, as the RHS tends to zero and the integrand is strictly positive on $A$ we arrive at a contradiction. Thus, $\PP(A) = 0$ which implies that $f \in \LL^{\infty,M}$.


 
To show the candidate optimal portfolio $\vpi^*$ given by \eqref{eqn:opt-pi-A-and-B} is in fact optimal, it suffices to show that it is an element of $\As^c$ and that the G\^{a}teaux derivative of $H$ vanishes at $\vpi^*$, i.e. that it satisfies $\langle \vpi - \vpi^* , H'(\vpi^*) \rangle = 0$ for all $\vpi \in \As^c$. 

To show that $\vpi^* \in \As_c$, firstly
\begin{align*}
\vpi^*_t &= \mA^{-1}_t  \left[ \frac{1 - \ones^\intercal \mA^{-1}_t \vB_t}{\ones^\intercal \mA^{-1}_t \ones} ~ \ones + \vB_t \right]
\\
&= \left(\frac{1 - \ones^\intercal \mA^{-1}_t \vB_t}{\ones^\intercal \mA^{-1}_t \ones}\right)  \mA^{-1}_t \ones + \mA^{-1}_t \vB_t
\\
&= \left(\frac{1 - \ones^\intercal \mA^{-1}_t  \vB_t}{\ones^\intercal \mA^{-1}_t \ones}\right)  \mA^{-1}_t \ones + \mA^{-1}_t \bigg[ \zeta^0 \valphaHat_t + \zeta^1_t \mOmega_t \veta_t \bigg]
\\
&= \left(\frac{1 - \ones^\intercal \mA^{-1}_t  \vB_t}{\ones^\intercal \mA^{-1}_t \ones}\right)  \mA^{-1}_t \ones + \zeta^1_t  \mA^{-1}_t \mOmega_t \veta_t + \zeta^0 \mA^{-1}_t \valphaHat_t
\\
&= \left(\frac{1 - \ones^\intercal \mA^{-1}_t \vB_t}{\ones^\intercal \mA^{-1}_t \ones}\right) \mA^{-1}_t \ones + \zeta^1_t \mA^{-1}_t \mOmega_t \veta_t
+ \tfrac{1}{2} \zeta^0  \text{diag}\left(\mSigma_t\right) + \zeta^0 \mA^{-1}_t \vgammaHat_t
\end{align*}
Recall that we have $\mA_t =  \zeta^0 \mSigma_t + \zeta^1_t \mOmega_t + \zeta^2_t \mQ_t$. 

Secondly, notice the individual terms of $\mA^{-1}$ and $\mSigma$ as well as quadratic forms involving $\mA^{-1}$ and $\mOmega$ are bounded (see proof of Proposition 2 in \cite{alaradi2018}). Therefore, the first three terms of the above sum are bounded. 

The remaining term, $\zeta^0 \mA^{-1}_t \vgammaHat_t$, is a linear combination of the projected growth rates $\gammaHat_i$. From the proof of Proposition \ref{prop:projection}, these (and hence $\vpi^*$) are bounded when the growth rates $\vgamma$ are bounded and belong to $\LL^2$ when $\vgamma \in \LL^2$. Finally, as it is easy to verify that the $(\vpi^*_t)^\intercal\ones = 1$ it follows that $\vpi^* \in \As^c$.

To show the optimality condition we note that:
\begin{align*}
\langle \vpi - \vpi^* , H'(\vpi^*) \rangle &= \EE \biggl[ \int_0^T  (\vpi_t - \vpi^*_t)^\intercal \left[ -\mA_t \vpi^*_t + \vB_t \right] ~dt \biggr]
\\
&= \EE \biggl[ \int_0^T  (\vpi_t - \vpi^*_t)^\intercal \left[ -\mA_t \mA^{-1}_t \left[ \frac{1 - \ones^\intercal \mA^{-1}_t \vB_t}{\ones^\intercal \mA^{-1}_t \ones} ~ \ones + \vB_t \right] + \vB_t \right] ~dt \biggr]
\\
&= \EE \biggl[ \int_0^T  (\vpi_t - \vpi^*_t)^\intercal \left[ -\frac{1 - \ones^\intercal \mA^{-1}_t \vB_t}{\ones^\intercal \mA^{-1}_t \ones} ~ \ones - \vB_t + \vB_t \right] ~dt \biggr]
\\
&= -\EE \biggl[ \int_0^T  \frac{1 - \ones^\intercal \mA^{-1}_t \vB_t}{\ones^\intercal \mA^{-1}_t \ones} ~ \redunderbrace{(\vpi_t - \vpi^*_t)^\intercal \ones}{$=0 \text{ for } \vpi \in \As^c$}  ~dt \biggr]
\end{align*}
As $\langle \vpi - \vpi^* , H'(\vpi^*) \rangle = 0$ for all $\vpi \in \As^c$, it follows that $\vpi^*$ attains the sup in $\underset{\vpi \in \As_c}{\sup} H(\vpi)$ by Proposition 2.1 in Chapter 2 of \cite{ekeland1999convex}. \hfill $\blacksquare$

\section{Proofs for Section \ref{sec:HMM} }

\subsection{Proof of Lemma \ref{lemma:unnormalizedPosterior}}  \label{proof:unnormalizedPosterior}

Since $\vlambda$ satisfies Novikov's condition, it follows that
\[ \exp\left[ - \int_0^T \vlambda_{u^-}^\intercal ~d\vW_u - \frac{1}{2} \int_0^T \vlambda_{u^-}^\intercal \vlambda_{u^-} du \right]\,, \]
is a valid Radon-Nikodym derivative from $\PP$ to some probability measure which we denote $\widetilde{\PP}$. The measure change from $\widetilde{\PP}$ to $\PP$ is given by the Radon-Nikodym derivative
\[ \frac{d\PP}{d \widetilde{\PP}} = \exp\left[ \int_0^T \vlambda_{u^-}^\intercal ~d\vW_u - \frac{1}{2} \int_0^T \vlambda_{u^-}^\intercal \vlambda_{u^-} du \right]\,. \]
Now define the process $\Upsilon = \{\Upsilon_t\}_{t \geq 0}$ to be the conditional expectation of the stochastic exponential given above, i.e. 
\begin{equation*}
\qquad \Upsilon_t = \EE\left[ \frac{d\PP}{d \widetilde{\PP}} ~\middle|~ \Ft_t \right] = \exp\left[ \int_0^t \vlambda_{u^-}^\intercal ~d\widetilde{\vW}_u - \frac{1}{2} \int_0^t \vlambda_{u^-}^\intercal \vlambda_{u^-} du \right]\,.
\end{equation*} 
The desired result follows by applying the properties of conditional expectations when changing measures:
\begin{align*}
p^j_t ~&=~ \EE\left[ \ind_{\{\Theta_t = j\}} ~\middle|~ \Ft_t  \right]
\\
~&=~ \frac{\EE^{\widetilde{\PP}} \left[ \ind_{\{\Theta_t = j\}} \Upsilon_t ~\middle|~ \Ft_t  \right]}{\EE^{\widetilde{\PP}} \left[ \Upsilon_t ~\middle|~ \Ft_t  \right]}
\\
~&=~ \frac{\EE^{\widetilde{\PP}} \left[ \ind_{\{\Theta_t = j\}} \Upsilon_t ~\middle|~ \Ft_t  \right]}{ \sum_{i=1}^m \EE^{\widetilde{\PP}} \left[ \ind_{\{\Theta_t = i\}} \Upsilon_t ~\middle|~ \Ft_t  \right]}
\\
~&=~ \frac{P^j_t}{ \sum_{i=1}^m P^i_t }
\end{align*}
where $P^j=(P_t^j)_{t \geq 0}$ is defined as $P_t^j =\EE^{\widetilde{\PP}} \left[ \ind_{\{\Theta_t = j\}} \Upsilon_t ~\middle|~ \Ft_t  \right]$.

\subsection{Proof of Theorem \ref{thm:filter}} \label{proof:filter1}
The process $\Upsilon = \{\Upsilon_t\}_{t \geq 0}$ satisfies the SDE
\begin{equation*}
d \Upsilon_t = \Upsilon_{t^-}\,\vlambda_{t^-}^\intercal ~d\widetilde{\vW}_t\,.
\end{equation*}
Furthermore, Girsanov's theorem implies the process $\vWtilde$ defined by
\[ 
\widetilde{\vW}_t = \int_0^t \vlambda_s ~ds + \vW_t\,,
\]
is a standard $\widetilde{\PP}$-Wiener process. Let $\vlambda_t = \vxi^\intercal \mSigma^{-1} \vgamma_t$ and note that it satisfies Novikov's condition as it is bounded. Substituting $\vlambda$ into the asset price dynamics in \eqref{eqn:logPrice} implies
\begin{align*}
d \log \vX_t &= \vgamma_{t^-} ~dt + \vxi \left( -\vlambda_{t^-} dt + d\widetilde{\vW}_t \right)
\\
&= \left(\vgamma_{t^-} - \vxi \vlambda_{t^-} \right) dt + \vxi ~d\widetilde{\vW}_t
\\
&= \vxi ~d\widetilde{\vW}_t\,.
\end{align*}
Note that $\vlambda$ is bounded and $\widetilde{\vW}_t$ is $\FF$-adapted as $\vxi$ is constant. 

Denote the indicator  $\ind^j_t \coloneqq \ind_{\{\Theta_t = j\}}$. It satisfies the SDE
\begin{align*}
\qquad d \ind^j_t = \sum_{i=1}^m \ind^i_{t^-} \mG_{ji} ~dt + d \Ms^j_t
\end{align*}
where $\Ms^j$ is a square-integrable $\FF$-adapted, $\widetilde{\PP}$-martingale (see \cite{rogers1994}). Applying the product rule for semimartingales, the process $\left( \ind^j_t  \Upsilon_t \right)_{t \geq 0}$ satisfies the SDE
\begin{align*}
d \left(\ind^j_t  \Upsilon_t \right) &= \ind^j_{t^-} d \Upsilon_t + \Upsilon_{t^-} d \ind^j_t + {\color{red} \underbrace{\color{black} d[\ind^j, \Upsilon ]_t}_{ { \color{red} =0 \text{ } }} }
\\
&= \sum_{i=1}^m \ind^i_{t^-} \Upsilon_{t^-} \mG_{ji} ~dt + \Upsilon_{t^-} d \Ms^j_t + \ind^j_{t^-} \Upsilon_{t^-} \vlambda_{t^-}^\intercal ~d\widetilde{\vW}_t
\end{align*}
The covariation term in line 1 vanishes, as $\Upsilon_t$ is continuous and $\ind^j_t$ is a pure jump processes. From the definition of $P^j_t$ above we have:
\begin{align*}
P^j_t &= \EE^{\widetilde{\PP}} \left[  \ind^j_t \Upsilon_t ~\middle|~ \Ft_t  \right]
\\
&= \EE^{\widetilde{\PP}} \biggl[  \ind^j_0 \Upsilon_0 + \int_0^t \sum_{i=1}^m \ind^i_{u^-} \Upsilon_{u^-} \mG_{ji} ~du
\\
& \qquad\qquad\quad + \int_0^t \ind^j_{u^-} \Upsilon_{u^-} \vlambda_{u^-}^\intercal ~d\widetilde{\vW}_u  + \int_0^t \Upsilon_{u^-} ~d \mathcal{M}^j_u ~\bigg|~ \Ft_t \biggr]
\\
&= { \EE^{\widetilde{\PP}} \biggl[  \ind^j_0 \Upsilon_0 + \int_0^t \sum_{i=1}^m  \ind^i_{u^-} \Upsilon_{u^-} \mG_{ji} ~du + \int_0^t \ind^j_{u^-} \Upsilon_{u^-}  \vgamma_{u^-}^\intercal \mSigma^{-1} \vxi ~d\widetilde{\vW}_u ~\bigg|~ \Ft_t \biggr] }
\end{align*}
where the last term vanishes as $\Upsilon$ is square-integrable and $\Ms$ is a $\widetilde{\PP}$-martingale.

Using $\ind^j_t \vgamma_t = \ind^j_t \vgamma^{(j)}_t$, we can write $P^j_t$ as
\begin{align*}
P^j_t ~&=~ \EE^{\widetilde{\PP}} \biggl[  \ind^j_0 \Upsilon_0 + \int_0^t \sum_{i=1}^m \ind^i_{u^-} \Upsilon_{u^-} \mG_{ji} ~du + \int_0^t \ind^j_{u^-} \Upsilon_{u^-} \left(\vlambda^{(j)}_u\right)^\intercal d\widetilde{\vW}_u ~\bigg|~ \Ft_t \biggr]
\end{align*}
where $\vlambda^{(j)}_t = \vxi^\intercal \mSigma^{-1} \vgamma^{(j)}_t$ is $\FF$-adapted. 

Next, as the integrands are square-integrable, expectation and integration can be interchanged, and we have
\begin{align*}
P^j_t ~&=~ \EE^{\widetilde{\PP}} \biggl[  \ind^j_0 \Upsilon_0 ~\bigg|~ \Ft_t \biggr] + \EE^{\widetilde{\PP}} \biggl[  \int_0^t \sum_{i=1}^m \ind^i_{u^-} \Upsilon_{u^-} \mG_{ji} ~du ~\bigg|~ \Ft_t \biggr]
\\
& \qquad\qquad\qquad\qquad + \EE^{\widetilde{\PP}} \biggl[  \int_0^t \ind^j_{u^-} \Upsilon_{u^-} \left(\vlambda^{(j)}\right)^\intercal ~d\widetilde{\vW}_u ~\bigg|~ \Ft_t \biggr]
\\
~&=~ \EE^{\widetilde{\PP}} \biggl[  \ind^j_0 \Upsilon_0 ~\bigg|~ \Ft_0 \biggr] +   \int_0^t \sum_{i=1}^m \EE^{\widetilde{\PP}} \biggl[ \ind^i_{u^-} \Upsilon_{u^-} ~\bigg|~ \Ft_u \biggr] \mG_{ji} ~du
\\
& \qquad\qquad\qquad\qquad +  \int_0^t \EE^{\widetilde{\PP}} \biggl[ \ind^j_{u^-} \Upsilon_{u^-} ~\bigg|~ \Ft_u \biggr] \left(\vlambda^{(j)}\right)^\intercal  d\widetilde{\vW}_u
\\
~&=~ P^j_0 +   \int_0^t \sum_{i=1}^m P^i_u \mG_{ji} ~du + \int_0^t P^j_u \left(\vlambda^{(j)}\right)^\intercal  d\widetilde{\vW}_u\,.
\end{align*}
Alternatively, in differential form we have
\begin{equation*}
dP^j_t ~=~  \sum_{i=1}^m P^i_{t^-} \mG_{ji} ~dt + P^j_{t^-}\,\left(\vlambda^{(j)}\right)^\intercal ~ d\widetilde{\vW}_t\,,
\end{equation*}
with initial conditions $P^j_0 = p^j_0$. Noting that
\begin{align*}
\left(\vlambda^{(j)}\right)^\intercal d\widetilde{\vW}_t &=  \left(\vgamma^{(j)}\right)^\intercal \mSigma^{-1} \vxi ~ d\widetilde{\vW}_t = \left(\vgamma^{(j)}\right)^\intercal \mSigma^{-1} ~ d \log \vX_t
\end{align*}
completes the proof. \hfill $\blacksquare$

\section{Derivation of Discretized Filter \eqref{eqn:discreteFilter} }
\label{proof:discreteFilter}
We begin by rewriting the SDE system for the filter \eqref{eqn:filterSDE}  in vector notation as follows
\begin{equation}
\mDim{d \vP_t}{m}{1} = \mDim{\mG \vphantom{\vP_t} }{m}{m} \mDim{\vP_t}{m}{1} ~dt + \mDim{\mB_t \vphantom{d \log \vX_t}}{m}{n} \mDim{\vphantom{d \log \vX_t} \mSigma^{-1}}{n}{n} ~\mDim{d \log \vX_t}{n}{1}\,,
\label{eqn:dvPusingB}
\end{equation}
where 
\[
\mDim{\mB_t}{m}{n} =
\left( \begin{array}{c}
\mDim{P^1_t \left( \vgamma^{(1)} \right)^\intercal}{1}{n} \\
\vdots \\
P^m_t \left( \vgamma^{(m)} \right)^\intercal
\end{array}
\right)
=
\left(\begin{array}{ccc}
P^1_t \gamma^{(1)}_1 & \cdots & P^1_t \gamma^{(1)}_n \\
\vdots & \ddots & \vdots \\
P^m_t  \gamma^{(m)}_1 & \cdots & P^m_t  \gamma^{(m)}_n
\end{array}\right)
\]
Now, factor $\vP$ as follows
\[
\vP_t = e^{\mG(t-u)} \left( \begin{array}{c}
P^1_u ~ Y^1_t \\
\vdots \\
P^m_u ~ Y^m_t
\end{array}
\right)\,,
\]
with  $Y^j_u = 1$, for $j=1,\dots,m$, $u\le t$ denotes the start of the discretization interval, and $\{(Y_t^j)_{t\in[u, u+\Delta u]}\}_{j=1,\dots,m}$ are to be determined.
Taking the differential and using \eqref{eqn:dvPusingB}  we find
\[ 
\left( \begin{array}{c}
P^1_u ~ dY^1_t \\
\vdots \\
P^m_u ~ dY^m_t
\end{array}
\right) = e^{-\mG(t-u)} \mB_t \mSigma^{-1} d \log \vX_t\,. 
\]
Substituting in the expression for $\mB_t$ and componentwise dividing out $P^j_u$ we have
\[ \left( \begin{array}{c}
dY^1_t \\
\vdots \\
dY^m_t
\end{array}
\right) = e^{-\mG(t-u)}
\left( \begin{array}{c}
\tfrac{P^1_t}{P^1_u} ~ \left(\vgamma^{(1)}\right)^\intercal \mSigma^{-1} ~ d \log \vX_t \\
\vdots \\
\tfrac{P^m_t}{P^m_u} ~ \left(\vgamma^{(m)}\right)^\intercal \mSigma^{-1} ~ d \log \vX_t
\end{array}
\right)
\]

Taking a left limit approximation with $e^{-\mG(t-u)} \approx \mI$ we have that $Y^j_t = \tfrac{P^j_t}{P^j_u}$, and the SDEs for $Y_t^j$ decouple to give
\[ dY^j_t = Y^j_t  \left(\vgamma^{(j)}\right)^\intercal \mSigma^{-1} ~ d \log \vX_t \qquad \mbox{ for } j = 1,..., m \]
Integrating from $u$ to $u+\Delta u$ we arrive at the discretization scheme in \eqref{eqn:discreteFilter}.

\section{EM Algorithm for HMM Market Model \eqref{eqn:hmm}} \label{sec:EM}

Starting from the HMM market model \eqref{eqn:hmm} we have
\begin{align*}
& d \log \vX_t = \vgamma^{(\Theta_t)} ~dt + \vxi ~d\vW_t
\\
\implies \qquad & \log \vX(t + \Delta t) - \log \vX_t \sim N \left( \int_{t}^{t + \Delta t}  \vgamma^{(\Theta_s)} ~ds, ~ \Delta t ~ \mSigma  \right)
\end{align*}
Assuming Markov switching only occurs at the discrete time points $\mathfrak{T} = \{\Delta t, 2 \Delta t, ..., T = N \Delta t\}$, the integral in the mean of the normal distribution above can be simplified:
\[ 
\log \vX(t + \Delta t) - \log \vX_t \sim N \left( \Delta t ~ \vgamma^{(\Theta_t)}, ~ \Delta t ~ \mSigma  \right) \qquad \mbox{ for } t = 0, \Delta t, ..., T - \Delta t \,.
\]
The parameter set here is:
\begin{itemize}
	\item The collection of mean vectors $\vgamma^{(j)}$ for $j \in \mfM$;
	\item the shared covariance matrix $\mSigma$;
	\item the probability transition matrix $(\mZ_{ij})_{i,j \in \mfM}$; 
	\item the probability distribution of the initial state $\{\mathfrak{p}_j\}_{j \in \mfM}$.
\end{itemize}
	  We now summarize the EM algorithm as applied to this parameter estimation problem - further details can be found in \cite{bishop2006pattern}. The E-step involves finding the posterior distribution of the latent variables given a parameter set. More specifically, this requires us to compute the conditional probability of the Markov chain being in state $k$ at time step $n$, denoted $a(\Theta_{nk})$, and the posterior probability of transitioning from state $j$ at time $n-1$ to state $k$ at time $n$, denoted $b(\Theta_{n-1,j}, \Theta_{nk})$. These quantities are estimated via the forward-backward algorithm described in Figure \ref{fig:EMalgo}.

\begin{figure}[h!]
	{\small
		\begin{tabular}{r}
			\hline\hline
			\hspace{0.85\textwidth}
		\end{tabular}
		\vspace{-1em}
		\begin{enumerate}
			\item \texttt{Compute the Gaussian emission density $\phi^k_n = \phi(\vx_n | \vgamma^{(k)}, \mSigma)$ \\ in each state $k \in \mfM$ and at each time step $n \in \mathfrak{T}$.}
			
			\item \texttt{Compute the alpha function (forward algorithm):}
			\begin{enumerate}
				\item \texttt{Initialize: ${\alpha}^k_{1} = \mathfrak{p}_k \phi^k_{1}$ for $k \in \mfM$}
				\item \texttt{Set $n = 2$}.
				\item \texttt{Update at each time step: $\widetilde{\alpha}^k_{n} = \phi^k_{n} \sum_{j = 1}^m \mZ_{kj}^\intercal \alpha^j_{n-1}$} 
				\item \texttt{Compute the normalization constant: $C_n = \sum_{k = 1}^m \widetilde{\alpha}^k_{n}$}
				\item \texttt{Normalize alpha value: $\alpha^k_n = \frac{\widetilde{\alpha}^k_{n}}{C_n}$} 
				\item \texttt{Set $n\to n+1$, if $n \leq N$ go to step (c).}
			\end{enumerate}
						
			\item \texttt{Compute the beta function (backward algorithm)}:
			\begin{enumerate}
				\item \texttt{Initialize: ${\beta}^k_{N} = 1$ for $k \in \mfM$}
				\item \texttt{Set $n = N-1$}.
				\item \texttt{Update at each time step: $\beta^k_{n} = \frac{1}{C_{n+1}} \sum_{j =1}^m \mZ_{kj} \beta^j_{n+1} \phi^k_{n+1}$} 
				\item \texttt{Set $n\to n+1$, if $n \geq 0$ go to step (c).}
				
			\end{enumerate}
			\item \texttt{Compute $a$ and $b$ for all $n \in \mathfrak{T}$ and $j, k \in \mfM$:}
			\[ a(\Theta_{nk}) = \frac{\alpha_n^k \beta_n^k}{\sum_{j = 1}^m {\alpha_n^j \beta_n^j}}  \qquad \qquad b(\Theta_{n-1,j}, \Theta_{nk}) = \frac{\alpha^j_{n-1} \mZ_{jk} \phi_n^k \beta_n^k}{\sum_{i,l = 1}^{m} \alpha^i_{n-1} \mZ_{il} \phi_n^l \beta_n^l} \]
		\end{enumerate}
		\begin{tabular}{r}
			\hline\hline
			\hspace{0.85\textwidth}
		\end{tabular}
	}
	\vspace{-1em}
	\caption{Forward-backward algorithm for E-step of EM algorithm. \label{fig:EMalgo}}
\end{figure}

The M-step is then applied to update the parameter estimates. Applying the first order conditions to each $\vgamma^{(j)}$ and $\mSigma$ we obtain the update rules
\begin{align*}
\vgamma^{(j)} &= \frac{\sum_{n=1}^{N} a(\Theta_{nj}) \vx_n}{\sum_{n=1}^{N} a(\Theta_{nj})}\,, \qquad \text{for } j \in \mfM
\\
\mSigma &= \frac{1}{N} \sum_{n=1}^{N} \sum_{k=1}^{m} a(\Theta_{nk}) \left( \vx_n - \vgamma^{(k)} \right) \left( \vx_n - \vgamma^{(k)} \right)^\intercal\,,
\\
\mZ_{jk} &= \frac{\sum_{n=1}^{N} b(\Theta_{n-1,j}, \Theta_{nk})}{\sum_{n=1}^{N} \sum_{k=1}^{m} b(\Theta_{n-1,j}, \Theta_{nk})} \, \qquad \text{for } j,k \in \mfM
\\
\mathfrak{p}_k &= \frac{a(\Theta_{1k})}{\sum_{k=1}^{m} a(\Theta_{1k})} \, \qquad \text{for } k \in \mfM
\end{align*}
This is a simple extension of the usual Gaussian case to the shared covariance case.

\clearpage 
\end{appendices}

\bibliographystyle{chicago}
\bibliography{OutperformanceAndTracking_partialInfo}

\begin{thebibliography}{}

\bibitem[\protect\citeauthoryear{Al-Aradi and Jaimungal}{Al-Aradi and
  Jaimungal}{2018}]{alaradi2018}
Al-Aradi, A. and S.~Jaimungal (2018).
\newblock Outperformance and tracking: Dynamic asset allocation for active and
  passive portfolio management.
\newblock {\em Applied Mathematical Finance, Forthcoming\/}.

\bibitem[\protect\citeauthoryear{Ang, Papanikolaou, and Westerfield}{Ang
  et~al.}{2014}]{Ang2014}
Ang, A., D.~Papanikolaou, and M.~M. Westerfield (2014).
\newblock Portfolio choice with illiquid assets.
\newblock {\em Management Science\/}~{\em 60\/}(11), 2737--2761.

\bibitem[\protect\citeauthoryear{Bank, Soner, and Vo{\ss}}{Bank
  et~al.}{2017}]{bank2017hedging}
Bank, P., H.~M. Soner, and M.~Vo{\ss} (2017).
\newblock Hedging with temporary price impact.
\newblock {\em Mathematics and Financial Economics\/}~{\em 11\/}(2), 215--239.

\bibitem[\protect\citeauthoryear{B{\"a}uerle and Rieder}{B{\"a}uerle and
  Rieder}{2007}]{bauerle2007}
B{\"a}uerle, N. and U.~Rieder (2007).
\newblock Portfolio optimization with jumps and unobservable intensity process.
\newblock {\em Mathematical Finance\/}~{\em 17\/}(2), 205--224.

\bibitem[\protect\citeauthoryear{Baum, Petrie, Soules, and Weiss}{Baum
  et~al.}{1970}]{baum1970maximization}
Baum, L.~E., T.~Petrie, G.~Soules, and N.~Weiss (1970).
\newblock A maximization technique occurring in the statistical analysis of
  probabilistic functions of markov chains.
\newblock {\em The annals of mathematical statistics\/}~{\em 41\/}(1),
  164--171.

\bibitem[\protect\citeauthoryear{Biernacki, Celeux, and Govaert}{Biernacki
  et~al.}{2000}]{Biernacki2000}
Biernacki, C., G.~Celeux, and G.~Govaert (2000, July).
\newblock Assessing a mixture model for clustering with the integrated
  completed likelihood.
\newblock {\em IEEE Trans. Pattern Anal. Mach. Intell.\/}~{\em 22\/}(7),
  719--725.

\bibitem[\protect\citeauthoryear{Bishop}{Bishop}{2006}]{bishop2006pattern}
Bishop, C.~M. (2006).
\newblock {\em Pattern Recognition and Machine Learning (Information Science
  and Statistics)}.
\newblock Berlin, Heidelberg: Springer-Verlag.

\bibitem[\protect\citeauthoryear{Blanchet-Scalliet, Karoui, Jeanblanc, and
  Materllini}{Blanchet-Scalliet et~al.}{2008}]{BlanchetScaillet2008}
Blanchet-Scalliet, C., N.~E. Karoui, M.~Jeanblanc, and L.~Materllini (2008).
\newblock Optimal investment decisions when time-horizon is uncertain.
\newblock {\em Journal of Mathematical Economics\/}~{\em 44\/}(11), 1100--1113.

\bibitem[\protect\citeauthoryear{Browne}{Browne}{1997}]{Browne97}
Browne, S. (1997).
\newblock Survival and growth with a liability: Optimal portfolio strategies in
  continuous time.
\newblock {\em Mathematics of Operations Research\/}~{\em 22\/}(2), 468--493.

\bibitem[\protect\citeauthoryear{Browne}{Browne}{1999a}]{Browne1999a}
Browne, S. (1999a).
\newblock Beating a moving target: Optimal portfolio strategies for
  outperforming a stochastic benchmark.
\newblock {\em Finance and Stochastics\/}~{\em 3\/}(3), 275--294.

\bibitem[\protect\citeauthoryear{Browne}{Browne}{1999b}]{Browne1999b}
Browne, S. (1999b).
\newblock Reaching goals by a deadline: Digital options and continuous-time
  active portfolio management.
\newblock {\em Advances in Applied Probability\/}~{\em 31\/}(2), 551--577.

\bibitem[\protect\citeauthoryear{Browne}{Browne}{2000}]{Browne2000}
Browne, S. (2000).
\newblock Risk-constrained dynamic active portfolio management.
\newblock {\em Management Science\/}~{\em 46\/}(9), 1188--1199.

\bibitem[\protect\citeauthoryear{Buckley and Korn}{Buckley and
  Korn}{1998}]{buckley1998}
Buckley, I. and R.~Korn (1998).
\newblock Optimal index tracking under transaction costs and impulse control.
\newblock {\em International journal of theoretical and applied Finance\/}~{\em
  1\/}(03), 315--330.

\bibitem[\protect\citeauthoryear{Casgrain and Jaimungal}{Casgrain and
  Jaimungal}{2018a}]{casgrain2018}
Casgrain, P. and S.~Jaimungal (2018a).
\newblock Mean field games with partial information for algorithmic trading.
\newblock {\em arXiv preprint arXiv:1803.04094\/}.

\bibitem[\protect\citeauthoryear{Casgrain and Jaimungal}{Casgrain and
  Jaimungal}{2018b}]{casgrain2016}
Casgrain, P. and S.~Jaimungal (2018b).
\newblock Trading algorithms with learning in latent alpha models.
\newblock {\em Mathematical Finance, Forthcoming\/}.

\bibitem[\protect\citeauthoryear{Celeux and Durand}{Celeux and
  Durand}{2008}]{celeux2008selecting}
Celeux, G. and J.-B. Durand (2008).
\newblock Selecting hidden markov model state number with cross-validated
  likelihood.
\newblock {\em Computational Statistics\/}~{\em 23\/}(4), 541--564.

\bibitem[\protect\citeauthoryear{Cvitani{\'c} and Karatzas}{Cvitani{\'c} and
  Karatzas}{1992}]{cvitanic1992convex}
Cvitani{\'c}, J. and I.~Karatzas (1992).
\newblock Convex duality in constrained portfolio optimization.
\newblock {\em The Annals of Applied Probability\/}, 767--818.

\bibitem[\protect\citeauthoryear{Davis and Norman}{Davis and
  Norman}{1990}]{Davis90}
Davis, M. and A.~Norman (1990).
\newblock Portfolio selection with transaction costs.
\newblock {\em Mathematics of Operations Research\/}~{\em 15\/}(4), 676--713.

\bibitem[\protect\citeauthoryear{Durrett}{Durrett}{2010}]{durrett2010probability}
Durrett, R. (2010).
\newblock {\em Probability: theory and examples}.
\newblock Cambridge university press.

\bibitem[\protect\citeauthoryear{Ekeland and T\'{e}mam}{Ekeland and
  T\'{e}mam}{1999}]{ekeland1999convex}
Ekeland, I. and R.~T\'{e}mam (1999).
\newblock {\em Convex analysis and variational problems}, Volume~28.
\newblock Siam.

\bibitem[\protect\citeauthoryear{Fernholz}{Fernholz}{2002}]{fernholz2002}
Fernholz, R. (2002).
\newblock {\em Stochastic portfolio theory}.
\newblock Springer.

\bibitem[\protect\citeauthoryear{Fernholz, Ichiba, and Karatzas}{Fernholz
  et~al.}{2013}]{fernholz2013}
Fernholz, R., T.~Ichiba, and I.~Karatzas (2013).
\newblock A second-order stock market model.
\newblock {\em Annals of Finance\/}~{\em 9\/}(3), 439--454.

\bibitem[\protect\citeauthoryear{Frey, Gabih, and Wunderlich}{Frey
  et~al.}{2012}]{frey2012}
Frey, R., A.~Gabih, and R.~Wunderlich (2012).
\newblock Portfolio optimization under partial information with expert
  opinions.
\newblock {\em International Journal of Theoretical and Applied Finance\/}~{\em
  15\/}(01), 1250009.

\bibitem[\protect\citeauthoryear{Gassiat}{Gassiat}{2002}]{gassiat2002likelihood}
Gassiat, E. (2002).
\newblock Likelihood ratio inequalities with applications to various mixtures.
\newblock In {\em Annales de l'Institut Henri Poincare (B) Probability and
  Statistics}, Volume~38, pp.\  897--906. Elsevier.

\bibitem[\protect\citeauthoryear{Gassiat and Keribin}{Gassiat and
  Keribin}{2000}]{gassiat2000likelihood}
Gassiat, E. and C.~Keribin (2000).
\newblock The likelihood ratio test for the number of components in a mixture
  with markov regime.
\newblock {\em ESAIM: Probability and Statistics\/}~{\em 4}, 25--52.

\bibitem[\protect\citeauthoryear{Hahn, Fr{\"u}hwirth-Schnatter, and Sass}{Hahn
  et~al.}{2010}]{hahn2010markov}
Hahn, M., S.~Fr{\"u}hwirth-Schnatter, and J.~Sass (2010).
\newblock Markov chain monte carlo methods for parameter estimation in
  multidimensional continuous time markov switching models.
\newblock {\em Journal of Financial Econometrics\/}~{\em 8\/}(1), 88--121.

\bibitem[\protect\citeauthoryear{Karatzas and Fernholz}{Karatzas and
  Fernholz}{2009}]{karatzas2009}
Karatzas, I. and R.~Fernholz (2009).
\newblock Stochastic portfolio theory: an overview.
\newblock In {\em Handbook of numerical analysis}, Volume~15, pp.\  89--167.
  Elsevier.

\bibitem[\protect\citeauthoryear{Karatzas and Shreve}{Karatzas and
  Shreve}{1998}]{karatzas2012}
Karatzas, I. and S.~Shreve (1998).
\newblock {\em Brownian motion and stochastic calculus\/} (2 ed.).

\bibitem[\protect\citeauthoryear{Krishnamurthy, Leoff, and Sass}{Krishnamurthy
  et~al.}{2016}]{krishnamurthy2016filterbased}
Krishnamurthy, V., E.~Leoff, and J.~Sass (2016).
\newblock Filterbased stochastic volatility in continuous-time hidden markov
  models.
\newblock {\em Econometrics and statistics\/}.

\bibitem[\protect\citeauthoryear{Lakner}{Lakner}{1998}]{Lakner98}
Lakner, P. (1998).
\newblock Optimal trading strategy for an investor: the case of partial
  information.
\newblock {\em Stochastic Processes and their Applications\/}~{\em 76\/}(1),
  77--97.

\bibitem[\protect\citeauthoryear{Liu and Muhle-Karbe}{Liu and
  Muhle-Karbe}{2013}]{Liu2013}
Liu, R. and J.~Muhle-Karbe (2013).
\newblock Portfolio choice with stochastic investment opportunities: a user's
  guide.
\newblock {\em arXiv preprint arXiv:1311.1715\/}.

\bibitem[\protect\citeauthoryear{Magill and Constantinides}{Magill and
  Constantinides}{1976}]{Magill76}
Magill, M. and G.~Constantinides (1976).
\newblock Portfolio selection with transaction costs.
\newblock {\em J. Economic Theory\/}~{\em 13}, 245--263.

\bibitem[\protect\citeauthoryear{Merton}{Merton}{1969}]{Merton69}
Merton, R. (1969).
\newblock Lifetime portfolio selection under uncertainty: The continuous-time
  case.
\newblock {\em Review of Economics and Statistics\/}~{\em LI}, 247--257.

\bibitem[\protect\citeauthoryear{Merton}{Merton}{1971}]{Merton71}
Merton, R. (1971).
\newblock Optimal consumption and portfolio rules in a continuous-time model.
\newblock {\em J. Economic Theory\/}~{\em 3}, 373--413.

\bibitem[\protect\citeauthoryear{Oderda}{Oderda}{2015}]{oderda2015}
Oderda, G. (2015).
\newblock Stochastic portfolio theory optimization and the origin of rule-based
  investing.
\newblock {\em Quantitative Finance\/}~{\em 15\/}(8), 1259--1266.

\bibitem[\protect\citeauthoryear{Pal and Wong}{Pal and
  Wong}{2013}]{pal2013energy}
Pal, S. and T.-K.~L. Wong (2013).
\newblock Energy, entropy, and arbitrage.
\newblock {\em arXiv preprint arXiv:1308.5376\/}.

\bibitem[\protect\citeauthoryear{Pal and Wong}{Pal and
  Wong}{2016}]{pal2016geometry}
Pal, S. and T.-K.~L. Wong (2016).
\newblock The geometry of relative arbitrage.
\newblock {\em Mathematics and Financial Economics\/}~{\em 10\/}(3), 263--293.

\bibitem[\protect\citeauthoryear{Pham}{Pham}{2003}]{pham2003large}
Pham, H. (2003).
\newblock A large deviations approach to optimal long term investment.
\newblock {\em Finance and Stochastics\/}~{\em 7\/}(2), 169--195.

\bibitem[\protect\citeauthoryear{Pliska and Suzuki}{Pliska and
  Suzuki}{2004}]{pliska2004}
Pliska, S.~R. and K.~Suzuki (2004).
\newblock Optimal tracking for asset allocation with fixed and proportional
  transaction costs.
\newblock {\em Quantitative Finance\/}~{\em 4\/}(2), 233--243.

\bibitem[\protect\citeauthoryear{Rieder and B{\"a}uerle}{Rieder and
  B{\"a}uerle}{2005}]{rieder2005}
Rieder, U. and N.~B{\"a}uerle (2005).
\newblock Portfolio optimization with unobservable markov-modulated drift
  process.
\newblock {\em Journal of Applied Probability\/}~{\em 42\/}(02), 362--378.

\bibitem[\protect\citeauthoryear{Rogers and Williams}{Rogers and
  Williams}{1994}]{rogers1994}
Rogers, L. C.~G. and D.~Williams (1994).
\newblock {\em Diffusions, Markov processes and martingales: Volume 2, It{\^o}
  calculus}, Volume~2.
\newblock Cambridge university press.

\bibitem[\protect\citeauthoryear{Samo and Vervuurt}{Samo and
  Vervuurt}{2016}]{vervuurt2016}
Samo, Y.-L.~K. and A.~Vervuurt (2016).
\newblock Stochastic portfolio theory: A machine learning perspective.
\newblock {\em arXiv preprint arXiv:1605.02654\/}.

\bibitem[\protect\citeauthoryear{Stein and Shakarchi}{Stein and
  Shakarchi}{2011}]{stein2011functional}
Stein, E.~M. and R.~Shakarchi (2011).
\newblock {\em Functional analysis: introduction to further topics in
  analysis}, Volume~4.
\newblock Princeton University Press.

\bibitem[\protect\citeauthoryear{Wong}{Wong}{2015}]{wong2015optimization}
Wong, T.-K.~L. (2015).
\newblock Optimization of relative arbitrage.
\newblock {\em Annals of Finance\/}~{\em 11\/}(3-4), 345--382.

\end{thebibliography}

\end{document}